\pdfoutput=1
\documentclass{article}
\usepackage[margin=1in]{geometry}
\usepackage{mathtools}
\usepackage{amssymb, amsthm, braket}
\usepackage{graphicx}
\usepackage{parskip}
\usepackage{authblk}
\RequirePackage[title,titletoc]{appendix}


\usepackage{hyperref}
\setlength{\topsep}{5pt} 


\newtheorem{thm}{Theorem}[section]
\newtheorem{prop}[thm]{Proposition}
\newtheorem{lem}[thm]{Lemma}
\newtheorem{cor}[thm]{Corollary}

\theoremstyle{definition}
\newtheorem{dfn}[thm]{Definition}
\newtheorem{eg}[thm]{Example}

\theoremstyle{remark}
\newtheorem{rmk}[thm]{Remark}

\theoremstyle{definition}





\newcommand{\N}{\mathbb N}

\newcommand{\comm}[1]{}

\setlength{\parskip}{0pt}

\title{Flexible Catalysis}
\author{M\'at\'e Weisz\thanks{Corresponding author. Email: mate.weisz@linacre.ox.ac.uk.} }
\author{Sergii Strelchuk}

\affil{\small Department of Computer Science, University of Oxford, Parks Rd, Oxford OX1 3QG, United Kingdom}

\date{}

\begin{document}

\maketitle
\begin{abstract}
    In quantum information and computation, a central challenge is to determine which quantum states can be transformed into which others under restricted sets of free operations. While many transformations are impossible directly, catalytic processes can enable otherwise forbidden conversions: an auxiliary quantum state (the catalyst) facilitates the transformation while remaining unchanged. In this work, we introduce flexible catalysis, a generalization in which the catalyst is allowed to transform into a different auxiliary state, provided it remains a valid catalyst. We show that this framework subsumes both standard catalytic and multicopy transformations, and we analyse its advantages across several classes of free operations. In particular, we prove that when the free operations are local unitaries or permutation matrices, flexible catalysis enables state extractions that are unattainable with standard catalysis alone.
\end{abstract}

\section{Introduction}

Catalysis is a process where an auxiliary resource, known as a catalyst, enables a computational or information theoretic transformation that would otherwise be forbidden. This catalytic resource participates in the process but is returned to its original state upon completion, allowing it to be reused.

\vspace{0.2cm}

Jonathan and Plenio \cite{jonathan1999entanglement} were first to demonstrate that the presence of an ancillary entangled state could facilitate transformations between pure bipartite quantum states under Local Operations and Classical Communication (LOCC). They showed that for certain states $|\psi\rangle$ and $|\phi\rangle$, the direct transformation $|\psi\rangle \to |\phi\rangle$ 
is impossible via any LOCC protocol. However, by introducing a particular entangled state $|\eta\rangle$, the joint transformation $|\psi\rangle\otimes|\eta\rangle \to |\phi\rangle\otimes|\eta\rangle$ becomes feasible, with the catalyst $|\eta\rangle$ returned in its exact original form. This remarkable finding about the power of entanglement was aptly summarised in \cite{jonathan1999entanglement} as ``stranger kind of resource, one that can be used without being consumed at all".

\vspace{0.2cm}

Since the initial discovery, catalysis found numerous uses in  quantum information processing: from quantum resource theories, where it has emerged as a key mechanism for overcoming the restricted nature of so-called `free operations' \cite{lipka2025finite,chitambar2019quantum} to quantum thermodynamics, where catalysts can enable state conversions that would otherwise appear to violate thermodynamic laws for small-scale systems \cite{ng2015limits,brandao2015second}. While in general one has to check an infinite number of inequalities to determine the possibility of catalysis, recently, progress has been made in finding sufficient conditions which reduce the number of conditions to finitely many \cite{elkouss2025finite}. In the theory of coherence, for example, Aberg \cite{aaberg2014catalytic} showed that a highly coherent ancilla (such as a laser-like state) can serve as a catalyst for otherwise forbidden operations constrained by energy conservation. The coherence of this catalyst remains essentially undiminished, allowing ``latent" energy in superposition states to be released without depleting the resource. In quantum thermodynamics, catalysts (e.g. non-equilibrium auxiliary states) enable state transformations beyond the ordinary second-law constraints: any out-of-equilibrium state can in principle act as a universal catalyst for allowed thermodynamic transitions when sufficiently many copies are provided \cite{lipka2021all}. Likewise, in resource theories of asymmetry, catalysts can help bypass symmetry restrictions \cite{ding2021amplifying}. It is worth mentioning that catalysis is not all-powerful: there also exist fundamental limits. The authors of \cite{ding2021amplifying} prove a strict ``no-catalysis" theorem which forbids certain symmetry-protected transformations with any finite catalyst under perfectly symmetric operations. However, if one allows large or correlated catalysts, even an arbitrarily small asymmetry present in a state can be catalytically amplified to approach a maximally asymmetric state.

\vspace{0.2cm}

Under LOCC, multiple variations and special cases of catalysis were studied. Self-catalysis \cite{selfcatalysis} is the special case when the initial state $\ket{\psi}$ can be used a catalyst for a conversion $\ket
\psi\rightarrow\ket{\phi}$. Mutual catalysis \cite{mutualcatalysis} is the variant in which two transformations $\ket{\psi_1}\rightarrow \ket{\phi_1}$ and $\ket{\psi_2}\rightarrow \ket{\phi_2}$ that cannot be realised directly mutually catalyse each other, i.e. $\ket{\psi_1}\otimes \ket{\psi_2}\rightarrow \ket{\phi_1}\otimes \ket{\phi_2}$. Supercatalysis \cite{supercatalysis} is when the entanglement entropy of the auxiliary state increases while enabling a forbidden state transition. The notion of catalysis -- originally introduced for pure states -- has also been extended to mixed states \cite{MixedStateEntanglementCatalysis}. In the mixed state case, one may allow the catalyst to build up correlations with the main system \cite{CatQuantumTeleportation}.

\vspace{0.2cm}

In addition to these works, many other quantum-information-theoretic applications of catalysis were investigated. For a thorough overview of this line of research, we refer to \cite{lipka2024catalysis}.
 
\vspace{0.2cm}

Recent remarkable results highlight the utility of catalysis in quantum computing: in the context of fault-tolerant catalytic magic distillation, \cite{fang2024surpassing} introduced a catalytic approach for purifying noisy quantum states into magic states. Catalysis also appears in space-bounded computation, where the catalyst is an auxiliary memory register that must be restored exactly at the end of the computation, as in classical catalytic logspace \cite{buhrman2014computing} and its quantum analogue, quantum catalytic logspace (QCL) \cite{buhrman2025quantum}, a setting related to restricted models such as DQC1 (the so-called one-clean-qubit  model) \cite{knill1998power}. From a theoretical perspective, QCL exemplifies how the concept of catalysis can be repurposed to characterise reusable space in quantum complexity theory. It connects naturally with restricted models like DQC1 \cite{knill1998power} and highlights the operational significance of reversibility and restorability in quantum computation. Moreover, QCL raises foundational questions about the feasibility and limitations of exact catalyst recovery in practice. Allowing approximate or flexible recovery -- paralleling recent advances in flexible catalysis may yield even more expressive or realistic space-bounded models.

\vspace{0.2cm}

Before introducing the main idea of this paper, let us stop here to understand what makes catalysis really interesting and useful. Getting our catalyst back at the end of the process is impressive because it means that we have not consumed any more resource than during a non-catalytic transformation. In particular, if we want to repeat the transformation on a new object, we can use the same catalyst as before, without introducing any additional resources to the system. And after the second transformation, we can use it for a third time as well, and so on. In fact, most of the time this is all we care about: that adding a single catalyst can assist us with an arbitrary number of transformations. In other words, it is not the catalyst that we want to preserve for its own sake. It is the ability to perform the transformation again and again. So, even if the additional object we get back changes after every step, we should be happy as long as it can also act as a catalyst.

\vspace{0.2cm}

This suggests that our definition of catalysis can be extended. We would like to say something like ``a catalyst is an object that facilitates the transformation in a way that returns some other catalyst in the end". While this definition is, of course, circular, it is not difficult to come up with a sensible one that captures the idea. Thus, we arrive at the notion of flexible catalysis: having a set of objects (the set of catalysts), any element of which can be added to the initial object of the desired transformation, and together they can be transformed into the desired final object and another object from the set of catalysts.

\vspace{0.2cm}

Now, catalysis in the traditional sense is just a special case of our more general notion of flexible catalysis. It is natural to ask how much we have gained by relaxing the condition that we must get the original catalyst back exactly, or indeed, if we have gained anything at all. This paper aims to answer this question in some contexts within quantum information theory. In particular, we will show examples where flexibility does provide an advantage over traditional catalysis, and also other examples where it does not.

\vspace{0.2cm}

Since the basic idea of (flexible) catalysis is very simple and makes sense in any resource theory, we will be able to use some very general techniques to get an understanding of the basics of (flexible) catalytic transformations. Therefore, most of our initial statements will be about different mathematical resource theories, and then we will use these to obtain the corresponding results in quantum information.

\vspace{0.2cm}

A resource theory in quantum information (and in general) is only interesting if we do not have the ability to perform all imaginable transformations freely. So we will be considering restricted classes of quantum operations, and assume that we only have access to these. In particular, we will study catalysis under Local Operations and Classical Communication (LOCC), Local Unitaries (LU), and Permutation Matrices (PM). Informally, we can think of LOCC as the class of  operations that do not increase entanglement, LU as the class of entanglement-preserving operations (or as the reversible operations of LOCC), and PM as a copy of reversible classical operations embedded in the set of unitary quantum gates.

\vspace{0.2cm}

The rest of this paper is structured as follows. In Section \ref{Informal statements of main results}, we present the informal statements of our main results. In Section \ref{Transformation theories}, we introduce the mathematical language we need to talk about catalysis in resource theories in general, and we characterise the quantum information-theoretic resource theories that we are interested in. We also give the definitions of the different versions of catalysis that we will use in the rest of the paper, and we prove some elementary results about them that hold in any resource theory. In Section \ref{Flexible catalysis of multisets}, we focus on a class of abstract mathematical resource theories, motivated by our results in Section \ref{Transformation theories}, and we collect results addressing the power of flexible catalysis. This is where most of the technical work is done. In Section \ref{Flexible catalysis in quantum information}, we give the formal statements and proofs of our main results. In Section \ref{Flexible catalysis outperforms catalytic multicopy extractions}, we prove an additional abstract result about the power of flexible catalysis. Finally, in Section \ref{Open questions}, we list some interesting open questions.

\section{Informal statements of main results}
\label{Informal statements of main results}

In this section we present a selection of our results that demonstrates the power of flexible catalysis in quantum information theory. We only state them informally here, together with brief explanations of their significance, but without any proofs. We do not give any precise definitions of the terms used in the statements either, only draw attention to the following. In many naturally arising resource theories (e.g. LOCC), discarding objects is a free operation. However, some classes of quantum operations (e.g. LU) do not include a discard operation, which would allow us to keep a factor of a tensor product of quantum states, and discard the other. For these operation classes, it makes sense to talk about transformations that also allow discard operations alongside the operations from the given class. We shall call these transformations extractions. It turns out that, when it comes to catalysis, there are significant differences between extractions and transformations without discard operations. Therefore, we will always make it clear if discard operations are also allowed, provided that they are not already included in the class of quantum operations.

\begin{dfn} (Informal version of Definition \ref{precise_defn_of_flex_cat}.)
        We say that there is a flexible catalytic conversion from $\ket{\psi}$ to $\ket{\phi}$ if there exists a set $S$ of quantum states (called the catalytic set, or set of catalysts) such that for any catalyst $\ket{\eta} \in S$, there exists another catalyst $\ket{\eta'}\in S$ for which $\ket{\psi}\otimes\ket{\eta}$ can be converted into $\ket{\phi}\otimes\ket{\eta'}$ under the set of free operations.
\end{dfn}

\begin{prop} (Informal version of Proposition \ref{LOCCpositive}.)
\label{LOCC_main_result}
    There exist bipartite quantum states $\ket{\psi},\ket{\phi}$ such that preparing $\ket{\phi}$ from $\ket{\psi}$ with certainty

    (i) is possible by a flexibly catalytic $\mathrm{LOCC}$ procedure $P$,

    (ii) but no single catalyst used in $P$ can do the job alone.
\end{prop}

This result means that in LOCC we can do more with the same catalyst and initial states if we are willing to be flexible than if we restrict ourselves to traditional catalysis. However, it does not say that flexible catalysis opens up new transformations that are not possible via traditional catalysis at all.

\begin{thm} (Informal version of Theorem \ref{CatExtLUfin_beats_CatExtLU}.)
\label{LU_main_result}
    There exist bipartite quantum states $\ket{\psi},\ket{\phi}$ such that extracting $\ket{\phi}$ from $\ket{\psi}$

    (i) cannot be realised by any catalytic $\mathrm{LU}$ procedure,

    (ii) but can be realised by a flexible catalytic $\mathrm{LU}$ procedure.

\end{thm}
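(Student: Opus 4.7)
The plan is to work entirely with Schmidt spectra, which are complete LU-invariants of bipartite pure states. An LU extraction $\ket\psi\to\ket\phi$ is possible exactly when $\pi_\psi = \pi_\phi\otimes\pi_g$ for some valid Schmidt spectrum $\pi_g$, i.e.\ when $\pi_\phi$ divides $\pi_\psi$ in the commutative monoid of finite multisets of positive reals under tensor product. Identifying these multisets with polynomials with non-negative integer coefficients in the group algebra $R=\mathbb{Z}[t^{\mathbb{R}}]$ of the additive reals turns the problem into pure algebra, and $R$ is an integral domain since $(\mathbb{R},+)$ is torsion-free.

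For part (i), cancellation in the integral domain $R$ collapses the catalytic identity $\pi_\psi\otimes\pi_\eta=\pi_\phi\otimes\pi_\eta\otimes\pi_g$ to the direct one $\pi_\psi=\pi_\phi\otimes\pi_g$. Therefore, choosing $\ket\psi,\ket\phi$ with Schmidt spectra such that $\pi_\phi\nmid\pi_\psi$ in $R$ (for instance, with $\pi_\phi$ carrying an irreducible factor of strictly higher multiplicity than $\pi_\psi$ does) immediately rules out every ordinary catalyst: whatever $\eta$ one chose, the identity forced by the catalytic condition would collapse to a direct factorisation that is not available.

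For part (ii), I would construct an explicit flexible-catalyst family $S$ and exhibit, for every $\eta\in S$, a witness $\eta'\in S$ and garbage $\pi_g$ satisfying $\pi_\psi\otimes\pi_\eta=\pi_\phi\otimes\pi_{\eta'}\otimes\pi_g$. The natural source of such a family is the multicopy subsumption result developed in Section~4: if some multicopy LU extraction involving $\psi$ and $\phi$ is available, one can form $S$ from a chain of catalysts of the form $\ket\psi^{\otimes a}\otimes\ket\phi^{\otimes b}$ together with transitions that trade one copy of $\psi$ for one copy of $\phi$ at each step. I would therefore pick $\ket\psi,\ket\phi$ so that, while $\pi_\phi\nmid\pi_\psi$, the weaker divisibility conditions needed by this multicopy chain do hold, and then verify that each link of the chain is a legitimate LU extraction.

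The main obstacle is precisely the tension between (i) and (ii): finite catalyst families close into cycles, and multiplying the transition identities along such a cycle cancels the catalyst and yields a multicopy extraction that, by the UFD structure of the relevant finitely generated subring of $R$, is in turn equivalent to direct extraction, contradicting (i). Any viable $S$ must therefore be infinite and acyclic, with members of unbounded ``size'' in the factors where $\pi_\phi$ outweighs $\pi_\psi$, so that the descent forced at each transition never bottoms out in a minimal catalyst with no legal successor. Producing an explicit $\ket\psi,\ket\phi$ and $S$ that navigate this tension, using the abstract transformation-theoretic machinery of Section~4 and a carefully chosen arithmetic of the Schmidt values, is the technical heart of the argument.
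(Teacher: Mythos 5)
Your part (i) is on the right track and essentially matches the paper's Theorem~\ref{finfcluuseless}: after encoding Schmidt spectra as polynomial-like objects over a torsion-free group, cancellation kills any ordinary catalyst, so $CatExt_{LU} = Ext_{LU}$ and direct non-divisibility of $\pi_\phi$ into $\pi_\psi$ suffices for (i). Your part (ii) also starts in the right place: invoke the multicopy subsumption (Proposition~\ref{flexvsmulticopyextrtt}) and produce a finite chain of catalysts that trade copies of $\psi$ for copies of $\phi$.

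But then you derail at exactly the critical point. You claim that multiplying the identities around a finite catalytic cycle gives an $n$-copy extraction, and that ``by the UFD structure'' this is ``equivalent to direct extraction.'' That inference is false. After cancelling the catalyst product, the cycle yields $\pi_\psi^{\otimes n} = \pi_\phi^{\otimes n}\otimes G$, and unique factorisation in the polynomial ring does force $G = P^n$ where $P$ is the formal quotient $\pi_\psi/\pi_\phi$; what it does \emph{not} give you is that $P$ has nonnegative coefficients. The nonnegativity of $P^n$ simply does not imply the nonnegativity of $P$. This is the entire content of what the paper calls the ``negativity'' $n(p)$ of a polynomial (Appendix~\ref{anyposint}), and the construction there exhibits, for every $n\geq 2$, an integer polynomial $p$ with $p\notin\mathbb{Z}_{\geq 0}[x]$ but $p^n\in\mathbb{Z}_{\geq 0}[x]$. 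Example~\ref{advantage} gives a concrete two-catalyst chain (using $X=4+x$, $Y=2+2x-x^2+2x^3+2x^4$) witnessing $CatExt_{LU}\subsetneq CatExt_{LU}^{(\leq 2)}$, which directly contradicts your claim that ``any viable $S$ must therefore be infinite and acyclic.''

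The error matters because your conclusion pushes you toward an \emph{infinite} catalyst family, which is not needed here and is in fact the substance of the paper's Open Question~1 (whether $CatExt_{LU}^{(fin)}\subsetneq CatExt_{LU}^{(f)}$). That question is left open precisely because resolving it would require a polynomial that is infinitely negative yet essentially positive in the sense of Appendix~A, a much harder ask than the finite-negativity examples the paper constructs. In short: you have the right invariants and the right reduction to polynomial positivity, but you need to drop the claim that $n$-copy extraction collapses to $1$-copy extraction, and instead \emph{construct} a $\psi,\phi$ pair realising a finite negativity $n(p)=n>1$. That is where the technical heart of the paper's argument lies.
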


\begin{thm} (Informal version of Theorem \ref{flexhelp}.)
\label{PM_main_result}
    There exist quantum states $\ket{\psi},\ket{\phi}$ such that extracting $\ket{\phi}$ from $\ket{\psi}$

    (i) cannot be realised catalytically with permutation matrices,

    (ii) but can be realised by a flexible catalytic procedure using permutation matrices.
\end{thm}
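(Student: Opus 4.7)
The plan is to recast the problem combinatorially. A permutation matrix on $\mathbb{C}^d$ merely relabels the computational basis, so the only invariant of $\ket{\psi}$ under PM is the multiset $p$ of its squared amplitudes (and likewise $q$ for $\ket{\phi}$). PM-extraction of $\ket{\phi}$ from $\ket{\psi}$ (with discard allowed) then reduces exactly to a multiset factorisation $p = q \otimes r$, where $\otimes$ denotes the Cartesian product followed by entrywise multiplication and $r$ is the distribution of the discarded junk.

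First I would prove a cancellation lemma for $\otimes$: $p \otimes s = q \otimes s \otimes r$ implies $p = q \otimes r$. This follows from the generating-function identity $G_{A \otimes B}(t) = G_A(t) G_B(t)$ for $G_A(t) = \sum_{a \in A} a^t$, together with the fact that $G_s$ is analytic and positive at $t = 0$, hence nonzero and cancellable. An immediate corollary is that standard catalysis under PM adds nothing to direct extraction, so for part (i) it suffices to construct $p, q$ with no solution $p = q \otimes r$. A clean obstruction is multiplicity: if $q$ has a value of multiplicity $k$, then in any factorisation every value of $p$ must appear with multiplicity a multiple of $k$. For instance $q = (1/2, 1/2)$ together with $p = (1/2, 1/4, 1/8, 1/8)$ already fails, since $p$ has two values of odd multiplicity while sizes are compatible.

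The hard part is (ii): constructing an explicit closed family $\mathcal{F}$ of catalyst distributions such that for every $s \in \mathcal{F}$ there exist $s' \in \mathcal{F}$ and distribution $r$ with $p \otimes s = q \otimes s' \otimes r$. The key freedom is that $s \ne s'$, which renders the cancellation lemma inapplicable. I would look for either a short cycle $s_1 \to s_2 \to \cdots \to s_k \to s_1$ in the induced transition graph, or an infinite chain of growing catalyst dimension closed off by an eventual cycle. I expect the principal obstacle to be precisely the closure condition: the size equation $|p||s| = |q||s'||r|$ together with the multiset equality forces stringent combinatorial constraints at every link (most notably on the parities of multiplicities arising from $p$ and $q$), and each transition must land back inside $\mathcal{F}$. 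My approach would be to invoke the abstract existence result from Section 4, which produces such a family in a general transformation-theoretic setting, verify that the PM setting satisfies its hypotheses, and then translate the abstract catalysts back into explicit quantum states $\ket{\psi}$, $\ket{\phi}$ and $\ket{c}$.
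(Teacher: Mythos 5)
Your reduction to multisets of amplitudes is the right first move, and it matches the paper's Remark 3.21 ($\rightarrow_{\mathcal{P}}$ is equivalent to the multiset theory over $G=\mathbb{R}\times\mathbb{R}/\mathbb{Z}$). Two things go wrong, one minor, one fatal.

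The minor issue: permutation matrices preserve \emph{amplitudes}, not squared amplitudes, so the correct invariant is the multiset of complex amplitudes up to global phase, equivalently a multiset over $\mathbb{C}^\ast$, not a probability vector. This matters because $\mathbb{C}^\ast$ has torsion (roots of unity), while the positive reals do not. Your generating-function cancellation lemma is correct over the positive reals but fails over $\mathbb{C}^\ast$, and as a consequence your corollary ``standard catalysis under PM adds nothing to direct extraction'' is simply false: the paper explicitly notes $Tr_{PM}\subsetneq Cat_{PM}$ (caption of Figure 4, via Proposition 4.21 and the torsion of $\mathbb{C}^\ast$). For a concrete counterexample, $\{1,-1\}$ is not extractable from $\{1,1\}$ directly, but $\{1,1\}\otimes\{1,-1\}=\{1,-1\}\otimes\{1,-1\}$, so $\{1,-1\}$ serves as a catalyst. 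Your part (i) obstruction can still be repaired: if $\ket{\psi}$ and $\ket{\phi}$ are chosen with positive real amplitudes, one projects to moduli first (the returned catalyst in standard catalysis has the same modulus multiset as the input, so the positive-real cancellation lemma applies to $|\cdot|$ of everything), and then the even-multiplicity obstruction works for the specific pair you chose. So (i) is salvageable for your $p,q$, though the general claim you stated is wrong.

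The fatal issue is (ii). You acknowledge it is the hard part and then defer entirely to ``the abstract existence result from Section 4'' without any indication of what that result should be or how to verify it. But the construction of a closed family of catalysts is the \emph{entire} technical content of the theorem: the paper produces it by mapping $\mathbb{Z}_{\geq 0}[x]$ into $M_{\mathbb{Z}}$ (Proposition 4.22) and then exhibiting polynomials $p,X,Y\in\mathbb{Z}[x]$ with the property that products like $X^2,Y^2,XY$ have nonnegative coefficients while $XY$ (hence the would-be single-catalyst factorisation) has a negative one (Example 4.24 and Proposition A.2). That sign-juggling argument is what makes the flexible protocol exist while ruling out the standard one; nothing in your proposal points toward or substitutes for it. Moreover, the theorem requires a \emph{single} pair $\ket{\psi},\ket{\phi}$ satisfying both (i) and (ii), and your proposed witness $p=(1/2,1/4,1/8,1/8)$, $q=(1/2,1/2)$ was chosen solely for (i); you give no reason to believe (and I see none) that it admits any flexible catalytic protocol. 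As it stands, the proposal establishes only half the theorem and for a pair that was never shown to exhibit the other half.
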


Theorems \ref{LU_main_result} and \ref{PM_main_result} mean that more extractions can be done by flexible catalysis than by regular catalysis, if the class of free operations is either local unitaries or permutation matrices. So this is a stronger advantage provided by flexibility than that of Theorem \ref{LOCC_main_result} for LOCC.

\vspace{0.2cm}

The following result holds for any resource theory. It aims to give the reader a sense of the strength of flexible catalysis in general, relating it to well-studied types of transformations. 

\begin{thm} (Informal version of Proposition \ref{flexvsmulticopyextrtt}.)
    Flexible catalysis can be regarded as a generalisation of catalytic multicopy transformations.
\end{thm}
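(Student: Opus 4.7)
My plan is to prove the theorem by constructing an explicit flexible catalyst set that encodes the interpolation between the ``before'' and ``after'' sides of a given catalytic multicopy protocol. Intuitively, if we can convert $n$ copies of $\ket{\psi}$ into $n$ copies of $\ket{\phi}$ in a single go, we can then ``spend'' the resulting $\ket{\phi}$'s one at a time in subsequent invocations, using each freshly supplied $\ket{\psi}$ to gradually refill the stockpile of $\ket{\psi}$'s until we are ready to run the multicopy step again. The current composition of the stockpile plays the role of the flexible catalyst, and the set of all reachable compositions plays the role of the catalyst set.

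Concretely, assume the given catalytic multicopy protocol has the form $\ket{\psi}^{\otimes n}\otimes \alpha \to \ket{\phi}^{\otimes n}\otimes \alpha$ via free operations, where $\alpha$ may be trivial if no auxiliary catalyst is needed. I would define
\[
S = \{\, c_k := \alpha \otimes \ket{\psi}^{\otimes k} \otimes \ket{\phi}^{\otimes (n-1-k)} : 0 \le k \le n-1 \,\},
\]
and exhibit, for every $c_k \in S$, a valid free transition $\ket{\psi}\otimes c_k \to \ket{\phi}\otimes c_{(k+1)\bmod n}$. For $0\le k<n-1$, both sides of this transition are tensor products of the same multiset of factors (one copy of $\alpha$, $k+1$ copies of $\ket{\psi}$, and $n-1-k$ copies of $\ket{\phi}$), so the transition is merely a permutation of tensor factors. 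For $k=n-1$, the required transition is $\ket{\psi}^{\otimes n}\otimes \alpha \to \ket{\phi}^{\otimes n}\otimes \alpha$, which is exactly the given multicopy protocol. Hence the catalyst cycles $c_0 \to c_1 \to \cdots \to c_{n-1} \to c_0$, and every element of $S$ admits a valid outgoing transition, which is precisely what the definition of flexible catalysis demands.

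The main technical point is aligning this construction with the precise abstract definitions of catalytic multicopy transformation and flexible catalysis set out in Section~4: once the definitions are matched, the argument above reduces to combinatorial book-keeping. The only genuine subtlety I foresee is the status of permutations of tensor factors as free operations. These are free in LOCC, LU, and PM, but if the abstract framework of Section~4 does not automatically include them, I would state closure under such permutations as an explicit hypothesis (or else enlarge the class of free operations minimally to incorporate them); this is the step I would be most careful about, although I do not expect it to be a substantive obstacle.
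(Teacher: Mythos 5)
Your construction is correct and is essentially the same as the paper's: the catalyst set $\{\,kA + (n-1-k)B + C : 0\le k\le n-1\,\}$ is exactly the cyclic chain the paper writes out in the converse direction of Proposition~\ref{flexvsmulticop}, with the multicopy step occurring once per cycle and the remaining steps being reorderings, which are free in any transformation theory by the axiom $a+b\le b+a$ (Proposition~\ref{orderdoesntmatter}), so the closure concern you flag at the end is automatically handled by the framework. Note only that the formal statement is an if-and-only-if and you have proved just the ``multicopy $\Rightarrow$ flexible'' direction, but that is the direction needed for the informal ``generalisation'' claim.
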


In particular, the class of flexible catalytic transformations contains all multicopy transformations and all catalytic transformations. Since these two transformation classes are closely related in many transformation theories \cite{duan2005multiple,lipka2024catalysis}, it is enlightening to view them as special cases of a more general class.

\vspace{0.2cm}

The last statement we list in this section is not directly about flexible catalysis. However, it addresses a question that arises naturally when we investigate entanglement catalysis, and we found it interesting enough on its own to include it here.

\begin{thm} (Informal version of Theorem \ref{nouniqfact}.)
\label{there_is_no_unique_factorisation_of_bipartite_entanglement}
    There is no unique factorisation of bipartite entanglement classes.
\end{thm}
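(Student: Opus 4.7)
The plan is to reduce the question of ``factorisation of a bipartite entanglement class'' to a combinatorial statement about Schmidt multisets, and then exhibit an explicit multiset admitting two essentially different factorisations. Recall that two bipartite pure states are LU-equivalent exactly when they share the same multiset of Schmidt coefficients, and that the Schmidt multiset of $\ket{\psi_1}\otimes\ket{\psi_2}$ (with local subsystems grouped appropriately) is the multiplicative product $\{ab : a \in A, b \in B\}$ of the individual Schmidt multisets $A,B$. A factorisation of the LU-class $[\ket{\psi}]$ into non-trivial subclasses is therefore the same thing as a factorisation of its Schmidt multiset as a product of smaller multisets of positive reals, and two factorisations count as ``essentially different'' precisely when the resulting multisets of factor classes differ.

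The key ingredient is the polynomial identity
\begin{equation*}
(1+y)(1+y^2+y^4) \;=\; 1+y+y^2+y^3+y^4+y^5 \;=\; (1+y+y^2)(1+y^3),
\end{equation*}
which, after interpreting exponents as logarithms of a positive real $t$, becomes the multiplicative identity of multisets
\begin{equation*}
\{1,t\}\cdot\{1,t^2,t^4\} \;=\; \{1,t,t^2,t^3,t^4,t^5\} \;=\; \{1,t,t^2\}\cdot\{1,t^3\}.
\end{equation*}
Fixing some $t \in (0,1)$, I would let $\ket{\psi}$ be the bipartite pure state whose Schmidt vector is the normalised form of $\{1,t,t^2,t^3,t^4,t^5\}$. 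The identity above exhibits $\ket{\psi}$, up to LU, as a tensor product in two ways: once as $\ket{\psi_1}\otimes\ket{\psi_2}$ whose factors have Schmidt coefficients proportional to $(1,t)$ and $(1,t^2,t^4)$, and once as $\ket{\phi_1}\otimes\ket{\phi_2}$ whose factors have Schmidt coefficients proportional to $(1,t,t^2)$ and $(1,t^3)$. Matching the normalisation constants reduces to $(1+t^2)(1+t^4+t^8) = (1+t^2+t^4)(1+t^6)$, which is the original polynomial identity evaluated at $y=t^2$. For generic $t$ the four smaller Schmidt vectors represent pairwise inequivalent LU-classes, so the unordered pairs $\{[\ket{\psi_1}],[\ket{\psi_2}]\}$ and $\{[\ket{\phi_1}],[\ket{\phi_2}]\}$ are genuinely distinct.

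The main obstacle is definitional rather than computational: the setup requires a clean notion of ``factorisation of an entanglement class'' (phrased on the level of classes, so that local-unitary rebasing on each tensor factor is absorbed) and a precise meaning of ``unique factorisation'' (the fundamental-theorem-of-arithmetic analogue, namely uniqueness of the multiset of prime factor-classes). Once these are pinned down, the polynomial identity above delivers the counterexample almost for free; the substantive content is just the normalisation check and the observation that the four smaller Schmidt vectors lie in four pairwise distinct LU-classes.
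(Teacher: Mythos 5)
Your proposal is correct and uses essentially the same approach as the paper: reduce bipartite LU-classes to their Schmidt multisets and exhibit one multiset with two distinct factorisations into irreducible factors. The paper's example is $(1,1,\sqrt{2},\sqrt{8})\otimes(1,1,\sqrt{2}) \propto (1,1,1,1,2,\sqrt{8})\otimes(1,\sqrt{2})$, while yours comes from $(1+y)(1+y^2+y^4)=(1+y+y^2)(1+y^3)$ at $y=t^2$, but the underlying mechanism is identical.
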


Here, by a bipartite entanglement class we mean an LU equivalence class of bipartite quantum states. So Theorem \ref{there_is_no_unique_factorisation_of_bipartite_entanglement} says that knowing the LU equivalence class of a tensor product of bipartite states is not sufficient to determine the LU equivalence classes of the individual factors. 

\section{Transformation theories}
\label{Transformation theories}

We start this section by giving the precise definition of the mathematical structures we will use to capture resource theories. We will call these structures transformation theories (TT). We develop the language required to say that two TTs are essentially the same, so that we can apply results proved for one TT to the other. Then we define classes of transformations and extractions assisted by catalysis or flexible catalysis, and prove some elementary results about them that hold regardless of what TT we are working in. Finally, we give the mathematical characterisation of the quantum-information-theoretic TTs of LOCC, LU, and PM.

\subsection{Basics}

\begin{dfn} (Transformation theory.) A transformation theory is a triple $(M, +, \leq )$, where $(M,+)$ is a monoid and $\leq$ is a translation-invariant preorder on $M$ satisfying $a+b\leq b+a$ for all $a,b\in M$. We say that a transformation theory is symmetric if $a\leq b \implies b\leq a$ for all $a,b\in M$, i.e. if $\leq$ is an equivalence relation on $M$.
\end{dfn}

\begin{rmk}
    We call the elements of $M$ the states of the transformation theory. If $a\leq b$ for some states $a,b\in M$, we say that $a$ transforms into $b$. We can think of the monoid operation $+$ as the rule for composing states.
\end{rmk}

\begin{rmk}
    The condition $a+b\leq b+a$ for all $a,b\in M$ has the intuitive meaning that swapping two states is allowed in any transformation theory. There exist resource theories in which swapping states is not free, but we will not work with any such resource theories in this paper, so we can incorporate this assumption into the definition of our mathematical model.
\end{rmk}

\begin{rmk}
    We will often abuse notation in the usual way, and identify a transformation theory with the underlying monoid or set.
\end{rmk}

\begin{rmk}
    This definition is equivalent to that of a theory of resource convertibility in \cite{MathematicalTheoryOfResources}. It is also similar to the definition of a resource theory in \cite{FRITZ_2015}: resource theories are ordered commutative monoids. The only difference is that our transformation theories do not require the preorder $\leq$ to be antisymmetric, i.e. mutually interconvertible states are not forced to be equal. In particular, any ordered commutative monoid is a TT. We find it convenient to work with this definition, but everything we do could be also be phrased using the definition of \cite{FRITZ_2015}. Conversely, anything modelled by ordered commutative monoids can also be phrased in the language of TTs. From this point, we will only use the term resource theory in an informal way, except when referring to the resource theory definition of \cite{FRITZ_2015}. We will use transformation theory (or TT) to refer to the mathematical structure defined above. See also Remark \ref{resource_theories_and_quotients_of_TTs}.
\end{rmk}

  \begin{eg}
\label{TT's from classes quantum operations}
  
      Let $S$ be the set of pure quantum states living in finite dimensional Hilbert spaces, and let $\mathcal{C}$ be a set of quantum operations that includes all permutations of subsystems and is closed under parallel and sequential composition. For $\ket{\psi}, \ket{\phi}\in S$, write $\ket{\psi}\rightarrow_\mathcal{C} \ket{\phi}$ iff there is an operation $C \in \mathcal{C}$ sending $\ket{\psi}\bra{\psi}$ to $\ket{\phi}\bra{\phi}.$ Then $(S,\otimes, \rightarrow_\mathcal{C})$ is a transformation theory. Moreover, if all elements of $\mathcal{C}$ are invertible and the inverses are also in $\mathcal{C}$, then $(S,\otimes, \rightarrow_\mathcal{C})$ is symmetric.
  \end{eg}

\begin{prop}
\label{orderdoesntmatter}
    Let $T$ be a transformation theory, $\sigma \in S_n$ a permutation, and $a_1,\ldots,a_n\in T$. Then $a_1+\ldots +a_n\leq a_{\sigma(1)}+\ldots+a_{\sigma(n)}.$
\end{prop}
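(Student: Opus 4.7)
The plan is to reduce the general claim to the atomic case of swapping two adjacent summands, and then to compose such atomic swaps by transitivity. Since $S_n$ is generated by the adjacent transpositions $\tau_i = (i,\, i{+}1)$, I would induct on the length of a word for $\sigma$ in these generators, taking the base case to be the identity permutation (for which the claim reduces to reflexivity of $\leq$).

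First I would establish the atomic step: for any $b_1, \ldots, b_n \in T$ and any $i \in \{1, \ldots, n-1\}$,
\[
b_1 + \ldots + b_n \leq b_1 + \ldots + b_{i-1} + b_{i+1} + b_i + b_{i+2} + \ldots + b_n.
\]
Using associativity of $+$, both sides decompose as $L + (b_i + b_{i+1}) + R$ and $L + (b_{i+1} + b_i) + R$ respectively, where $L = b_1 + \ldots + b_{i-1}$ and $R = b_{i+2} + \ldots + b_n$ (with the convention that empty sums denote the identity element of the monoid). The TT axiom $b_i + b_{i+1} \leq b_{i+1} + b_i$, combined with two applications of translation invariance (adding $R$ on the right and $L$ on the left), then yields the desired inequality.

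For the inductive step, I would write $\sigma$ as a product of adjacent transpositions and apply the atomic step one transposition at a time, chaining the resulting inequalities via transitivity of $\leq$. At each intermediate stage the current sum has the form $a_{\pi(1)} + \ldots + a_{\pi(n)}$ for some partial permutation $\pi$, and applying the next adjacent transposition $\tau_j$ corresponds to invoking the atomic step on the tuple $(a_{\pi(1)}, \ldots, a_{\pi(n)})$ at index $j$, producing the sum indexed by $\pi \circ \tau_j$.

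There is no real mathematical obstacle here; the result is a formal consequence of the three axioms of a transformation theory (associativity, translation invariance, and the basic swap). The only thing that needs care is the bookkeeping of permutation conventions when composing adjacent transpositions, but this is routine once the atomic step is in hand.
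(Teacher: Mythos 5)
Your proof is correct and takes the same approach as the paper's one-line argument: decompose $\sigma$ into (adjacent) transpositions and chain the swap axiom via transitivity, using translation invariance on both sides to handle the fixed prefix and suffix. You simply spell out the details that the paper leaves implicit.
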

\begin{proof}
    Any permutation can be written as a composition of swaps, so this follows immediately from the assumption that $a+b\leq b+a$ for all $a,b\in T$. 
\end{proof}

\begin{dfn} (Morphism of transformation theories.)
    Let $T=(M,+_T,\leq_T)$ and $S=(N,+_S,\leq_S)$ be transformation theories. A morphism of transformation theories $\phi: T\rightarrow S$ is a map $M\rightarrow N$ satisfying $\phi(a)+\phi(b)\sim_S \phi(a+b)$ and $a\leq_T b \implies \phi(a)\leq_S \phi(b)$ for all $a,b\in M$. An isomorphism is a morphism which has a two-sided inverse.
\end{dfn}

\begin{rmk}
    $\sim$ denotes equivalence of states: $a\sim b\iff a\leq b$ and $b\leq a$. This is an equivalence relation, since $\leq$ is a preorder.
\end{rmk}

\begin{dfn} (Equivalence of transformation theories.) Let $T=(M,+_T,\leq_T)$ and $S=(N,+_S,\leq_S)$ be transformation theories. An equivalence  $\phi: T\rightarrow S$ is a morphism of transformation theories that is additionally essentially surjective (meaning that for all $a\in N \ \exists \ b\in M$ such that $a\sim \phi(b)$) and satisfies $a\leq_T b \impliedby \phi(a)\leq_S \phi(b)$.
\end{dfn}

\begin{prop}
    \label{addingdiscardop} (Transformation theory of extractions.) Let $T=(M, +, \leq_T)$ be a transformation theory. Define the relation $\leq_{T'}$ as follows: $A\leq_{T'} B \iff \exists D\in M: A\leq_T B+D$. Then $T'=(M, +, \leq_{T'})$ is a transformation theory.
\end{prop}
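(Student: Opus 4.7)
The plan is to verify, item by item, the three conditions required of a transformation theory, since $(M,+)$ is already a monoid by assumption. Namely, I need to show that $\leq_{T'}$ is (i) reflexive and transitive (a preorder), (ii) translation invariant with respect to $+$, and (iii) satisfies $A+B\leq_{T'} B+A$ for all $A,B\in M$. The underlying idea throughout is that the witness $D$ in the definition of $\leq_{T'}$ can be manipulated using the monoid identity together with the swap and translation-invariance axioms of $\leq_T$.

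For reflexivity, I would use the identity element $0$ of the monoid: $A\leq_T A + 0 = A$ by reflexivity of $\leq_T$, so $A\leq_{T'} A$ with witness $D=0$. For transitivity, suppose $A\leq_{T'} B$ and $B\leq_{T'} C$, witnessed by $D_1$ and $D_2$ respectively. Then $A\leq_T B+D_1$ and $B\leq_T C+D_2$; translation invariance of $\leq_T$ by $D_1$ gives $B+D_1\leq_T C+D_2+D_1$, so transitivity of $\leq_T$ yields $A\leq_T C+(D_2+D_1)$, i.e. $A\leq_{T'} C$ with witness $D_2+D_1$. The swap-invariance property $A+B\leq_{T'} B+A$ is immediate by taking $D=0$ and applying the swap axiom of $\leq_T$ directly.

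The only step that takes a bit of care is translation invariance of $\leq_{T'}$. Given $A\leq_{T'} B$ with witness $D$, so $A\leq_T B+D$, I want to show $A+C\leq_{T'} B+C$, i.e. find $D'$ with $A+C\leq_T (B+C)+D'$. Translation invariance of $\leq_T$ by $C$ yields $A+C\leq_T B+D+C$. Now I invoke the swap axiom $D+C\leq_T C+D$, translated on the left by $B$ via translation invariance of $\leq_T$, to get $B+D+C\leq_T B+C+D$. Chaining with transitivity gives $A+C\leq_T (B+C)+D$, so $D'=D$ works.

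No real obstacle is expected; the statement is essentially a sanity check that enlarging $\leq_T$ to allow a ``discarded'' remainder $D$ preserves the transformation-theory axioms. The one place one might initially worry is translation invariance, where the new term $C$ and the witness $D$ appear in the wrong order after translating; but this is exactly what the swap axiom $a+b\leq b+a$ in the definition of a transformation theory is designed to handle.
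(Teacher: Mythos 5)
Your proof is correct and carries out exactly the routine axiom-by-axiom verification (preorder, translation invariance, swap) that the paper dismisses with ``Straightforward''; the one mildly nontrivial point, using the swap axiom to move the witness $D$ past the translated term $C$, is handled properly.
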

\begin{proof}
    Straightforward.
\end{proof}

\begin{rmk}
    $T'$ is the transformation obtained from $T$ by adding a free discard operation. If $T$ already had the discard operation, then $T=T'$. It will be convenient to talk about transformations in $T'$ without mentioning $T'$ explicitly. Instead, we can call them extractions under $T$.
\end{rmk}

\begin{prop} (Transformation theory of equivalence classes.) Let $T=(M,+, \leq)$ be a transformation theory. Then $T/\sim := (M/\sim , +, \leq )$ is also a transformation theory, where $+, \leq$ are defined on the equivalence classes in the natural way. Moreover, it is equivalent to the original transformation theory $T$.
\end{prop}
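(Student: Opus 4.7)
The plan is to verify, in order, (i) that $+$ and $\leq$ descend unambiguously to $T/\sim$, (ii) that the resulting triple $(T/\sim, +, \leq)$ satisfies the transformation theory axioms, and (iii) that the quotient projection $\pi \colon T \to T/\sim$, $a \mapsto [a]$, is an equivalence of transformation theories in the sense of the earlier definition. The whole argument is bookkeeping; the only step that uses a genuine transformation-theory axiom (as opposed to monoid or preorder laws) is the well-definedness of $+$ on classes.

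For (i), well-definedness of the induced sum reduces to: if $a\sim a'$ and $b\sim b'$, then $a+b \sim a'+b'$. I would chain $a+b \leq a'+b \leq a'+b'$ using translation invariance of $\leq$ in each slot, and then run the symmetric argument to obtain the reverse inequality. Well-definedness of the induced order is immediate from transitivity: under the same hypotheses, $a\leq b$ forces $a' \leq a \leq b \leq b'$, so $a'\leq b'$.

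For (ii), associativity and the identity carry over verbatim because $\pi$ respects $+$ up to equality (not merely up to $\sim$), so the monoid axioms on $T/\sim$ are inherited from $T$ directly. Reflexivity and transitivity of the induced order lift from $T$ (in fact the induced relation is a partial order, but we do not need antisymmetry). Translation invariance $[a]\leq [b] \implies [a]+[c]\leq [b]+[c]$, and the swap inequality $[a]+[b]\leq [b]+[a]$, both follow by choosing any representatives and invoking the corresponding statement in $T$.

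For (iii), the morphism axioms for $\pi$ are trivial: $\pi(a)+\pi(b) = [a]+[b] = [a+b] = \pi(a+b)$ holds with equality (hence certainly up to $\sim$), and the biconditional $a\leq b \iff \pi(a)\leq \pi(b)$ holds by the very definition of $\leq$ on the quotient, which yields both the morphism implication $\implies$ and the extra implication $\impliedby$ required for an equivalence. Surjectivity of $\pi$ on $T/\sim$ trivially implies essential surjectivity, completing the verification. I foresee no real obstacle; the entire proof is a direct check once the quotient operations are spelled out.
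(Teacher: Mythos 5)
Your proof is correct and it is exactly the kind of direct verification the paper elides by writing ``Straightforward.'' The three steps (well-definedness of the quotient operations, the axiom checks, and the verification that $\pi$ is an equivalence) are the natural decomposition, and each is carried out correctly; in particular, your chain $a+b\leq a'+b\leq a'+b'$ for well-definedness of $+$ is fine, noting that ``translation invariance in each slot'' implicitly uses the swap axiom $a+b\leq b+a$ if one reads translation invariance as one-sided, which is a harmless gloss.
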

\begin{proof}
    Straightforward.
\end{proof}

\begin{rmk}
\label{resource_theories_and_quotients_of_TTs}
Note that such a transformation theory of equivalence classes is an ordered commutative monoid, so it has the structure of a resource theory, according to the definition of \cite{FRITZ_2015}.
\end{rmk}

\begin{prop}
\label{eqtt}
    Let  $T=(M,+_T, \leq_T)$ and  $S=(N,+_S, \leq_S)$ be transformation theories. The following are equivalent:
    
    (i) $T$ and $S$ are equivalent,
    
    (ii) $T/\sim_T$ and $S/\sim_S$ are equivalent,
    
    (iii) $T/\sim_T$ and $S/\sim_S$ are isomorphic.

\end{prop}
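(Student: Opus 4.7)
The plan is a cyclic proof (i) $\Rightarrow$ (ii) $\Rightarrow$ (iii) $\Rightarrow$ (i). Throughout, I will use freely that any morphism of transformation theories automatically preserves $\sim$: if $a \sim b$ then $a \leq b$ and $b \leq a$, and the morphism preserves each inequality.

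For (i) $\Rightarrow$ (ii), given an equivalence $\phi\colon T \to S$, I define $\bar\phi\colon T/\mathord{\sim}_T \to S/\mathord{\sim}_S$ by $\bar\phi([a]) := [\phi(a)]$. Well-definedness follows from the observation above, and each of the four defining conditions of an equivalence (morphism identity, forward order, essential surjectivity, reverse implication) descends routinely from $\phi$ to $\bar\phi$. For (ii) $\Rightarrow$ (iii), the key point is that on the quotient $\leq$ is antisymmetric, so $\sim$ reduces to equality. Hence essential surjectivity upgrades to genuine surjectivity, and the combination of forward and reverse implications $\bar\phi(x)\leq\bar\phi(y) \iff x \leq y$, applied to both $x\leq y$ and $y\leq x$, yields $\bar\phi(x)=\bar\phi(y) \iff x=y$, i.e.\ injectivity. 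It remains to check that the set-theoretic inverse is a morphism: its order-preservation is exactly the reverse implication for $\bar\phi$, and its monoid condition is immediate from that of $\bar\phi$ (which on the quotient holds on the nose, not just up to $\sim$).

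For (iii) $\Rightarrow$ (i), I build an equivalence $T \to S$ as the composite
\[
T \xrightarrow{\pi_T} T/\mathord{\sim}_T \xrightarrow{\;\cong\;} S/\mathord{\sim}_S \xrightarrow{r_S} S,
\]
where $\pi_T$ is the quotient map (an equivalence by the preceding proposition), the middle arrow is the given isomorphism (and hence trivially an equivalence), and $r_S$ is a section of $\pi_S$ chosen by the axiom of choice so that $r_S([a]) \in [a]$. I then verify two things: (a) $r_S$ is itself an equivalence, and (b) compositions of equivalences are equivalences. For (a), the monoid identity $r_S([a]+[b]) \sim_S r_S([a])+r_S([b])$ holds because both sides belong to the class $[a+b]$; essential surjectivity holds because $r_S([a]) \sim_S a$; and both order implications reduce to $r_S([a]) \sim_S a$ combined with transitivity of $\leq_S$. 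For (b), the only nontrivial check is essential surjectivity of the composite, and this uses that the outer morphism preserves $\sim$.

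The main obstacle is not any single deep step but the simultaneous bookkeeping: each implication requires checking the morphism condition, both directions of order preservation, and essential surjectivity, and these checks must be made compatible with the way $+$ and $\leq$ are defined on quotients via representatives. The most delicate point is the construction of $r_S$ in (iii) $\Rightarrow$ (i), where the morphism identity holds only up to $\sim_S$ (as the definition of a morphism permits) rather than on the nose, and one must use this weaker identity correctly when composing with the quotient map and the isomorphism.
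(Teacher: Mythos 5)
Your proposal is correct. You traverse the cycle (i) $\Rightarrow$ (ii) $\Rightarrow$ (iii) $\Rightarrow$ (i) while the paper goes (iii) $\Rightarrow$ (ii) $\Rightarrow$ (i) and separately (i) $\Rightarrow$ (iii), but the underlying maps are the same: your (i) $\Rightarrow$ (ii) followed by (ii) $\Rightarrow$ (iii) is a factored version of the paper's (i) $\Rightarrow$ (iii), in which it induces $\Phi\colon T/\mathord{\sim}_T \to S/\mathord{\sim}_S$ and checks it is a bijective, two-sidedly monotone morphism; and your (iii) $\Rightarrow$ (i) via $\pi_T$, the isomorphism, and a section $r_S$ is exactly the paper's (ii) $\Rightarrow$ (i), which composes ``$t\mapsto[t]\mapsto\Phi([t])\mapsto$ representative.'' Where you differ is in isolating (ii) $\Rightarrow$ (iii) as a standalone step (antisymmetry of $\leq$ on the quotient promotes essential surjectivity to surjectivity and two-sided monotonicity to injectivity), which the paper never states separately since it gets (iii) directly from (i). That decomposition is a clean way to package the observation that the only gap between equivalence and isomorphism is the failure of antisymmetry, which vanishes on quotients. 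Your explicit verification that $r_S$ is an equivalence and that equivalences compose is also more careful than the paper's ``a composition of three equivalences, thus also an equivalence,'' and it patches the one place where the paper is terse.
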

\begin{proof}
    (iii) $\implies$ (ii): Clear.

    (ii) $\implies$ (i): If $\Phi: T/\sim_T \rightarrow S/\sim_S$ is an equivalence, let $\phi : T\rightarrow S$ be the map defined by the composition $t\mapsto [t]\xmapsto{\Phi}[s]\mapsto s$, where the last map chooses an arbitrary representative of the equivalence class. This is a composition of three equivalences, thus also an equivalence.

    (i) $\implies$ (iii): If $\phi : T\rightarrow S$ is an equivalence, it induces a map $\Phi: T/\sim_T \rightarrow S/\sim_S$. It is easy to check that $\Phi$ is a bijective morphism, monotonic in both directions. So it has an inverse, and thus it's an isomorphism.
\end{proof}

\begin{rmk}

    In particular, equivalence of transformation theories is a symmetric relation. It is clearly reflexive and transitive, so it is also an equivalence relation.

    \vspace{0.2cm}

    Proposition \ref{eqtt} shows that equivalent transformation theories can be considered the same for all practical purposes. In the rest of this paper, we will copy results from one TT to another equivalent TT without further explanation. 
\end{rmk}

\subsection{Transformation classes}

Here we introduce the catalytic transformation classes that will allow us to formalise our statements about the strengths of certain types of catalytic transformations. In general, a catalytic transformation class is the set of ordered pairs $(A,B)$ of states, such that $B$ can be prepared or extracted from $A$ catalytically, with some restriction on exactly what kind of catalysis we allow. For a transformation theory $T$, write $\mathrm{Tr}_T:=\{(A,B)\in T\times T: A\leq B\}$ for the class of direct conversions.

\begin{dfn} (Catalytic transformations.) For any transformation theory $T$ and a state $C\in T$, define
    \begin{equation}
        \mathrm{Cat}_T(C):=\{(A,B)\in T\times T: A+C\leq B+C\},
    \end{equation}
    and let
    \begin{equation}
        \mathrm{Cat}_T :=\bigcup_{C\in T} \mathrm{Cat}_T(C).
    \end{equation}

For $(A,B)\in \mathrm{Cat}_T$, we say that $A$ catalytically transforms into $B$.
\end{dfn}

\begin{rmk}
    If $T$ is a TT of the form $(S,\otimes,\rightarrow_{\mathcal{C}})$ (using the notation of Example \ref{TT's from classes quantum operations}), then we recover the standard notion of catalysis in quantum information: $\ket{\psi}$ catalytically transforms into $\ket{\phi}$ (i.e. $(\ket{\psi},\ket{\phi})\in \mathrm{Cat}_T$) if there exists $\ket{\eta}$ for which $\ket{\psi}\otimes\ket{\eta}\rightarrow_\mathcal{C}\ket{\phi}\otimes \ket{\eta}$. If these TTs also included mixed states, it would also be possible to allow correlations between the target state and the catalyst, and thus we would recover the notion of correlated catalysis. However, in this paper we will stick to the stricter definition of catalysis.
\end{rmk}

\begin{dfn}
\label{precise_defn_of_flex_cat}
(Flexible catalytic transformations.) Let $S_T$ be the collection of all nonempty subsets of $T$. For any $S\in S_T$ (i.e. any nonempty subset of $T$), define
    \begin{equation}
        \mathrm{Cat}_T^{(f)}(S):=\{(A,B)\in T\times T: \forall C\in S \ \exists C' \in S \text{ such that } A+C\leq B+C'\},
    \end{equation}
    and
    \begin{equation}
        \mathrm{Cat}_T^{(f)}:=\bigcup_{S\in S_T} \mathrm{Cat}_T^{(f)}(S).
    \end{equation}

For $(A,B)\in \mathrm{Cat}_T^{(f)}$, we say that there is a flexible catalytic transformation taking $A$ to $B$. If $(A,B)\in \mathrm{Cat}_T^{(f)}(S)$, we say that $S$ is a set of catalysts (or catalytic set) for this flexible catalytic transformation.

  Also define
  \begin{equation}
      \mathrm{Cat}_T^{(\mathrm{fin})}:=\bigcup_{S\in S_T, |S|<\infty} \mathrm{Cat}_T^{(f)}(S),
  \end{equation}

so that $(A,B)\in \mathrm{Cat}_T^{(\mathrm{fin})}$ if and only if there is a finite catalytic set $S$ that allows a flexible catalytic conversion from $A$ to $B$. Let $\mathrm{Cat}_T^{(\leq n)}\subseteq \mathrm{Cat}_T^{(\mathrm{fin})}$ be the the set of those pairs $(A,B)$ for which $A$ can be converted to $B$ using a set of at most $n$ catalysts, i.e.
  \begin{equation}
      \mathrm{Cat}_T^{(\leq n)}:=\bigcup_{S\in S_T, |S|\leq n} \mathrm{Cat}_T^{(f)}(S).
  \end{equation}

  Finally, define $\mathrm{Cat}_T^{(n)}$ to be the class corresponding to flexible catalytic transformations that can be realised by cyclically permuting exactly $n$ catalysts, i.e.

  \begin{equation*}
        \mathrm{Cat}_T^{(n)}:=\{(A,B)\in T\times T: \exists C_0, C_1, \ldots, C_n \text{ with } C_n = C_0
  \end{equation*}
   \begin{equation}
       \text{ such that } A+C_{i-1}\leq B+C_i \ \forall i\in \{1,2,\ldots,n\}\}.
  \end{equation}

  \end{dfn}

   \begin{dfn} (Extractions.)  Let $T$ be a transformation theory and let $T'$ be its extension with free discard operations, as defined in Proposition \ref{addingdiscardop}. Define
   
  \begin{equation}
      \mathrm{Ext}_T:=\mathrm{Tr}_{T'}.
  \end{equation}

  For $(A,B)\in \mathrm{Ext}_T$, we say $B$ is extractable from $A$. Similarly, we define the extraction version of the above transformation classes by adding free discard operations.

  For any $C\in T$, define
    \begin{equation}
      \mathrm{CatExt}_T(C):=\mathrm{Cat}_{T'}(C)
  \end{equation}

  and 
  \begin{equation}
        \mathrm{CatExt}_T:=\mathrm{Cat}_{T'}.
    \end{equation}

  For $(A,B)\in \mathrm{CatExt}_T$, we say that $B$ is catalytically extractable from $A$.

  For any $S\in S_T$, define
    \begin{equation}
        \mathrm{CatExt}_T^{(f)}(S):= \mathrm{Cat}_{T'}^{(f)}(S)
    \end{equation}
    
    and
    \begin{equation}
        \mathrm{CatExt}_T^{(f)}:=\mathrm{Cat}_{T'}^{(f)}.
    \end{equation}

For $(A,B)\in \mathrm{CatExt}_T^{(f)}$, we say that $B$ is flexibly catalytically extractable from $A$.
 
    Also define 
  \begin{equation}
        \mathrm{CatExt}_T^{(\mathrm{fin})}:= \mathrm{Cat}_{T'}^{(\mathrm{fin})},
    \end{equation}
     \begin{equation}
        \mathrm{CatExt}_T^{(\leq n)}:=\mathrm{Cat}_{T'}^{(\leq n)},
    \end{equation}

  \begin{equation}
        \mathrm{CatExt}_T^{(n)}:=  \mathrm{Cat}_{T'}^{(n)}.
  \end{equation}

  \end{dfn}

\begin{rmk}
    We introduced this additional notation to emphasise the intuition that extraction is just a generalised transformation in the original TT, even though it is possible to think of it as a transformation in a different TT.
\end{rmk}

\begin{prop}
\label{Union_of_cyclic_classes}
      \begin{equation}
          \mathrm{Cat}_T^{(\leq n)}=\bigcup_{1\leq k\leq n} \mathrm{Cat}_T^{(k)}.
      \end{equation}
\end{prop}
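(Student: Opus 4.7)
The plan is to prove both inclusions separately, using in one direction a repackaging of a cycle as a set, and in the other a pigeonhole/cycle argument on a self-map of a finite set.

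For the inclusion $\bigcup_{1 \leq k \leq n} Cat_T^{(k)} \subseteq Cat_T^{(\leq n)}$, suppose $(A,B) \in Cat_T^{(k)}$ with $1 \leq k \leq n$, witnessed by $C_0, C_1, \ldots, C_k$ with $C_k = C_0$ and $A + C_{i-1} \leq B + C_i$ for $i \in [k]$. I would take $S := \{C_1, \ldots, C_k\}$, which has $|S| \leq k \leq n$. Note that $C_0 = C_k \in S$. To check $(A,B) \in Cat_T^{(f)}(S)$, given any $C_j \in S$ (with $1 \leq j \leq k$), set $C' := C_{j+1}$ if $j < k$ and $C' := C_1$ if $j = k$ (in which case $C_j = C_0$ and $A + C_0 \leq B + C_1$ by the $i=1$ relation). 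Either way $C' \in S$ and $A + C_j \leq B + C'$.

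For the reverse inclusion $Cat_T^{(\leq n)} \subseteq \bigcup_{1 \leq k \leq n} Cat_T^{(k)}$, suppose $(A,B) \in Cat_T^{(f)}(S)$ with $|S| \leq n$. The hypothesis gives, for each $C \in S$, at least one $C' \in S$ with $A + C \leq B + C'$; using the axiom of choice (or just a direct choice on the finite set $S$), fix a function $f : S \to S$ such that $A + C \leq B + f(C)$ for every $C \in S$. Pick any $C \in S$ and iterate: $C, f(C), f^2(C), \ldots$ Since $S$ is finite with $|S| \leq n$, by the pigeonhole principle there exist indices $0 \leq i < j \leq n$ with $f^i(C) = f^j(C)$, so setting $D_0 := f^i(C)$ and $k := j - i \leq n$ yields $f^k(D_0) = D_0$. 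Define $D_l := f^l(D_0)$ for $l = 0, 1, \ldots, k$; then $D_k = D_0$ and $A + D_{l-1} \leq B + D_l$ for each $l \in [k]$, witnessing $(A,B) \in Cat_T^{(k)}$ with $1 \leq k \leq n$.

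Neither direction presents a serious obstacle; the only mildly delicate point is bookkeeping the indices in the cycle argument and remembering that $C_0 = C_k$ is automatically in $S$ when $S$ is defined from a $Cat_T^{(k)}$-witness, so the two definitions line up correctly. Both inclusions are really one and the same observation: a function from a finite set to itself is the same thing as a disjoint union of rho-shapes whose cycles have length at most the size of the set.
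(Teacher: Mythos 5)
Your proof is correct and takes essentially the same approach as the paper. The paper states the easy inclusion as immediate from the definitions (your first paragraph simply spells this out) and proves the reverse inclusion by observing that the directed graph on $S$ with edges $C \to D$ whenever $A + C \leq B + D$ has positive out-degree everywhere and therefore contains a directed cycle; your self-map $f : S \to S$ with pigeonhole on the orbit $C, f(C), \ldots, f^n(C)$ is the same cycle-finding argument in functional rather than graph-theoretic language.
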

\begin{proof}

By definition, we have
    \begin{equation}
        \mathrm{Cat}_T^{(\leq n)}\supseteq \bigcup_{1\leq k\leq n} \mathrm{Cat}_T^{(k)}.
    \end{equation}
    For the reverse inclusion, assume that $(A,B)\in \mathrm{Cat}_T^{(f)}(S)$ where $|S|\leq n$. Note that if we draw a directed graph $G$ with vertex set $S$ and draw an edge from $C\in S$ to $D\in S$ iff $A+C\leq B+D$, then (as there is an edge starting at every vertex) $G$ contains a directed cycle. The vertices of this cycle give us a set of $k\leq n$ catalysts whose cyclic permutation realises a flexible catalytic conversion from $A$ to $B$, so $(A,B)\in \mathrm{Cat}_T^{(k)}$.
\end{proof}

 \begin{prop}
 \label{infeqtrcat}
      Let $T$ be any transformation theory. Then $\mathrm{CatExt}_T^{(f)}=\mathrm{Cat}_T^{(f)}$.
  \end{prop}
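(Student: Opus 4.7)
The statement asserts that allowing discards inside a flexible catalytic transformation buys us no extra power, once the catalyst is itself allowed to vary. The plan is to prove the two inclusions separately: the forward one is essentially trivial, while the reverse requires a construction that absorbs the discarded residue directly into the updated catalyst.

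For the inclusion $Cat_T^{(f)} \subseteq CatExt_T^{(f)}$, observe that whenever $A\leq_T B$, also $A\leq_T B + e$ where $e$ is the identity of the monoid $(M,+)$, so by Proposition \ref{addingdiscardop} we have $A\leq_{T'} B$. Hence $\leq_T$ is a subrelation of $\leq_{T'}$, which immediately gives $Cat_T^{(f)}(S)\subseteq Cat_{T'}^{(f)}(S)$ for every nonempty $S\in S_T$, and so $Cat_T^{(f)}\subseteq Cat_{T'}^{(f)}=CatExt_T^{(f)}$.

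For the reverse inclusion $CatExt_T^{(f)}\subseteq Cat_T^{(f)}$, suppose $(A,B)\in Cat_{T'}^{(f)}(S)$. Unpacking, for every $C\in S$ there exist $C'\in S$ and some $D\in T$ (both depending on $C$) such that $A+C\leq_T B+C'+D$. I would enlarge the candidate catalyst set by tensoring in every possible ``garbage'' state: set
\begin{equation*}
S' \;:=\; \{\,C+E : C\in S,\ E\in T\,\}.
\end{equation*}
This set is nonempty since $S$ is and $C=C+e\in S'$. Given any element $X=C+E\in S'$, choose $C'\in S$ and $D\in T$ with $A+C\leq_T B+C'+D$, and apply translation invariance of $\leq_T$ to add $E$ on both sides, obtaining $A+X\leq_T B+(C'+D+E)$. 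Since $C'\in S$ and $D+E\in T$, the element $C'+D+E$ again lies in $S'$, so $S'$ witnesses $(A,B)\in Cat_T^{(f)}$.

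The main conceptual point, rather than a technical obstacle, is recognising that because flexible catalysis permits the catalyst to change after each application, any discarded residue arising from the embedding $T\hookrightarrow T'$ can be rolled into the next catalyst. There are only two genuinely-to-be-checked points: that the monoid identity exists (so that $S\subseteq S'$ and $S'$ is nonempty), and that one can always pick some decomposition $X=C+E$ of each $X\in S'$ — both immediate, with no finiteness or choice subtleties required. Once these are noted, the proof reduces to the single translation-invariance step above.
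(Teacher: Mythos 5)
Your proof is correct, and the reverse inclusion is argued by a genuinely different construction than the paper's. The paper builds a countable \emph{chain} by dependent choice: start with an arbitrary $C_0\in S$, pick $C_n\in S$ and garbage $D_n\in T$ with $A+C_{n-1}\leq B+D_n+C_n$ at each step, then set $C_n' := C_n + (D_1+\cdots+D_n)$ so that the cumulative garbage is carried along; the witness set is $S'=\{C_0',C_1',\ldots\}$. You instead take the much larger set $S' := \{C+E : C\in S,\ E\in T\}$, i.e.\ you close $S$ under adding \emph{arbitrary} garbage, and then a single application of translation invariance shows $S'$ is a valid flexible catalyst set. The trade-off is clear: your $S'$ can be enormous (it has the cardinality of $S\times T$ up to the monoid addition), whereas the paper's $S'$ is a single countable chain whose catalysts grow only by the garbage actually encountered; on the other hand, your argument avoids the induction and dependent-choice bookkeeping entirely and treats all of $S'$ uniformly rather than along a distinguished sequence. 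Both are sound; since the size of the catalyst set plays no role in the statement $CatExt_T^{(f)}=Cat_T^{(f)}$, your version is, if anything, a cleaner presentation of the same idea that discarded residue can be rolled into the next catalyst.

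One small remark: the fact that each $X\in S'$ has a decomposition $X=C+E$ with $C\in S$, $E\in T$ does, strictly, invoke the same kind of (non-dependent) choice the paper implicitly uses when selecting $C_n$ and $D_n$; your aside that ``no choice subtleties'' arise slightly undersells this, but it is exactly the level of rigour the paper itself operates at, so it is not a defect.
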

  
  \begin{proof}
      One inclusion is trivial, so we only need to show $\mathrm{CatExt}_T^{(f)}\subseteq \mathrm{Cat}_T^{(f)}$. Take $(A,B)\in \mathrm{CatExt}_T^{(f)}$, and let $S\in S_T$ be such that $(A,B)\in \mathrm{CatExt}_T^{(f)}(S)$. We'll define a chain of catalysts in $S$ inductively: let $C_0\in S$ be arbitrary. Given $C_{n-1}$, let $C_n\in S,D_n\in T$ be chosen such that $A+C_{n-1}\leq B+D_n+C_n$. But now we can set $C_n'=C_n+(D_1+\ldots+D_n)$ for all $n\in \mathbb{N}.$ Then we have $A+C_{n-1}'\leq B+C_n'$ for all $n\geq 1$, and thus $(A,B)\in \mathrm{Cat}_T^{(f)}(S')$ with $S'=\{C_0',C_1',\ldots\}$.
  \end{proof}
  
\begin{center}
    
  \includegraphics[width=10cm]{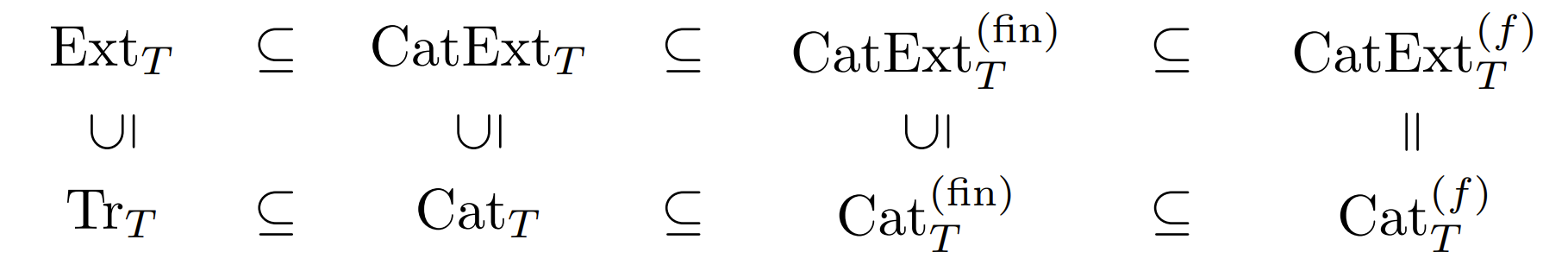}
    
    \textbf{Figure 1}: Inclusion diagram of catalytic transformation classes for a general transformation theory.
\end{center}

\begin{prop}
\label{flexvsmulticop}
    Let $T$ be a transformation theory, and let $n$ be a positive integer. Then $(A,B)\in \mathrm{Cat}_T^{(n)}$ if and only if $(nA,nB)\in \mathrm{Cat}_T$.
\end{prop}

\begin{proof}
    If $A+C_{m-1}\leq B+C_m$ for all $1\leq m \leq n$ (where $C_0=C_n$), then by adding all of these together and using Proposition \ref{orderdoesntmatter}, we get
    \begin{equation}
            nA+(C_1+\ldots+C_n)\leq nB+(C_1+\ldots+C_n).
    \end{equation}
    Conversely, if $nA+C\leq nB+C$, then we have \begin{equation}
        A+((n-1)A+C)\leq B+((n-1)B+C)
    \end{equation}
    \begin{equation}
        A+((n-1)B+C)\leq B+(A+(n-2)B+C)
    \end{equation}
    \begin{equation}
        A+(A+(n-2)B+C)\leq B+(2A+(n-3)B+C)
    \end{equation}
    $$\vdots$$
    \begin{equation}
        A+((n-2)A+B+C)\leq B+((n-1)A+C).
    \end{equation}
\end{proof}

\begin{rmk}
    This shows that catalytic multicopy transformations are a special case of flexible catalytic transformations. By Proposition \ref{Union_of_cyclic_classes}, for finite flexible catalysis we have a sort of converse as well, but notice that the space requirements of the  catalytic multicopy transformation derived from the flexible catalytic transformation are larger by a factor of $n$. Therefore, finite flexible catalytic transformations can be more practical than the corresponding catalytic multicopy transformations. In Example \ref{advantage} we will demonstrate a flexible catalytic transformation in which the catalysts are smaller in size than the initial state $A$, and which cannot be replaced by a catalytic protocol, and thus it beats any catalytic multicopy transformation for the same task.

    Note also that our construction of catalysts resembles the construction of \cite{duan2005multiple} for a single catalyst that can be used to simulate a multiple copy transformation by catalysis. The main difference is that our construction creates a finite set of catalysts that work together to achieve the same goal, rather than a single catalyst. Our result is also more general, since it holds for an arbitrary transformation theory.

    While the proof is quite simple, this result suggests that flexible catalysis might in some sense be the ``right" notion of catalysis, since it includes multiple copy transformations, which are incomparable to catalytic transformations in general.
\end{rmk}

\begin{prop}
\label{flexvsmulticopyextrtt}
    Let $T$ be a transformation theory, and let $n$ be a positive integer. Then $(A,B)\in \mathrm{CatExt}_T^{(n)}$ if and only if $(nA,nB)\in \mathrm{CatExt}_T$.
\end{prop}
\begin{proof}
  Apply Proposition \ref{flexvsmulticop} to the transformation theory $T'$ defined in Proposition \ref{addingdiscardop}.
\end{proof}

\subsection{The transformation theories of LOCC, LU and PM}
\label{The transformation theories of LOCC, LU and PM}

\begin{dfn}
    (Bipartite quantum state.) A bipartite quantum state is a triple $(\ket{\psi}, \mathcal{H}_A, \mathcal{H}_B)$, where $(\mathcal{H}_A, \mathcal{H}_B)$ are finite dimensional complex Hilbert spaces, and $\ket{\psi}\in \mathcal{H}_A\otimes \mathcal{H}_B$ is a normalised vector up to global phase. We will denote the set of all bipartite quantum states by $\mathcal{B}$.
\end{dfn}
\begin{rmk}
    We can refer to a bipartite quantum state by just the state vector $\ket{\psi}$, if the Hilbert spaces $\mathcal{H}_A, \mathcal{H}_B$ are clear from the context, or if it does not matter what they are. Note also that a finite dimensional complex Hilbert space is determined by its dimension (up to isomorphism).
\end{rmk}

\begin{dfn}
    (Swapping subsystems.)   For Hilbert spaces $\mathcal{H}_A, \mathcal{H}_B$, define the permutation matrix $\mathrm{SWAP}_{A,B}$ to be the unitary defined by
    $$\mathrm{SWAP}_{A,B}(\ket{\psi}_A\otimes\ket{\phi}_B)=\ket{\phi}_B\otimes\ket{\psi}_A$$
    for all $\ket{\psi}_A \in\mathcal{H}_A$, $\ket{\phi}_B \in\mathcal{H}_B$.
\end{dfn}

\begin{prop}
    (Monoid of bipartite quantum states.)  The set $\mathcal{B}$ of bipartite quantum states forms a monoid under the following operation, which we will simply denote as a tensor product:
    \begin{equation*}
    (\ket{\psi}, \mathcal{H}_A, \mathcal{H}_B)\otimes (\ket{\phi}, \mathcal{H}_C, \mathcal{H}_D)=
    \end{equation*}
    \begin{equation}
    =((I_A\otimes \mathrm{SWAP}_{B,C}\otimes I_D)(\ket{\psi}\otimes \ket{\phi}), \mathcal{H}_A\otimes \mathcal{H}_C, \mathcal{H}_B\otimes \mathcal{H}_D).
    \end{equation}
\end{prop}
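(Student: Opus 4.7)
The plan is to verify each of the three monoid axioms—closure, existence of an identity, and associativity—interpreting ``equality'' of bipartite states up to the canonical associator and unitor isomorphisms of Hilbert-space tensor products, as is standard. Closure and well-definedness up to global phase are immediate: $\ket{\psi}\otimes\ket{\phi}$ is a unit vector in $\mathcal{H}_A\otimes\mathcal{H}_B\otimes\mathcal{H}_C\otimes\mathcal{H}_D$, and the unitary $I_A\otimes SWAP_{B,C}\otimes I_D$ preserves normalisation and propagates any global phase, so the output is a valid bipartite state on the declared Hilbert spaces.

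For the identity, I would take $e:=(1,\mathbb{C},\mathbb{C})$. Computing $e\otimes(\ket{\psi},\mathcal{H}_A,\mathcal{H}_B)$ and $(\ket{\psi},\mathcal{H}_A,\mathcal{H}_B)\otimes e$ involves a swap with a one-dimensional $\mathbb{C}$-factor, which under the canonical identifications $\mathcal{H}\otimes\mathbb{C}\cong\mathbb{C}\otimes\mathcal{H}\cong\mathcal{H}$ acts as the identity on the state vector. Hence $e$ is a two-sided unit, and the choice is essentially forced once we agree to work up to these canonical isomorphisms.

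The main content is associativity. Given states $X=(\ket{\psi},\mathcal{H}_A,\mathcal{H}_B)$, $Y=(\ket{\phi},\mathcal{H}_C,\mathcal{H}_D)$, $Z=(\ket{\chi},\mathcal{H}_E,\mathcal{H}_F)$, I would track the effect of each bracketing as a permutation of the six tensor factors of $\ket{\psi}\otimes\ket{\phi}\otimes\ket{\chi}\in\mathcal{H}_A\otimes\mathcal{H}_B\otimes\mathcal{H}_C\otimes\mathcal{H}_D\otimes\mathcal{H}_E\otimes\mathcal{H}_F$. Writing the contents of the six positions as a tuple, the left bracketing realises $(A,B,C,D,E,F)\mapsto(A,C,B,D,E,F)\mapsto(A,C,E,B,D,F)$—first $SWAP_{B,C}$ inside $X\otimes Y$, then $SWAP_{B\otimes D,E}$ in $(X\otimes Y)\otimes Z$—while the right bracketing realises $(A,B,C,D,E,F)\mapsto(A,B,C,E,D,F)\mapsto(A,C,E,B,D,F)$—first $SWAP_{D,E}$ inside $Y\otimes Z$, then $SWAP_{B,C\otimes E}$ in $X\otimes(Y\otimes Z)$. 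Both end at $(A,C,E,B,D,F)$, so both sequences of unitaries implement the same permutation of tensor factors, and the two bracketings produce the same state vector; the declared Hilbert-space factors also match by construction.

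The main (mild) obstacle is bookkeeping around the associator isomorphisms: the two SWAP unitaries in the two bracketings \emph{a priori} act on different groupings, e.g.\ $(\mathcal{H}_B\otimes\mathcal{H}_D)\otimes\mathcal{H}_E$ on one side versus $\mathcal{H}_B\otimes(\mathcal{H}_C\otimes\mathcal{H}_E)$ on the other. However, once one invokes MacLane's coherence theorem (or equivalently, quotients at the start by the canonical associators and unitors of the tensor product), both compositions collapse to the same permutation matrix on the six-fold tensor product, which closes the proof.
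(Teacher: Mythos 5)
Your proposal is correct and simply fills in what the paper dismisses as ``Straightforward'': closure and well-definedness up to global phase are immediate, the unit is $(1,\mathbb{C},\mathbb{C})$, and associativity follows by chasing the two compositions of SWAPs, each of which realises the permutation $(A,B,C,D,E,F)\mapsto(A,C,E,B,D,F)$ of tensor factors, with the associator bookkeeping handled by the canonical identifications the paper implicitly assumes. This is the same routine verification the paper has in mind, just written out.
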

\begin{proof}
    Straightforward.
\end{proof}
\begin{dfn}
    (Bipartite quantum operation.) 
    
    A bipartite quantum operation is a quintuple $(C, \mathcal{H}_A, \mathcal{H}_B,  \mathcal{H}_X, \mathcal{H}_Y)$ where  $\mathcal{H}_A, \mathcal{H}_B,\mathcal{H}_X, \mathcal{H}_Y$ are finite dimensional complex Hilbert spaces and $C: L(\mathcal{H}_A\otimes \mathcal{H}_B)\rightarrow L(\mathcal{H}_X\otimes \mathcal{H}_Y)$ is a quantum operation (i.e. a completely positive trace non-increasing map from the operator space $L(\mathcal{H}_A\otimes \mathcal{H}_B)$ the the operator space $L(\mathcal{H}_X\otimes \mathcal{H}_Y)$).
\end{dfn}
\begin{dfn}
    (LOCC \cite{bennett1996purification}.)  A bipartite quantum choperation $(C, \mathcal{H}_A, \mathcal{H}_B,  \mathcal{H}_X, \mathcal{H}_Y)$ is in LOCC if the quantum operation $C: L(\mathcal{H}_A\otimes \mathcal{H}_B)\rightarrow L(\mathcal{H}_X\otimes \mathcal{H}_Y)$ can be implemented by local quantum operations (i.e. quantum operations of the form $C_{A\rightarrow X}: L(\mathcal{H}_A)\rightarrow L(\mathcal{H}_X)$, $C_{B\rightarrow Y}: L(\mathcal{H}_B)\rightarrow L(\mathcal{H}_Y)$) and classical communication.
\end{dfn}
\begin{dfn}
\label{defn_of_lu}
    (LU.)  A bipartite quantum operation $(C, \mathcal{H}_A, \mathcal{H}_B,  \mathcal{H}_X, \mathcal{H}_Y)$ is in LU if $C$ is of the form $U_{A\rightarrow X}\otimes U_{B\rightarrow Y}$, where $U_{A\rightarrow X}: L(\mathcal{H}_A)\rightarrow L(\mathcal{H}_X) , U_{B\rightarrow Y}: L(\mathcal{H}_B)\rightarrow L(\mathcal{H}_Y)$ are partial isometries.
\end{dfn}
\begin{rmk}
    The reason we define LU using partial isometries rather than unitaries is to ensure that states with the same entanglement structure are mutually interconvertible even if the ambient Hilbert spaces have different dimensions. See also Remark \ref{remark_on_defn_of_lu_and_schmidt}.
\end{rmk}

The following two propositions are completely straightforward -- they are essentially special cases of Example \ref{TT's from classes quantum operations} -- so we present them without proof.

\begin{prop}
    (Transformation theory of LOCC.) For bipartite quantum states 
        $(\ket{\psi}, \mathcal{H}_A, \mathcal{H}_B),(\ket{\phi}, \mathcal{H}_X, \\ \mathcal{H}_Y)$, write $\ket{\psi}\rightarrow_{\mathrm{LOCC}}\ket{\phi}$ if there is a bipartite operation $(C, \mathcal{H}_A, \mathcal{H}_B,  \mathcal{H}_X, \mathcal{H}_Y)$ in LOCC sending $\ket{\psi}$ to $\ket{\phi}$ (or more precisely, $\ket{\psi}\bra{\psi}$ to $\ket{\phi}\bra{\phi}$). Then $(\mathcal{B}, \otimes, \rightarrow_{\mathrm{LOCC}})$ is a transformation theory.
\end{prop}

\begin{prop}
    (Transformation theory of LU.) For bipartite quantum states $(\ket{\psi}, \mathcal{H}_A, \mathcal{H}_B),(\ket{\phi}, \mathcal{H}_X, \\ \mathcal{H}_Y)$, write $\ket{\psi}\rightarrow_{\mathrm{LU}}\ket{\phi}$ if there is a bipartite operation $(C, \mathcal{H}_A, \mathcal{H}_B,  \mathcal{H}_X, \mathcal{H}_Y)$ in LU sending $\ket{\psi}$ to $\ket{\phi}$. Then $(\mathcal{B}, \otimes, \rightarrow_{\mathrm{LU}})$ is a symmetric transformation theory.
\end{prop}

To characterise these TT's, we will use the following well-known results (see \cite{nielsen2010quantum}).

\begin{thm} (Schmidt decomposition.)
    Let  $(\ket{\psi}, \mathcal{H}_A, \mathcal{H}_B)$ be a bipartite quantum state and let $d=\min \{\dim(\mathcal{H}_A), \dim(\mathcal{H}_B)\}$. Then there exist orthonormal sets $\{\ket{\alpha_1},\ldots,\ket{\alpha_d}\}\subset \mathcal{H}_A$, $\{\ket{\beta_1},\ldots,\ket{\beta_d}\}\subset \mathcal{H}_B$, and nonnegative real scalars $c_1,\ldots, c_d$ such that
    \begin{equation}
        \ket{\psi}=c_1\ket{\alpha_1}\ket{\beta_1}+\ldots+c_d\ket{\alpha_d}\ket{\beta_d}.
    \end{equation}
    Moreover, the scalars $c_1,\ldots, c_d$ (called the Schmidt coefficients) are unique up to reordering.
\end{thm}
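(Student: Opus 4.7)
The plan is to reduce the theorem to the singular value decomposition (SVD) of a complex matrix, which is a standard fact of linear algebra. First, I fix arbitrary orthonormal bases $\{\ket{i}\}_{i=1}^{d_A}$ of $\mathcal{H}_A$ and $\{\ket{j}\}_{j=1}^{d_B}$ of $\mathcal{H}_B$, where $d_A := \dim \mathcal{H}_A$ and $d_B := \dim \mathcal{H}_B$, and expand
\begin{equation*}
\ket{\psi} = \sum_{i=1}^{d_A} \sum_{j=1}^{d_B} M_{ij}\, \ket{i}\ket{j}
\end{equation*}
for a unique $d_A \times d_B$ complex matrix $M$.

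Next, I apply the SVD to write $M = U \Sigma V^\dagger$, where $U$ is a $d_A \times d_A$ unitary, $V$ is a $d_B \times d_B$ unitary, and $\Sigma$ is the $d_A \times d_B$ rectangular matrix carrying nonnegative reals $c_1, \ldots, c_d$ on the main diagonal and zeros elsewhere. Setting
\begin{equation*}
\ket{\alpha_k} := \sum_{i} U_{ik}\, \ket{i}, \qquad \ket{\beta_k} := \sum_{j} \overline{V_{jk}}\, \ket{j} \qquad (k = 1, \ldots, d),
\end{equation*}
the orthonormality of the columns of $U$ and $V$ immediately yields $\braket{\alpha_k | \alpha_l} = (U^\dagger U)_{kl} = \delta_{kl}$ and, analogously, $\braket{\beta_k | \beta_l} = \delta_{kl}$. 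Substituting the identity $M_{ij} = \sum_{k=1}^d U_{ik}\, c_k\, \overline{V_{jk}}$ into the expansion of $\ket{\psi}$ then produces the desired form $\ket{\psi} = \sum_k c_k \ket{\alpha_k}\ket{\beta_k}$.

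For uniqueness of the Schmidt coefficients, I would reduce to the spectral theorem. Given any such decomposition with coefficients $c_1, \ldots, c_d$, a direct computation gives the reduced density matrix
\begin{equation*}
\rho_A = \operatorname{Tr}_B \ket{\psi}\bra{\psi} = \sum_{k=1}^d c_k^2\, \ket{\alpha_k}\bra{\alpha_k},
\end{equation*}
so $\{c_k^2\}$ is forced to be the multiset of eigenvalues of $\rho_A$ (padded with zeros up to size $d$). Since this multiset is an invariant of $\ket{\psi}$ that does not reference any choice of decomposition, the coefficients $c_k$ are unique up to reordering.

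The only point requiring actual care is bookkeeping in the rectangular case $d_A \neq d_B$: one must confirm that precisely $d = \min\{d_A, d_B\}$ coefficients appear, and that the $\ket{\alpha_k}, \ket{\beta_k}$ form orthonormal \emph{subsets} (rather than bases) of the respective full spaces. Beyond this there is no genuine obstacle, since the statement is essentially a repackaging of the SVD. As a fallback avoiding SVD as a black box, one can instead diagonalize the positive semidefinite operator $\rho_A$ directly to read off the $\ket{\alpha_k}$ together with the values $c_k^2$, then define each $\ket{\beta_k}$ from the coefficients of $\ket{\psi}$ expanded in the basis $\{\ket{\alpha_k}\}$ and verify orthonormality by reusing the eigenvalue equation for $\rho_A$.
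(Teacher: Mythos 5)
Your proof via the singular value decomposition, with uniqueness read off from the spectrum of the reduced density matrix $\rho_A$, is correct. Note, however, that the paper does not prove this statement at all: it lists it among ``well-known results'' and simply cites Nielsen and Chuang, where the argument given is precisely the SVD-based one you reproduce. So there is no discrepancy to report, only that you have supplied a proof where the paper chose to supply a reference.

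One small caveat worth tightening if you wanted to be fully rigorous: the uniqueness claim as you state it is uniqueness of the multiset of eigenvalues of $\rho_A$, which gives uniqueness of the $c_k$ up to reordering only after you also note that the $c_k$ are by hypothesis nonnegative reals, so $c_k$ is recovered from $c_k^2$ without a sign ambiguity. You implicitly use this but do not say it. Beyond that, the rectangular-case bookkeeping you flag is indeed the only point requiring care, and your description of how to handle it is adequate.
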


\begin{thm}
\label{LOCCMAJ}
   Let $(\ket{\psi}, \mathcal{H}_A, \mathcal{H}_B),(\ket{\phi}, \mathcal{H}_X, \mathcal{H}_Y)$ be bipartite quantum states with squared Schmidt coefficients given by the vectors $u,v$, respectively. Then
    
    (i) there exists a bipartite $\mathrm{LOCC}$ operation sending $\ket{\psi}$ to $\ket{\phi}$ if and only if $u\prec v$;

    (ii) there exists a bipartite $\mathrm{LU}$ operation sending $\ket{\psi}$ to $\ket{\phi}$ if and only if $u\sim v$.
\end{thm}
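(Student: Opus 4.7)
These are classical facts, and my default would be to cite \cite{nielsen2010quantum}, as the authors do. Still, here is how I would prove them from scratch if asked.

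For part (ii) both directions follow quickly from uniqueness of the Schmidt decomposition. In the forward direction, if $(U_A \otimes U_B)\ket{\psi} = \ket{\phi}$ and $\ket{\psi} = \sum_i c_i \ket{\alpha_i}\ket{\beta_i}$ is a Schmidt decomposition, then $\ket{\phi} = \sum_i c_i (U_A\ket{\alpha_i})(U_B\ket{\beta_i})$ is itself a Schmidt decomposition because partial isometries preserve orthonormality on their support; uniqueness of the Schmidt coefficients then forces $u$ and $v$ to coincide as multisets. For the converse, fix Schmidt decompositions of $\ket{\psi}$ and $\ket{\phi}$ with matching coefficients, and define $U_A$ and $U_B$ on the supports by the obvious bijection between the two Schmidt bases (extending by zero to the full Hilbert spaces); the resulting local partial isometries carry $\ket{\psi}$ to $\ket{\phi}$.

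For part (i) the plan splits into the two directions of Nielsen's theorem. For necessity, I would invoke the Lo--Popescu reduction: every deterministic pure-to-pure LOCC protocol can be recast as a one-way protocol in which Alice performs a POVM $\{M_k\}$ and Bob applies an outcome-dependent unitary correction. Determinism forces each branch $\ket{\phi_k}$ to be LU-equivalent to $\ket{\phi}$, and therefore to share the spectrum of the reduced state $\rho_A^\phi$. Writing $M_k \rho_A^\psi M_k^\dagger = p_k U_k \rho_A^\phi U_k^\dagger$ and polar-decomposing $M_k (\rho_A^\psi)^{1/2}$ yields $\rho_A^\psi = \sum_k p_k V_k^\dagger \rho_A^\phi V_k$ for certain unitaries $V_k$; Horn's theorem on mixed-unitary channels then delivers the desired majorization between the spectra of $\rho_A^\psi$ and $\rho_A^\phi$, and hence between $u$ and $v$ under the paper's conventions.

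For sufficiency, the plan is the standard Hardy--Littlewood--P\'olya route: the majorization $u \prec v$ means that the target Schmidt distribution is obtained from the initial one by a doubly stochastic matrix, which by Birkhoff's theorem decomposes into a finite composition of T-transforms (averagings on pairs of coordinates). Each T-transform is realised by an explicit two-outcome LOCC protocol acting on the corresponding two-dimensional Schmidt block, with Alice measuring a tailored POVM and Bob applying a conditional unitary; concatenating these pieces yields the full LOCC transformation. The main obstacle is the explicit bookkeeping of Kraus operators and trace-preservation in the sufficiency step; since this is precisely Nielsen's 1999 argument and is used only as a black box later in the paper, the honest move is just to cite the reference rather than reproduce the proof here.
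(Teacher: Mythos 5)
Your proposal is correct and matches what the paper does: the paper states this as a well-known result and simply cites \cite{nielsen2010quantum} without proof, exactly the ``honest move'' you describe. The sketches you give (Schmidt uniqueness for (ii); Lo--Popescu reduction plus Horn's theorem for necessity of (i); Birkhoff/T-transforms for sufficiency of (i)) are the standard arguments and are accurate.
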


\begin{rmk}
\label{remark_on_defn_of_lu_and_schmidt}
    Although this is often stated for $\mathcal{H}_A=\mathcal{H}_X$ and $\mathcal{H}_B=\mathcal{H}_Y$, and with LU consisting only of local unitaries, it is straightforward to extend the result to this more general setting. (The reason being that if a partial isometry preserves the normalisation of a state vector, then this vector must lie in the subspace on which the partial isometry acts as an isometry.)
\end{rmk}

\begin{rmk}
    A remark on the meanings of the symbols we use to denote different preorders on sets of vectors or multisets.
    
    Symbol $\prec$ denotes majorization. Although originally defined for vectors of equal length, it can be extended to vectors of different lengths by padding the shorter vector with 0's, provided that all entries are nonnegative reals. Moreover, we can extend this definition to multisets, since majorization does not depend on the order of the elements.

    Symbol $\sim$ denotes equality up to reordering and adding or removing 0's. It is easy to see that $u\sim v \iff u\prec v$ and $v\prec u$. So $\sim$ is an equivalence relation.

    $\propto$ will denote equality up to constant translation by the group operation.
\end{rmk}

\begin{prop}
\label{description of TTs of LOCC and LU}
Let $V_0$ be the set of discrete probability distribution vectors up to permutation. (Equivalently, $V_0$ is the set of finite multisets of nonnegative reals summing up to 1.)
Then
   
    (i) The transformation theory of $\mathrm{LOCC}$ is equivalent to the transformation theory $\mathrm{MAJ}_0=(V_0, \otimes, \prec)$.

    (ii) The transformation theory of $\mathrm{LU}$ is equivalent to the transformation theory $\mathrm{EQ}_0=(V_0, \otimes, \sim)$.

\end{prop}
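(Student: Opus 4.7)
The plan is to define, for each of the two claims, the same candidate equivalence $\phi:\mathcal{B}\to V_0$ sending a bipartite quantum state to the multiset of squared Schmidt coefficients of its state vector, and then verify the four requirements: well-definedness, preservation of the monoid structure, the order-preserving and order-reflecting properties, and essential surjectivity. By Proposition \ref{eqtt}, it is enough to exhibit an equivalence in each direction, and since $\phi$ is the same map in both settings only the preorder comparison changes.

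First I would check that $\phi$ is well defined and a monoid morphism. The Schmidt decomposition theorem gives Schmidt coefficients $c_1,\ldots,c_d$ that are unique up to reordering, and since they satisfy $\sum_i c_i^2=1$, the multiset $\{c_1^2,\ldots,c_d^2\}$ is a genuine element of $V_0$. For compatibility with the tensor product on $\mathcal{B}$, I would note that if $\ket{\psi}=\sum_i c_i\ket{\alpha_i}\ket{\beta_i}$ and $\ket{\phi}=\sum_j d_j\ket{\gamma_j}\ket{\delta_j}$ are Schmidt decompositions, then after applying $I_A\otimes SWAP_{B,C}\otimes I_D$ the product $\ket{\psi}\otimes\ket{\phi}$ admits the decomposition $\sum_{i,j} c_i d_j (\ket{\alpha_i}\otimes\ket{\gamma_j})(\ket{\beta_i}\otimes\ket{\delta_j})$, whose outer vectors are orthonormal. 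Hence $\phi(\ket{\psi}\otimes\ket{\phi})$ is the multiset of products $c_i^2 d_j^2$, which is precisely $\phi(\ket{\psi})\otimes\phi(\ket{\phi})$ in $V_0$ up to permutation, as required.

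Second, the order properties are exactly the content of Theorem \ref{LOCCMAJ}: for the LOCC case, $\ket{\psi}\to_{LOCC}\ket{\phi}$ iff $\phi(\ket{\psi})\prec\phi(\ket{\phi})$, giving both the monotonicity condition and the backward implication demanded by the definition of equivalence; for the LU case, $\ket{\psi}\to_{LU}\ket{\phi}$ iff the Schmidt coefficient multisets agree up to reordering and padding by zeros, which is exactly $\sim$ on $V_0$. Note that Theorem \ref{LOCCMAJ} is stated in terms of majorization of the Schmidt coefficients themselves rather than of their squares, but because all coefficients are nonnegative and $x\mapsto x^2$ is monotone, $c\prec c'$ iff $c^{\odot 2}\prec (c')^{\odot 2}$ when both sides have the same $\ell_2$ norm, so the two conventions are equivalent here.

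Finally, essential surjectivity is immediate: given any element of $V_0$ represented by a probability vector $(p_1,\ldots,p_d)$, the bipartite state $\sum_{i=1}^d \sqrt{p_i}\ket{i}\ket{i}\in\mathbb{C}^d\otimes\mathbb{C}^d$ maps to it under $\phi$. I do not expect any real obstacle in this proof; the only mildly delicate point is the bookkeeping that identifies the $\sim$ on $V_0$ (equality up to permutation and adding/removing zeros) with the equivalence induced by the LU preorder, and the analogous match for $\prec$. All nontrivial content is already packaged inside Theorem \ref{LOCCMAJ}.
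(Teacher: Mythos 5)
Your proposal follows essentially the same route as the paper: both define the map $\gamma:\mathcal{B}\to V_0$ sending a bipartite state to the multiset of squared Schmidt coefficients, observe that it is surjective and a monoid homomorphism, and cite Theorem \ref{LOCCMAJ} for the order-preservation/reflection in both directions. The paper merely records these three facts and stops; you spell out the monoid-homomorphism calculation and essential surjectivity in more detail, which is a fine addition.

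There is, however, a genuine error in your reconciliation of conventions. You write that ``because all coefficients are nonnegative and $x\mapsto x^2$ is monotone, $c\prec c'$ iff $c^{\odot 2}\prec (c')^{\odot 2}$ when both sides have the same $\ell_2$ norm.'' This is false, and indeed the left-hand side is not even well posed in general: majorization $c\prec c'$ requires $\sum_i c_i=\sum_i c'_i$, but fixing equal $\ell^2$ norms does not fix equal $\ell^1$ norms. Take $c=(1/\sqrt 2,1/\sqrt 2)$ and $c'=(1,0)$: then $\|c\|_2=\|c'\|_2=1$ and $c^{\odot 2}=(1/2,1/2)\prec(1,0)=(c')^{\odot 2}$, yet $\sum c_i=\sqrt 2\neq 1=\sum c'_i$, so $c\prec c'$ fails on the sum condition alone. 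No such bridging lemma is needed: the correct reading of Theorem \ref{LOCCMAJ} (it is a restatement of Nielsen's theorem from \cite{nielsen2010quantum}) is that $u$ and $v$ are the vectors of eigenvalues of the reduced density matrices, i.e.\ the \emph{squared} Schmidt coefficients, so they already sum to $1$ and $\gamma$ matches them directly. With that reading, and with your false side remark deleted, the rest of the argument goes through exactly as in the paper.
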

\begin{proof}
    Let $\gamma: \mathcal{B}\rightarrow V_0$ be the map sending any $(\ket{\psi}, \mathcal{H}_A, \mathcal{H}_B)\in \mathcal{B}$ to the multiset given by the squares of the Schmidt coefficients. Clearly $\gamma$ is surjective. It is easy to see that $\gamma$ is a monoid homomorphism, and thus, by Theorem \ref{LOCCMAJ}, $\gamma$ gives rise to both of the claimed equivalences.
\end{proof}

\begin{prop}
   \label{chain_of_equivalences}
    Let $V_1$ be the set of finite multisets of positive reals summing up to 1, $V$ the set of finite, nonempty multisets of positive reals, and $M_\mathbb{R}$ the set of finite, nonempty multisets of reals. Then the transformation theories  $\mathrm{EQ}_0$, $\mathrm{EQ}_1:=(V_1, \otimes, =)$, $\mathrm{EQ}_p:=(V, \otimes, \propto)$, $T_\mathbb{R}':=(M_\mathbb{R}, +, \propto)$ are all equivalent. (Addition of multisets is defined in the natural way. See Definition \ref{defn of addition of multisets}.)
    
\end{prop}

\begin{proof}
    Let $f_1:V_0\rightarrow V_1$ be the map that deletes any 0's from all multisets, $f_2:V\rightarrow V_1$ the map that normalises the sum of any multiset, $f_3:M_\mathbb{R}\rightarrow V$ the elementwise exponential map. It is straightforward to check that these maps give rise to the claimed equivalences.
\end{proof}

\begin{cor}
\label{ttlueq}
    The transformation theory of $\mathrm{LU}$ is equivalent to $T_\mathbb{R}'.$
\end{cor}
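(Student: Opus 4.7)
The plan is to chain together the two immediately preceding propositions via transitivity. The first proposition gives an equivalence between the transformation theory of $LU$ and $EQ_0$, constructed from the map $\gamma$ sending a bipartite state to the multiset of squares of its Schmidt coefficients (with Theorem \ref{LOCCMAJ}(ii) ensuring monotonicity in both directions). The second proposition gives, in particular, an equivalence between $EQ_0$ and $M_\mathbb{R}'$, assembled from $f_1$ (deleting zeros), $f_2^{-1}$ (viewing a normalized multiset inside $V$), and $f_3^{-1}$ (elementwise logarithm). Since equivalence of transformation theories is an equivalence relation, and in particular transitive, as established in the remark following Proposition \ref{eqtt}, composing these equivalences yields the desired one.

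If one wants an explicit witness, the induced map $\mathcal{B}\to M$ sends a bipartite state $(\ket{\psi},\mathcal{H}_A,\mathcal{H}_B)$ to the multiset of logarithms of its nonzero squared Schmidt coefficients. That is: apply $\gamma$ to obtain the multiset of squared Schmidt coefficients in $V_0$, then $f_1$ to delete the zeros, then the elementwise logarithm to land in $M$. Each arrow in this composition is an equivalence by the cited propositions, so the composite is too; monotonicity in both directions, essential surjectivity, and compatibility with the monoid operation are all preserved under composition of equivalences.

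There is no genuine obstacle here: the real content lives in the two preceding propositions, and the corollary is essentially a bookkeeping statement recording the chain $LU \simeq EQ_0 \simeq EQ1_1 \simeq EQ_p \simeq M_\mathbb{R}'$.
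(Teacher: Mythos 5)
Your proposal is correct and matches the paper's (implicit) argument exactly: the corollary follows by transitivity from the two preceding equivalence propositions, and your explicit composite map (squared Schmidt coefficients, delete zeros, elementwise logarithm) is the intended witness.
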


\begin{rmk}
\label{ttofpm}
    Let $\mathcal{P}$ denote the set of permutation unitaries, and let $S$ be the set of quantum state vectors not containing any 0 entries. Then it is easy to see that the transformation theory $\mathrm{PM}=(S,\otimes,\rightarrow_{\mathcal{P}})$ is equivalent to $(W,\otimes, \propto)$, where $W$ is the set of finite, nonempty multisets of nonzero complex numbers, and also to $(M_G,+,\propto)$, where $M_G$ is the set of finite, nonempty multisets over the (additive) group $G=\mathbb{R}\times \mathbb{R/Z}$. We can extend $S$ by allowing 0 entries in the vectors, but we will not go through the details.
    
    In particular, this shows that we should expect this transformation theory to behave similarly to the transformation theory of LU. Furthermore, if we were to also allow diagonal phase unitaries alongside permutation matrices, the resulting transformation theory would be equivalent to $\mathrm{EQ}_0$, and thus also to the TT of LU.
\end{rmk}

\section{Flexible catalysis of multisets}
\label{Flexible catalysis of multisets}

We saw in Section \ref{Transformation theories} that the TTs of LU and PM are equivalent to TTs of multisets over certain groups. So it makes sense to look at flexible catalysis in TTs of multisets separately. In Section \ref{Transformation theory of multisets over a group}, we give the necessary definitions. In Section \ref{Torsion-free groups}, we examine the case when the group is torsion-free. It turns out that the TTs arising from such groups behave in a very special way because torsion-free groups are orderable (see Propostion \ref{orderability}), and understanding them will help us when we turn to more general groups as well. Section \ref{General abelian groups} deals with general abelian groups, and it culminates in the proof of our main negative result about flexible catalysis of multisets. In Section \ref{Adding translation as a free operation}, we address a technical detail: we need to slightly modify our statements so that they also apply to the transformation theories of main interest, in which two multisets are equivalent if they are related by a translation. It is here that we reintroduce the TTs exactly as they already appeared in Section \ref{The transformation theories of LOCC, LU and PM}. Finally, in Section \ref{Advantages from flexibility}, we give evidence that flexible catalysis is stronger than traditional catalysis in some TTs of multisets, laying the foundations of the main positive results of this paper.

\subsection{The transformation theory of multisets over a group}
\label{Transformation theory of multisets over a group}

\begin{dfn} ($G$-multisets.)
    For an abelian group $(G, +)$, a $G$-multiset is a nonempty multiset with elements from $G$. We denote the collection of all finite $G$-multisets by $M_G$.
\end{dfn}

\begin{dfn}
\label{defn of addition of multisets}
    (Addition of multisets.)

    Let $(G, +)$ be an abelian group, and let $A,B$ be finite $G$-multisets. The sum $A+B$ is defined to be the multiset of all sums $a+b, a\in A, b\in B$, where the multiplicity of an element $x\in A+B$ is the number of ways it can be written as $x=a+b, a\in A, b\in B$. More precisely, if $m_S(e)$ denotes the multiplicity of $e$ in $S$, the sum of two multisets is defined so that
    \begin{equation}
        m_{A+B}(x)=\sum_{a+b=x} m_A(a)m_B(b).
    \end{equation}
\end{dfn}

\begin{prop}
    $T_G:=(M_G, +, =)$ is a symmetric transformation theory.
\end{prop}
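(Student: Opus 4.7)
The plan is to verify directly the four defining properties of a symmetric transformation theory: that $(M_G, +)$ is a monoid, that $=$ is a translation-invariant preorder on $M_G$, that $A+B \leq B+A$ for all $A, B \in M_G$, and that $\leq$ (here just $=$) is symmetric. Two of these are essentially automatic for the equality relation, so the real content lies in checking the monoid structure on $M_G$ and the commutativity of $+$.

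First, I would make the multiplicity formula for the sum explicit: writing $m_X(g)$ for the multiplicity of $g \in G$ in the $G$-multiset $X$, the definition of $A+B$ amounts to $m_{A+B}(x) = \sum_{a+b=x} m_A(a)\, m_B(b)$, with the sum ranging over $(a,b) \in G \times G$. From this formula, commutativity $A+B = B+A$ follows at once from abelian-ness of $G$ (swap $(a,b) \leftrightarrow (b,a)$), and associativity reduces to the observation that both $m_{(A+B)+C}(x)$ and $m_{A+(B+C)}(x)$ are equal to $\sum_{a+b+c=x} m_A(a)\, m_B(b)\, m_C(c)$. The singleton $\{0_G\}$ serves as a two-sided identity since $m_{A+\{0_G\}}(x) = m_A(x)$, and nonemptiness is preserved by $+$ because for any $a_0 \in A$ and $b_0 \in B$ one has $a_0 + b_0 \in A+B$.

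For the order-theoretic axioms, $=$ is reflexive and transitive, hence a preorder (in fact an equivalence relation); translation invariance, $A = B \Rightarrow A+C = B+C$, is immediate from the multiplicity formula; the inequality $A+B \leq B+A$ is in fact an equality by the commutativity established above; and symmetry $A = B \Rightarrow B = A$ is tautological. I do not anticipate any genuine obstacle: the only step needing mild care is the bookkeeping of multiplicities in the definition of $+$, which is presumably why the paper tends to mark analogous propositions as \emph{Straightforward}.
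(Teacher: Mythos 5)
Your verification is correct; the paper simply marks this proposition as \emph{Straightforward}, and the direct check of the monoid axioms, commutativity of $+$ via abelian-ness of $G$, and the trivial order-theoretic conditions for $=$ is exactly the argument the authors leave implicit.
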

\begin{proof}
    Straightforward.
\end{proof}

\begin{rmk}
    In this section we will mostly be working with transformation theories of the form $T_G=(M_G, +, =)$, which can be specified by fixing the group $G$. So we will simplify notation and write $\mathrm{Cat}_G$ instead of $\mathrm{Cat}_{T_G}$, and so on. Note also that the TTs of the form $(M_G, +, =)$ are not exactly what we are interested in. In the TTs of multisets equivalent to the TTs of LU, and PM, two multisets should be equivalent if they are equal up to translation (see Section \ref{The transformation theories of LOCC, LU and PM}). However, we start our analysis with these TTs because they are a bit easier to understand.
\end{rmk}

  \begin{prop} 
\label{extgroup}
  
      Let $H$ be a subgroup of the abelian group $G$. Then 
      
      (i) $\mathrm{Cat}_G\cap (M_H\times M_H)=\mathrm{Cat}_H,$

      (ii) $\mathrm{Ext}_G\cap (M_H\times M_H)=\mathrm{Ext}_H.$
  \end{prop}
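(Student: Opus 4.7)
The plan is to exploit the coset decomposition of $G$ modulo the subgroup $H$. Since the preorder on $M_G$ is just equality, the defining condition for $Cat_G$ is $A+C=B+C$ as multisets, and the defining condition for $Ext_G$ (via Proposition \ref{addingdiscardop}) is $A=B+D$ as multisets, with $C,D\in M_G$. In both parts the containment $\supseteq$ is immediate: any witness $C\in M_H$ for $(A,B)\in Cat_H$, or $D\in M_H$ for $(A,B)\in Ext_H$, is also a legitimate element of $M_G$ because $M_H\subseteq M_G$. The substantive direction is the reverse: when $A,B$ are supported in $H$, any $G$-witness can be exchanged for one supported in $H$.

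For (i), decompose $C$ across the cosets of $H$ in $G$: for each coset $g+H$, let $C_g$ denote the sub-multiset of $C$ consisting of the elements lying in $g+H$. Because every element of $A$ and $B$ lies in $H$, adding $A$ or $B$ to $C_g$ keeps the result inside $g+H$. Hence the equality $A+C=B+C$ splits coset-by-coset into $A+C_g=B+C_g$ for every $g$. Since $C$ is nonempty, some coset $g+H$ has $C_g\neq\emptyset$; translate by $-g$ and define $C':=C_g-g\in M_H$. Subtracting $g$ from both sides of $A+C_g=B+C_g$ then gives $A+C'=B+C'$, exhibiting $(A,B)\in Cat_H$.

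For (ii), the same coset decomposition applies directly to $D$. From $A=B+D$, the coset-$g$ slice of the right-hand side equals $B+D_g$, while the coset-$g$ slice of $A$ is empty whenever $g\notin H$, since $A$ is supported in $H$. Because $B$ is nonempty, $B+D_g=\emptyset$ forces $D_g=\emptyset$, so $D$ has no elements outside $H$. Hence $D\in M_H$, and the same $D$ witnesses $(A,B)\in Ext_H$.

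I do not expect any real obstacle: the whole argument rests on the single observation that whenever $A$ is supported in $H$, the operation $A+X$ respects the coset decomposition of $X$. The minor bookkeeping checks---that the constructed $C'$ and $D$ are nonempty elements of $M_H$---follow immediately from the nonemptiness of the $G$-witnesses we started with.
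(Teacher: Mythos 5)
Your proof is correct, and for part (i) it takes a genuinely different route than the paper. The paper builds a graph on the multiset $C$, connecting $c_1,c_2$ whenever $a+c_1=b+c_2$ for some $a\in A,b\in B$, and then restricts the equation $A+C=B+C$ to a single connected component $C'$, noting that connected elements differ by an element of $H$. You instead observe directly that since $A,B$ are supported in $H$, the equation $A+C=B+C$ decomposes along the (coarser) partition of $C$ by $H$-cosets, because $A+C_g$ and $B+C_g$ both live entirely in $g+H$ and distinct cosets are disjoint. Your decomposition is cleaner: it skips the graph apparatus entirely and makes the disjointness argument transparent, whereas the paper's connected-component partition is finer than necessary (components sit inside cosets, but that refinement buys nothing here). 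For part (ii) the two arguments are essentially the same observation---every element of $D$ is a difference of an element of $A$ and an element of $B$, hence lies in $H$---phrased by you as a coset-slicing argument and by the paper as a one-line direct computation. One small redundancy: in (ii) the coset machinery is overkill, since the paper's one-liner (for $d\in D$, pick any $b\in B$, then $b+d\in A$ so $d=(b+d)-b\in H$) is already complete; but your version is still correct.
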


\begin{rmk}
    This result says that allowing a catalyst or a discarded multiset to come from a larger group $G$ will not enable more transformations of multisets over $H$. Theorem \ref{catgdonthelp} states a similar result for specific groups $G,H$.
\end{rmk}

  \begin{proof}
      (i) Clearly $\mathrm{Cat}_G\cap (M_H\times M_H) \supseteq \mathrm{Cat}_H$, so we only need to show $\mathrm{Cat}_G\cap (M_H\times M_H) \subseteq \mathrm{Cat}_H$. Take $(A,B)\in \mathrm{Cat}_G\cap (M_H\times M_H)$, and assume that $C\in M_G$ is such that $A+C=B+C$.

    Notice that if $a+c_1=b+c_2$ for some  $a\in A, b\in B, c_1,c_2\in C$, then $c_1-c_2=b-a\in H$. Let us draw a graph with vertex set $C$ (with multiplicities), and connect two vertices corresponding to $c_1,c_2\in C$ iff there exist $a\in A,b\in B$ such that $a+c_1=b+c_2$ or $a+c_2=b+c_1$. Let $C'$ be the connected component of some $c\in C$. Then for any $x\in C'$, $x-c\in H$, i.e. $C'-c$ has elements from $H$. But we also have $A+C'=B+C'$, since $A+C=B+C$ and no vertex in $C\backslash C'$ is connected to any vertex in $C'$. Hence $A+(C'-c)=B+(C'-c)$, and thus $(A,B)\in \mathrm{Cat}_H$.

    (ii) Again, one inclusion is clear. To see that $\mathrm{Ext}_G\cap (M_H\times M_H)\subseteq \mathrm{Ext}_H$, take $A,B\in M_H$ and $D\in M_G$ such that $A=B+D$. Then any element of $D$ can be written as a difference of an element of $A$ and an element of $B$, and is thus an element of $H$.
      
  \end{proof}
  
\subsection{Torsion-free groups}
\label{Torsion-free groups}

It turns out that the torsion subgroup plays an important role in the catalytic structure of transformation theories of the form $T_G=(M_G, +, =)$. Finitely generated torsion-free groups have a simple structure, and they admit a translation-invariant total order (see Proposition \ref{orderability}), which makes it easier to analyse catalytic phenomena. They are also relevant because the transformation theory of LU is equivalent to $T_\mathbb{R}':=(M_\mathbb{R}, +, \propto)$ by Proposition \ref{chain_of_equivalences}, and $\mathbb{R}$ is torsion-free. Therefore, we start our analysis by assuming that $G$ is torsion-free, and only then will we move on to abelian groups in general.

\begin{prop}
\label{orderability}
    Any finitely generated torsion-free abelian group is (bi-) orderable. That is, it admits a translation-invariant total order.
\end{prop}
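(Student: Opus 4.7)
The plan is to reduce the statement to a concrete case via the structure theorem and then exhibit an explicit order. First I would invoke the classification of finitely generated abelian groups: any such group $G$ is isomorphic to $\mathbb{Z}^n \oplus T$, where $T$ is the finite torsion subgroup. Since $G$ is assumed to be torsion-free, $T$ must be trivial, so $G \cong \mathbb{Z}^n$ for some $n \in \mathbb{N}$. Thus it suffices to construct a translation-invariant total order on $\mathbb{Z}^n$.

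For this, I would use the lexicographic order: declare $(a_1,\ldots,a_n) < (b_1,\ldots,b_n)$ iff there exists some $k \in \{1,\ldots,n\}$ with $a_i = b_i$ for all $i < k$ and $a_k < b_k$ (in the usual order on $\mathbb{Z}$). The verification has two easy parts. Totality: given distinct $(a_i), (b_i) \in \mathbb{Z}^n$, let $k$ be the smallest index where they differ; then the order of the tuples is determined by the order of $a_k$ and $b_k$ in $\mathbb{Z}$. Translation invariance: if the $k$ above witnesses $(a_i) < (b_i)$, then since addition in $\mathbb{Z}^n$ is componentwise, $a_i + c_i = b_i + c_i$ for $i < k$ and $a_k + c_k < b_k + c_k$, so the same $k$ witnesses $(a_i + c_i) < (b_i + c_i)$ for any $(c_i) \in \mathbb{Z}^n$.

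There is essentially no hard step: the structure theorem does all the heavy lifting, and lexicographic order handles the rest. The only mild subtlety is ensuring the isomorphism $G \cong \mathbb{Z}^n$ is an isomorphism of groups (so that pulling back the lexicographic order preserves translation invariance), but this is immediate from the statement of the structure theorem. Note that since $G$ is abelian, left- and right-translation invariance coincide, so producing a translation-invariant total order automatically yields bi-orderability.
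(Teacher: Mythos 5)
Your proof is correct and takes the same route as the paper: invoke the structure theorem to reduce to $\mathbb{Z}^n$, then pull back the lexicographic order. You just spell out the (easy) verifications in more detail than the paper does.
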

\begin{rmk}
    In fact, a much stronger result is known: an abelian group is bi-orderable if and only if it is torsion-free \cite{Levi1942}. However, the statement of the proposition will be sufficient for our purposes.
\end{rmk}
\begin{proof}
    By the structure theorem for finitely generated abelian groups, there is some $n\in \mathbb{N}$ such that our group is isomorphic to $\mathbb{Z}^n$. But the lexicographical order is a translation-invariant total order on $\mathbb{Z}^n$.
\end{proof}

\begin{prop}
    \label{cancellation}
    Let $A,B,C$ be finite $G$-multisets for a torsion-free abelian group $G$. If $A+C=B+C$, then $A=B$. In other words, $\mathrm{Cat}_G=\mathrm{Tr}_G$.
\end{prop}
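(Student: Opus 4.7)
The plan is to exploit Proposition \ref{orderability} by reducing to a finitely generated subgroup of $G$. Since $A$, $B$, $C$ are all finite multisets, the subgroup $H\leq G$ generated by the (finitely many) elements appearing in $A$, $B$, $C$ is finitely generated and torsion-free, so by Proposition \ref{orderability} it admits a translation-invariant total order $<$. All elements of $A+C$ and $B+C$ lie in $H$, so the order is applicable to the whole argument.

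Next I would pick out the maximum elements $a_{\max},b_{\max},c_{\max}$ of $A$, $B$, $C$ with respect to $<$. Because the order is translation invariant, for any $a\in A$ and $c\in C$ we have $a+c\leq a_{\max}+c_{\max}$, with equality if and only if $a=a_{\max}$ and $c=c_{\max}$. Hence $a_{\max}+c_{\max}$ is the unique maximum of $A+C$, and its multiplicity there is $m_A(a_{\max})\cdot m_C(c_{\max})$, where $m_X(x)$ denotes the multiplicity of $x$ in the multiset $X$. The analogous statement holds for $B+C$. From the hypothesis $A+C=B+C$, comparing maxima gives $a_{\max}+c_{\max}=b_{\max}+c_{\max}$, hence $a_{\max}=b_{\max}$; and comparing multiplicities at this top element forces $m_A(a_{\max})=m_B(b_{\max})$.

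Then I would run an induction on the total multiplicity $|A|$, with the base case $|A|=1$ being settled by the previous paragraph alone. For the inductive step, let $A'$ be $A$ with one copy of $a_{\max}$ removed, and similarly form $B'$. Writing Minkowski sums distributively, $A+C=A'+C+(\{a_{\max}\}+C)$ and $B+C=B'+C+(\{a_{\max}\}+C)$ as multisets. Cancelling the common summand $\{a_{\max}\}+C$ (legitimate because the multiplicities are nonnegative integers) yields $A'+C=B'+C$. By the inductive hypothesis $A'=B'$, and reattaching the removed copy of $a_{\max}$ gives $A=B$.

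The main obstacle is keeping the bookkeeping of multiplicities straight: it is tempting to simply say ``the maxima agree so strip them off'', but one must justify that $\{a_{\max}\}+C$ appears with the exact multiplicity structure inside $A+C$ that permits cancellation. Translation invariance of $<$ does most of this work, since it guarantees that the maximum (together with its multiplicity) behaves multiplicatively under Minkowski sum. A slicker alternative bypasses the induction entirely by encoding multisets as elements of the integral group ring $\mathbb{Z}[H]$, where multiset sum becomes ring multiplication; since $H\cong \mathbb{Z}^n$ by the structure theorem, $\mathbb{Z}[H]$ is a Laurent polynomial ring, hence an integral domain, so $\hat A\hat C=\hat B\hat C$ with $\hat C\neq 0$ immediately gives $\hat A=\hat B$, i.e.\ $A=B$.
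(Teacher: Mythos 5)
Your main argument is the same as the paper's: restrict attention to the finitely generated (hence torsion-free and, by Proposition~\ref{orderability}, bi-orderable) subgroup generated by the supports of $A,B,C$, compare maximal elements of $A+C=B+C$, and induct on $|A|$. You supply somewhat more bookkeeping than the paper does --- in particular you make explicit why the top of $A+C$ is uniquely $a_{\max}+c_{\max}$ with multiplicity $m_A(a_{\max})\cdot m_C(c_{\max})$, and you justify the descent step $A'+C=B'+C$ via distributivity of Minkowski sum over multiset union together with cancellativity of $\mathbb{Z}_{\geq 0}$ --- but the underlying idea is identical. Your closing remark about passing to the group ring $\mathbb{Z}[H]\cong \mathbb{Z}[x_1^{\pm1},\dots,x_n^{\pm1}]$ and invoking that it is an integral domain is a genuinely different (and slicker) route that the paper does not take; it trades the ordering argument for a known structural fact, at the cost of being less self-contained. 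Both arguments are sound.
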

\begin{proof}
    First observe that by finiteness of $A,B,C$, we may assume that $G$ is finitely generated. Thus, by Proposition \ref{orderability}, $G$ is orderable. Also, $A$ and $B$ must have the same size, say $d$. We will use induction on $d$. By looking at the maximal element of $A+C=B+C$, it is clear that the maximal elements of $A$ and $B$ must agree. Therefore, we can remove this element from $A$ and $B$, and then apply the induction hypothesis for the remaining sets with the same $C$ as before.
\end{proof}
\begin{prop} 
\label{halving}
    Let $A,B$ be finite $G$-multisets for a torsion-free abelian group $G$. If $A+A=B+B$, then $A=B$.
\end{prop}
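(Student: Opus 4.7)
The plan is to algebraise the problem in the integer group ring $\mathbb{Z}[G]$. For a finite $G$-multiset $M$ with multiplicity function $m_g$, let $\widetilde{M}=\sum_{g}m_g\cdot[g]\in\mathbb{Z}[G]$; then multiset addition corresponds to multiplication, $\widetilde{A+B}=\widetilde{A}\,\widetilde{B}$, so the hypothesis $A+A=B+B$ becomes $\widetilde{A}^2=\widetilde{B}^2$, which factors as $(\widetilde{A}-\widetilde{B})(\widetilde{A}+\widetilde{B})=0$. Since $A$ and $B$ are nonempty, every element of $A\cup B$ has strictly positive coefficient in $\widetilde{A}+\widetilde{B}$, so $\widetilde{A}+\widetilde{B}\neq 0$. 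Hence, if $\mathbb{Z}[G]$ is an integral domain, the factorisation forces $\widetilde{A}=\widetilde{B}$, i.e.\ $A=B$.

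The crux is therefore to show that $\mathbb{Z}[G]$ has no zero divisors when $G$ is torsion-free abelian, which I would do by a leading-term argument entirely in the spirit of Proposition \ref{cancellation}. By finiteness of $A$ and $B$ I may pass to the subgroup they generate, which is finitely generated and still torsion-free, and so by Proposition \ref{orderability} carries a translation-invariant total order $<$. For nonzero $f\in\mathbb{Z}[G]$, the support is a finite nonempty subset of $G$ with a unique maximum $\mathrm{lead}(f)$; given nonzero $f,g\in\mathbb{Z}[G]$, translation invariance of $<$ implies that $\{x+y:x\in\mathrm{supp}(f),\ y\in\mathrm{supp}(g)\}$ attains its maximum uniquely at $(\mathrm{lead}(f),\mathrm{lead}(g))$. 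Therefore the coefficient of $\mathrm{lead}(f)+\mathrm{lead}(g)$ in $fg$ equals the product of the leading coefficients of $f$ and $g$, which is nonzero, and we conclude $fg\neq 0$.

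The main conceptual step is the identity $(\widetilde{A}-\widetilde{B})(\widetilde{A}+\widetilde{B})=\widetilde{A}^2-\widetilde{B}^2$, which turns the squared equation into a product that can be attacked by ordinary cancellation-type ideas. The only technical subtlety is verifying that the leading-term maximum is indeed uniquely achieved in the product; this is precisely the translation-invariance of the bi-order and is handled in the same way as the top-element step in the proof of Proposition \ref{cancellation}, so I do not anticipate any further serious obstacle.
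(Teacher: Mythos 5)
Your argument is correct, and it is genuinely different from the paper's. The paper works combinatorially: after ordering $G$ (Proposition \ref{orderability}), it lists $A$ and $B$ in decreasing order, takes the first index $k$ where they disagree, and derives a contradiction by counting pairs $(i,j)$ with $c_i+c_j$ above the threshold $c_1+c_k$, comparing the two count sets and showing a strict inclusion that contradicts $A+A=B+B$. You instead embed $(M_G,+)$ into the multiplicative monoid of $\mathbb{Z}[G]$ via $M\mapsto\widetilde M$, turning the hypothesis into $\widetilde A^2=\widetilde B^2$, factor as $(\widetilde A-\widetilde B)(\widetilde A+\widetilde B)=0$, observe $\widetilde A+\widetilde B\neq 0$ from nonnegativity of coefficients, and invoke that $\mathbb{Z}[G]$ is a domain (proved by the standard leading-term argument, itself resting on the same Proposition \ref{orderability}). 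Both proofs are driven by the bi-orderability of $G$, but your algebraisation has two advantages worth noting. First, it subsumes Proposition \ref{cancellation} at no extra cost: $\widetilde A\widetilde C=\widetilde B\widetilde C$ and $\widetilde C\neq 0$ give $\widetilde A=\widetilde B$ directly. Second, it generalises cleanly to Proposition \ref{divbyn}: from $\widetilde A^n=\widetilde B^n$ you factor $\widetilde A^n-\widetilde B^n=(\widetilde A-\widetilde B)\sum_{i=0}^{n-1}\widetilde A^{i}\widetilde B^{n-1-i}$, and the cofactor has nonnegative coefficients and is nonzero (its leading coefficient is a positive sum of products of leading coefficients), so $\widetilde A=\widetilde B$ again. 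The paper's route is more elementary and self-contained; yours packages the combinatorics into a reusable algebraic lemma and makes the positivity of $A+B$ do explicit work, which is the structural reason the result holds.
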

\begin{proof}
        We may assume by finiteness of $A,B$ that $G$ is finitely generated. Order $G$ in accordance with Proposition \ref{orderability}. Let $c$ and $d$ be the ordered lists (``vectors") of elements of $A$ and $B$, respectively, such that the coordinates of $c,d$ are in decreasing order. Clearly $c$ and $d$ must have the same dimension $n$.  Then we need to show that $c=d$. Assume for a contradiction that $c\neq d$, and let $k$ be minimal such that $c_k\neq d_k$. Assume without loss of generality that $c_k>d_k$. Then
        $$\{(i,j)\in [n]^2: c_i+c_j \geq c_1+c_k \} \supset \{(i,j)\in [k-1]^2: c_i+c_j \geq c_1+c_k \} \cup \{(1,k)\}$$
        and
           $$\{(i,j)\in [n]^2: d_i+d_j \geq c_1+c_k \} = \{(i,j)\in [k-1]^2: d_i+d_j \geq c_1+c_k \}$$
           since the largest $d_i+d_j$ subject to $(i,j)\notin [k-1]^2$ is $d_1+d_k<c_1+c_k$.
           But by minimality of $k$, the sets $\{(i,j)\in [k-1]^2: c_i+c_j \geq c_1+c_k \} $ and $\{(i,j)\in [k-1]^2: d_i+d_j \geq c_1+c_k \}$ are equal, and hence 
            $$\{(i,j)\in [n]^2: c_i+c_j \geq c_1+c_k \}\supsetneq \{(i,j)\in [n]^2: d_i+d_j \geq c_1+c_k \},$$
            a contradiction.
    \end{proof}

    \begin{prop} 
\label{divbyn}
    Let $A,B$ be finite $G$-multisets for a torsion-free abelian group $G$. If $nA=nB$ (where $nX$ denotes $X+X+\ldots +X$ with $n$ terms in the sum), then $A=B$.
\end{prop}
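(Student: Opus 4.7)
The plan is to translate the multiset equation $nA=nB$ into an identity of Laurent polynomials with nonnegative coefficients, where the $n$-th power can be unwound by evaluating at positive real arguments. First, because $A$ and $B$ are finite, every element appearing in them lies in some finitely generated subgroup $G'\leq G$, which is still torsion-free; by the structure theorem invoked in Proposition \ref{orderability}, $G'\cong \mathbb{Z}^k$ for some $k\geq 0$, so the problem takes place inside $M_{G'}$. I would then encode each multiset $X\in M_{G'}$ by the Laurent polynomial
\[
P_X(T_1,\ldots,T_k) \;:=\; \sum_{x\in X} T_1^{x_1}\cdots T_k^{x_k} \;\in\; \mathbb{Z}[T_1^{\pm 1},\ldots,T_k^{\pm 1}],
\]
where repeated elements of $X$ contribute repeatedly to the sum. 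Multiset sum corresponds exactly to polynomial product, so the hypothesis becomes the algebraic identity $P_A^n = P_B^n$ in this Laurent polynomial ring.

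Next, I would substitute any $r=(r_1,\ldots,r_k)\in (\mathbb{R}_{>0})^k$. Since $P_A$ and $P_B$ have only nonnegative integer coefficients and each $r_i$ is positive, $P_A(r)$ and $P_B(r)$ are nonnegative reals, so $P_A(r)^n = P_B(r)^n$ forces $P_A(r) = P_B(r)$ by uniqueness of the nonnegative $n$-th root. Therefore $P_A - P_B$ is a Laurent polynomial vanishing on the nonempty open set $(\mathbb{R}_{>0})^k$; after multiplying by a sufficiently high monomial it becomes an ordinary polynomial vanishing on a nonempty open subset of $\mathbb{R}^k$, and is hence identically zero. Since distinct elements of $\mathbb{Z}^k$ correspond to linearly independent monomials, this gives $A=B$ as multisets, which is precisely what we want.

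I expect the main obstacle to be recognising the correct reformulation rather than any of the individual steps: once the algebraic encoding is in place, everything reduces to the elementary uniqueness of positive $n$-th roots and the identity theorem for polynomials, sidestepping the delicate leading-term bookkeeping from the proof of Proposition \ref{halving}, which does not generalise cleanly from $n=2$ to arbitrary $n$. A close variant of the same idea would instead invoke that $\mathbb{Z}[T_1^{\pm 1},\ldots,T_k^{\pm 1}]$ is a UFD whose only roots of unity are $\pm 1$ (because $\mathbb{Z}^k$ is torsion-free), and then use nonnegativity of the coefficients of $P_A$ and $P_B$ to rule out the sign $-1$; the positive-real substitution trick is just a hands-on implementation of this observation that avoids quoting unique factorisation.
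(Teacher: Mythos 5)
Your proof is correct, but it takes a genuinely different route from the paper's. The paper proves the result by generalizing the ordering argument of Proposition \ref{halving}: reduce to $G\cong\mathbb{Z}^k$, pick a bi-order, list $A$ and $B$ in decreasing order as $c$ and $d$, take $k$ minimal with $c_k\neq d_k$, and compare the number of $n$-tuples $(i_1,\ldots,i_n)$ with $c_{i_1}+\cdots+c_{i_n}\geq (n-1)c_1+c_k$ against the analogous count for $d$; the tuple $(1,\ldots,1,k)$ tips the balance. (Contrary to your remark, this bookkeeping does go through for general $n$ essentially verbatim, which is why the paper calls it ``a straightforward generalization.'') You instead encode the multisets as Laurent polynomials $P_A,P_B$ with nonnegative integer coefficients, observe that $nA=nB$ becomes $P_A^n=P_B^n$, and extract $P_A=P_B$ either by evaluating at positive reals and taking unique nonnegative $n$th roots (then invoking Zariski density of the positive orthant), or by unique factorization in $\mathbb{Z}[T_1^{\pm1},\ldots,T_k^{\pm1}]$ together with the observation that its only torsion units are $\pm1$ because $\mathbb{Z}^k$ is torsion-free. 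Both of your variants are sound; note that nonemptiness of the multisets, built into the paper's definition, is what rules out $P_A=-P_B$ in the UFD version and guarantees strict positivity in the evaluation version. The trade-off: the paper's argument is elementary, self-contained, and reuses the machinery of Proposition \ref{orderability}; yours is more structural and, pleasantly, anticipates the polynomial/multiset dictionary that the paper itself deploys later (Proposition \ref{polys_to_multisets}, Example \ref{advantage}, and the Appendix), so it would sit naturally alongside that material and also subsumes Propositions \ref{cancellation} and \ref{halving} in one stroke.
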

\begin{proof}
    This is a straightforward generalization of Proposition \ref{halving}.
\end{proof}

 \begin{rmk}
        \cite{nathanson1978representation} proves a similar result for sets of natural numbers.
    \end{rmk}

\begin{lem}
\label{flexible_multiset_catalysis_is_useless_in_a_torsion_free_group}
    Let $G$ be a torsion-free abelian group. Then $(A,B)\in \mathrm{Cat}_G^{(\mathrm{fin})}\implies A=B$.

\end{lem}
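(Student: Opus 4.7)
The plan is to exploit the finiteness of the catalyst set to extract a \emph{cyclic} chain of equalities and then collapse it using the divisibility/cancellation results already proved for torsion-free groups.

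First, I would unpack the definition. Saying $(A,B) \in Cat_G^{(fin)}$ means there is a finite nonempty set $S \subseteq M_G$ such that for every $C \in S$ there exists $C' \in S$ with $A+C = B+C'$ (recall $\leq$ is just $=$ in the transformation theory $(M_G,+,=)$). Equivalently, by the earlier proposition identifying $Cat_T^{(\leq n)}$ with $\bigcup_{k\leq n} Cat_T^{(k)}$, there exist $C_0, C_1, \ldots, C_n = C_0$ in $M_G$ with $A + C_{i-1} = B + C_i$ for every $i \in [n]$. The key point is that finiteness of $S$ forces a directed cycle in the obvious graph on $S$ (each vertex has out-degree at least one), which is precisely what produces such a cyclic chain.

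Next, I would sum all $n$ equalities. Using Proposition \ref{orderdoesntmatter} to reorder freely, the sum becomes
\begin{equation}
nA + (C_0 + C_1 + \ldots + C_{n-1}) = nB + (C_1 + C_2 + \ldots + C_n).
\end{equation}
Since $C_n = C_0$, the bracketed multisets on both sides are identical. Cancellation in torsion-free abelian groups (Proposition \ref{cancellation}) then gives $nA = nB$.

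Finally, I would invoke Proposition \ref{divbyn}, which says that in a torsion-free abelian group, $nA = nB$ implies $A = B$. This yields the desired conclusion. I do not anticipate any real obstacle: the only subtlety is being careful that the cycle genuinely exists (handled by finiteness of $S$) and that both cancellation and the ``divide by $n$'' step apply to multisets over an arbitrary, not necessarily finitely generated, torsion-free abelian group — but this is already absorbed into Propositions \ref{cancellation} and \ref{divbyn}, whose proofs reduce to the finitely generated case by noting that only finitely many elements of $G$ occur.
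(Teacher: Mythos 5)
Your proof is correct and follows essentially the same route as the paper: you inline the content of Proposition \ref{flexvsmulticop} (summing the cyclic chain to obtain $nA + \sum_i C_i = nB + \sum_i C_i$) and then apply Propositions \ref{cancellation} and \ref{divbyn}, whereas the paper quotes \ref{flexvsmulticop} directly to get $(nA,nB)\in Cat_G$ before invoking the same two cancellation propositions. The mathematical content is identical.
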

\begin{proof}
By Proposition \ref{Union_of_cyclic_classes}, $(A,B)\in \mathrm{Cat}_G^{(n)}$ for some $n\in \mathbb{Z}_{>0}$. Therefore, by Proposition, \ref{flexvsmulticop} $(nA,nB)\in \mathrm{Cat}_G$. Then, by Proposition \ref{cancellation}, we have $nA=nB$. Finally, by Proposition \ref{divbyn}, we have $A=B$.
\end{proof}

\begin{eg}
    Take the example $G=\mathbb{R}$. Then Lemma \ref{flexible_multiset_catalysis_is_useless_in_a_torsion_free_group} says that flexible catalysis with a finite catalytic set does not enable any new transformations in $T_\mathbb{R}$. And by Corollary \ref{ttlueq}, $T_\mathbb{R}$ is almost equivalent to the TT of LU.
\end{eg}

\subsection{General abelian groups}
\label{General abelian groups}

\begin{prop}
\label{fincatprojeq}
    Let $G$ be a finitely generated abelian group. Assume that $G=T\times F$, where $T$ is arbitrary and $F$ is torsion-free. Let $\pi: G\rightarrow F$ be the projection map. Then $(A,B)\in \mathrm{Cat}_G^{(\mathrm{fin})}\implies \pi (A)=\pi (B)$.
\end{prop}
\begin{proof}
    Since $\pi$ is a group homomorphism, it sends the defining equations of $\mathrm{Cat}_G^{(\mathrm{fin})}$ to those of $\mathrm{Cat}_F^{(\mathrm{fin})}$. Therefore, $(A,B)\in \mathrm{Cat}_G^{(\mathrm{fin})}\implies (\pi(A),\pi(B)) \in \mathrm{Cat}_F^{(\mathrm{fin})}$. Then we get $\pi(A)=\pi(B)$ by Lemma \ref{flexible_multiset_catalysis_is_useless_in_a_torsion_free_group}.
\end{proof}

\begin{prop}
\label{projections_equal_implies_single_cat}
    Let $G=T\times F$ be a finitely generated abelian group, where $T$ is finite. Let $\pi: G\rightarrow F$ be the projection map. Then $\pi (A)=\pi (B) \implies (A,B)\in \mathrm{Cat}_G(T)$.
\end{prop}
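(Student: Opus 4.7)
The plan is to interpret the catalyst $T$ appearing on the right-hand side of $Cat_G(T)$ as the $G$-multiset containing each element of the (finite) torsion subgroup exactly once, and then to verify directly that $A+T=B+T$ by a single multiplicity computation. Using the decomposition $G=T\times F$, I will write every element of $G$ uniquely as a pair $(t,f)$ with $t\in T$ and $f\in F$, and view any $X\in M_G$ as a multiset of such pairs; under this convention, $\pi$ is just the projection onto the $F$-coordinate.

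The main step is the following count. Fix an arbitrary $(\tau,\phi)\in G$. By the definition of multiset addition, the multiplicity of $(\tau,\phi)$ in $A+T$ equals the number of pairs $\bigl((t_i,f_i),s\bigr)$, counted with multiplicity in $A$ and $T$ respectively, satisfying $(t_i+s,f_i)=(\tau,\phi)$. Since each $s\in T$ appears with multiplicity one in the catalyst and $T$ is a group, every occurrence $(t_i,\phi)$ in $A$ contributes via the unique choice $s=\tau-t_i\in T$. Hence the multiplicity of $(\tau,\phi)$ in $A+T$ is exactly the multiplicity of $\phi$ in $\pi(A)$. The identical argument gives the multiplicity of $(\tau,\phi)$ in $B+T$ as the multiplicity of $\phi$ in $\pi(B)$.

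Since by hypothesis $\pi(A)=\pi(B)$, these multiplicities agree for every $(\tau,\phi)\in G$, so $A+T=B+T$ and therefore $(A,B)\in Cat_G(T)$, as required. There is no real obstacle in this argument: the only subtle point is the notational convention that $T$ on the right-hand side denotes the torsion subgroup viewed as a finite element of $M_G$, which is possible precisely because $T$ is assumed finite. Once this is fixed, the proof reduces to a transparent bijection between the contributions to each ``fiber'' of $A+T\to\pi(A)$ and the corresponding fiber of $B+T\to\pi(B)$, with no further case analysis needed.
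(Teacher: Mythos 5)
Your proof is correct and uses essentially the same idea as the paper: both rest on the observation that $A+T$ depends only on $\pi(A)$, because adding the full torsion subgroup $T$ (as a multiset) washes out the $T$-coordinate of each element. The paper phrases this more compactly via the identity $x+T=\pi(x)+T$ for a single element $x\in G$, whereas you make the same point by an explicit multiplicity count; the content is identical.
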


\begin{proof}
    With a slight abuse of notation we can treat $T$ as a subgroup of $G$, and hence also as a $G$-multiset. Then we have $A+T=\pi(A)+T=\pi(B)+T=B+T$. (Where we used the fact that for any $x\in G$, $x+T=\pi(x)+T$.)
\end{proof}

\begin{thm}
    For any abelian group $G$, we have $\mathrm{Cat}_G^{(\mathrm{fin})}=\mathrm{Cat}_G$.
\end{thm}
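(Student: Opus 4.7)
The plan is to deduce this from Propositions \ref{fincatprojeq} and \ref{projections_equal_implies_single_cat} after a routine reduction to the finitely generated case. The inclusion $Cat_G \subseteq Cat_G^{(fin)}$ is immediate: given $(A,B) \in Cat_G$ with catalyst $C$, the singleton $S = \{C\}$ witnesses $(A,B) \in Cat_G^{(fin)}$. All the work is in the reverse direction.

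So I would take $(A,B) \in Cat_G^{(fin)}$ witnessed by a finite set $S = \{C_1, \ldots, C_k\} \subseteq M_G$ of finite multisets, and pass to the subgroup $H \leq G$ generated by the (finite) collection of all group elements appearing in $A$, $B$, or any of the $C_i$. Each of the defining equations $A + C_i = B + C_{j(i)}$ then lives inside $M_H$, so $(A,B) \in Cat_H^{(fin)}$, and $H$ is finitely generated.

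By the structure theorem for finitely generated abelian groups, $H \cong F \times T$ with $F$ torsion-free and $T$ finite. Applying Proposition \ref{fincatprojeq} inside $H$ yields $\pi(A) = \pi(B)$, where $\pi : H \to F$ is the projection. Proposition \ref{projections_equal_implies_single_cat} --- which applies precisely because $T$ is finite, so $T$ is a legal $H$-multiset catalyst --- then gives $(A,B) \in Cat_H(T) \subseteq Cat_H$. Any single-catalyst witness valid in $H$ is a fortiori valid in $G$, so $(A,B) \in Cat_G$, as required.

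The proof presents no real obstacle: the heavy lifting has already been done in Lemma \ref{flexible_multiset_catalysis_is_useless_in_a_torsion_free_group} (handling the torsion-free part) and in the short proof of Proposition \ref{projections_equal_implies_single_cat} (collapsing the torsion part into the single catalyst $T$). The one point that warrants attention is making sure the finiteness hypotheses line up: the finiteness of the witnessing set $S$, together with the finiteness of each multiset in $S$, is what allows the reduction to a finitely generated $H$, and the finiteness of the torsion subgroup of a finitely generated abelian group is in turn what makes Proposition \ref{projections_equal_implies_single_cat} applicable inside $H$.
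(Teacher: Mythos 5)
Your proposal is correct and follows essentially the same route as the paper: reduce to a finitely generated subgroup, split it as $T\times F$ with $T$ the (finite) torsion subgroup and $F$ torsion-free, invoke Proposition~\ref{fincatprojeq} to get $\pi(A)=\pi(B)$, then Proposition~\ref{projections_equal_implies_single_cat} to obtain the single catalyst $T$. You are slightly more explicit than the paper in spelling out the passage to the subgroup $H$ generated by the elements of $A$, $B$, and the $C_i$, and in checking that the resulting single-catalyst witness in $M_H$ transfers back to $M_G$; the paper compresses this into ``we may assume that $G$ is finitely generated.''
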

\begin{proof}
    Only the $\mathrm{Cat}_G^{(\mathrm{fin})}\subseteq \mathrm{Cat}_G$ direction is nontrivial. Let $(A,B)\in \mathrm{Cat}_G(S)$ for a finite $S\in S_G$. Since $S$ is finite, and each element of $S$ is a finite $G$-multiset, we may assume that $G$ is finitely generated. Write $G=T\times F$, where $T$ is the torsion group of $G$ and $F$ is torsion-free. Let $\pi: G\rightarrow F$ be the projection map. Then $(A,B)\in \mathrm{Cat}_G^{(\mathrm{fin})}\implies \pi (A)=\pi (B)$, by Proposition \ref{fincatprojeq}. But by Proposition \ref{projections_equal_implies_single_cat}, this means that $(A,B)\in \mathrm{Cat}_G(T)\subseteq \mathrm{Cat}_G$.
\end{proof}

\subsection{Adding translation as a free operation}
\label{Adding translation as a free operation}

In this section, we will extend our previous results to transformation theories of the form $T_G'=(M_G, +, \propto)$, where $\propto$ denotes equality up to translation by a group element. We will only work with transformation theories of this form, so we will refer to the transformation theory by naming the underlying group $G$.

\begin{prop}
    \label{cancpv}
    Let $A,B,C$ be finite $G$-multisets for a torsion-free abelian group $G$, and $g\in G$. If $A+C+g=B+C$, then $A+g=B$. In particular, $(A,B)\in \mathrm{Cat}_G \implies A\propto B$.
\end{prop}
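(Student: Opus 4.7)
The plan is to reduce Proposition \ref{cancpv} to the already-proved Proposition \ref{cancellation} (the $g=0$ case) by absorbing the translation into one of the multisets. Specifically, for a finite $G$-multiset $A$ and a group element $g$, the shifted multiset $A+g := \{a+g : a\in A\}$ is again a finite $G$-multiset, and one has the identity $(A+g)+C = A+C+g$ of multisets, where the right-hand side means the common translation of $A+C$ by $g$ (equivalently, the sum of three multisets with $\{g\}$ a singleton). This is immediate from the definition of multiset addition together with associativity and commutativity of $+$ in $G$.

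With this identity in hand, the hypothesis $A+C+g = B+C$ rewrites as $(A+g)+C = B+C$, so Proposition \ref{cancellation} applied to the pair $A+g$ and $B$ with catalyst $C$ gives $A+g = B$, which is exactly the conclusion $A+g = B$ of the first claim (and witnesses $A\propto B$).

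For the ``in particular'' clause, unpack the definition of $Cat_G$ in the transformation theory $(M_G, +, \propto)$: $(A,B)\in Cat_G$ means there exists a finite $G$-multiset $C$ with $A+C \propto B+C$, i.e.\ with $A+C+g = B+C$ for some $g\in G$ (after possibly replacing $g$ by $-g$, using symmetry of $\propto$). Applying the first part of the proposition then yields $A+g=B$, so $A\propto B$.

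There is essentially no obstacle here: the only thing to be slightly careful about is the notational convention that $A+C+g$ denotes the translation of $A+C$ by the singleton $\{g\}$, so that it coincides with $(A+g)+C$. Once this is observed, the proposition follows in one line from Proposition \ref{cancellation}, and the ``in particular'' clause is a direct unpacking of the definition of $\propto$ and $Cat_G$.
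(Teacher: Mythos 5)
Your proof is correct and follows exactly the paper's approach: the paper's one-line proof is ``Apply Proposition \ref{cancellation} to the pair $(A+g,B)$,'' which is precisely the reduction you spell out by absorbing the translation $g$ into $A$. Your additional unpacking of the ``in particular'' clause via the definition of $Cat_G$ and $\propto$ is the intended (implicit) argument.
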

\begin{proof}
    Apply Proposition \ref{cancellation} to the pair $(A+g,B)$.
\end{proof}
\begin{prop} 

\label{divbynpv}
    Let $A,B$ be finite $G$-multisets for a torsion-free abelian group $G$. If $nA\propto nB$, then $A\propto B$.
\end{prop}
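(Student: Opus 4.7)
The plan is to reduce the claim to Proposition \ref{divbyn} by producing an explicit $g \in G$ with $ng = h$, where $h \in G$ is the translation witnessing $nA \propto nB$, so that $nA + h = nB$. Once such a $g$ is found, a direct expansion gives $n(A + g) = nA + ng = nA + h = nB$, and Proposition \ref{divbyn} delivers $A + g = B$, i.e.\ $A \propto B$.

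To construct $g$, I would first replace $G$ with the subgroup generated by the elements of $A$, $B$, and $\{h\}$; this remains torsion-free and is now finitely generated, so by Proposition \ref{orderability} it admits a translation-invariant total order. Writing $a^\ast := \max A$ and $b^\ast := \max B$, translation-invariance of the order forces $\max(nA) = n a^\ast$, $\max(nA + h) = n a^\ast + h$, and $\max(nB) = n b^\ast$. Comparing the maxima on both sides of $nA + h = nB$ yields $n a^\ast + h = n b^\ast$, hence $h = n(b^\ast - a^\ast)$, and I may take $g := b^\ast - a^\ast \in G$.

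The step I expect to be the main obstacle is precisely ensuring that $h$ is divisible by $n$ inside $G$: in a general torsion-free abelian group, $nG$ is typically a proper subgroup (e.g.\ $2\mathbb{Z} \subsetneq \mathbb{Z}$), so not every element admits an $n$-th root. The ordering argument bypasses this by showing that the \emph{particular} $h$ forced by the hypothesis must take the explicit form $n(b^\ast - a^\ast)$, and is therefore $n$-divisible within $G$ itself. An alternative route would be to embed $G$ into the $\mathbb{Q}$-vector space $G \otimes \mathbb{Q}$, solve $ng = h$ there, and then argue that $g$ in fact lies in $G$ because any bijection matching $A + g$ to $B$ forces $g = b - a$ for matched $a \in A, b \in B$; but the bi-ordering approach is cleaner and self-contained within the tools already developed in this subsection.
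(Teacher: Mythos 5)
Your proof is correct and follows essentially the same route as the paper: reduce to Proposition \ref{divbyn} after showing, via the bi-ordering from Proposition \ref{orderability}, that the translation constant must be an $n$-th multiple (the paper looks at minimal rather than maximal elements, an immaterial choice). The paper states this step tersely; your write-up simply spells it out.
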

\begin{proof}
    We may assume that $G$ is finitely generated, and let $g\in G$ be such that $nA+g=nB$. Look at the minimal elements of $nA+g$ and $nB$ to deduce that $g=nh$ for some $h\in G$. Thus, $n(A+h)=nB$, and hence $A+h=B$ by Proposition \ref{divbyn}.
\end{proof}

\begin{lem}
\label{flexmulti}
    Let $G$ be a torsion-free abelian group. Then $(A,B)\in \mathrm{Cat}_G^{(\mathrm{fin})}\implies A\propto B$.

\end{lem}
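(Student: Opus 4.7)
The plan is to mimic the proof of Lemma \ref{flexible_multiset_catalysis_is_useless_in_a_torsion_free_group}, replacing each ingredient with its translation-invariant counterpart. The three main ingredients already developed in this subsection are: (i) the reduction of finite flexible catalysis to a multicopy single-catalyst statement via Proposition \ref{flexvsmulticop}, (ii) the cancellation statement for the $\propto$ relation in torsion-free groups (Proposition \ref{cancpv}), and (iii) the ``divide by $n$'' statement for the $\propto$ relation (Proposition \ref{divbynpv}).

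First I would unpack the assumption $(A,B)\in Cat_G^{(fin)}$. By definition this means $(A,B)\in Cat_G^{(f)}(S)$ for some finite $S\in S_G$, so $(A,B)\in Cat_G^{(\leq n)}$ for $n=|S|$. Using the proposition that $Cat_G^{(\leq n)}=\bigcup_{1\leq k\leq n} Cat_G^{(k)}$, we obtain some $k$ with $(A,B)\in Cat_G^{(k)}$. Applying Proposition \ref{flexvsmulticop} in the current transformation theory $(M_G,+,\propto)$ then gives $(kA,kB)\in Cat_G$, i.e.\ there exists a $G$-multiset $C$ such that $kA+C\propto kB+C$.

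Next I would apply Proposition \ref{cancpv} to the pair $(kA,kB)$ with catalyst $C$: since $G$ is torsion-free, cancellation holds up to a translation, yielding $kA\propto kB$. Finally, Proposition \ref{divbynpv} applied to this relation gives $A\propto B$, which is the desired conclusion.

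There is no genuine obstacle here: the argument is a direct translation of the proof of Lemma \ref{flexible_multiset_catalysis_is_useless_in_a_torsion_free_group} once the ``$=$'' relation is swapped for ``$\propto$''. The only point that deserves a moment's care is checking that Proposition \ref{flexvsmulticop}, which was stated for an arbitrary transformation theory, applies to $(M_G,+,\propto)$; this is immediate because the proof only used the preorder and translation-invariance axioms, both of which hold for $\propto$.
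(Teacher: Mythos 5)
Your argument is correct and is essentially the paper's own proof: the paper likewise passes from $(A,B)\in Cat_G^{(fin)}$ to $(nA,nB)\in Cat_G$ via the multicopy reduction, applies Proposition \ref{cancpv} to obtain $nA\propto nB$, and concludes with Proposition \ref{divbynpv}. You merely fill in the bookkeeping ($Cat_G^{(fin)}\to Cat_G^{(\leq n)}\to Cat_G^{(k)}$) that the paper leaves implicit, and correctly point to Proposition \ref{flexvsmulticop} where the paper's citation refers to the extraction variant \ref{flexvsmulticopyextrtt}.
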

\begin{proof}
By Proposition \ref{Union_of_cyclic_classes} and Proposition \ref{flexvsmulticop}, $(nA,nB)\in \mathrm{Cat}_G$ for some $n$. Then, by Proposition \ref{cancpv}, we have $nA\propto nB$. But then, by Proposition \ref{divbynpv}, we have $A\propto B$.
\end{proof}

\begin{prop}
\label{fincateqpv}
    Let $G$ be a finitely generated abelian group. Assume that $G=T\times F$, where $T$ is arbitrary and $F$ is torsion-free. Let $\pi: G\rightarrow F$ be the projection map. Then $(A,B)\in \mathrm{Cat}_G^{(\mathrm{fin})}\implies \pi (A)\propto\pi (B)$.
\end{prop}
\begin{proof}
    Since $\pi$ is a group homomorphism, it sends the defining equations of $\mathrm{Cat}_G^{(\mathrm{fin})}$ to those of $\mathrm{Cat}_F^{(\mathrm{fin})}$. Therefore, $(A,B)\in \mathrm{Cat}_G^{(\mathrm{fin})}\implies (\pi(A),\pi(B)) \in \mathrm{Cat}_F^{(\mathrm{fin})}$. Then we get $\pi(A)\propto\pi(B)$ by Lemma \ref{flexmulti}.
\end{proof}

\begin{prop}
\label{projimp}
    Let $G=T\times F$ be a finitely generated abelian group, where $T$ is finite. Let $\pi: G\rightarrow F$ be the projection map. Then $\pi (A)\propto \pi (B) \implies (A,B)\in \mathrm{Cat}_G(T)$.
\end{prop}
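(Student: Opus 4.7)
The plan is to mimic the proof of Proposition \ref{projections_equal_implies_single_cat}, only now working in the transformation theory $(M_G, +, \propto)$ rather than $(M_G, +, =)$. Recall that $(A,B) \in Cat_G(T)$ in this TT means there exists some $g \in G$ with $A + T + g = B + T$ (as multisets). So I need to produce such a $g$ from the hypothesis $\pi(A) \propto \pi(B)$.

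The key observation, already implicit in Proposition \ref{projections_equal_implies_single_cat}, is that for every $x \in G$ we have $x + T = \pi(x) + T$ as $G$-multisets, where I treat $T$ as a $G$-multiset via $T \times \{0\} \hookrightarrow G$ and $\pi(x) \in F$ as an element of $G$ via $\{0\} \times F \hookrightarrow G$. In other words, adding a copy of $T$ ``kills the $T$-coordinate.'' Summing this equation over all $x \in A$ (with multiplicity) gives $A + T = \pi(A) + T$, and likewise $B + T = \pi(B) + T$.

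Now unpack the hypothesis: $\pi(A) \propto \pi(B)$ means there is some $f \in F$ such that $\pi(A) + f = \pi(B)$ as multisets in $F$. Embedding $f$ into $G$ as $(0,f)$ and using translation invariance of multiset addition, I compute
\begin{equation*}
A + T + f \;=\; \pi(A) + T + f \;=\; \pi(B) + T \;=\; B + T,
\end{equation*}
which shows $A + T \propto B + T$ in $G$, i.e. $(A,B) \in Cat_G(T)$, as required.

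There is no genuine obstacle here; the only thing to watch is keeping the bookkeeping straight between multisets in $F$ and in $G$ under the embeddings $T \hookrightarrow G$ and $F \hookrightarrow G$. Everything reduces to the $T$-coordinate-killing identity and the definition of $\propto$.
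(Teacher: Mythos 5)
Your proof is correct and is essentially the paper's own argument: both use the identity $x+T=\pi(x)+T$ to reduce to the translate of the projected multisets, and then the chain $A+T+g=\pi(A)+T+g=\pi(B)+T=B+T$ closes the argument. The only difference is notational (you call the translating element $f$ rather than $g$ and spell out the intermediate summation over $x\in A$).
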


\begin{proof}
Let $g\in F$ be such that $\pi(A)+g=\pi(B)$. Then
    $$A+T+g=\pi(A)+T+g=\pi(A)+g+T=\pi(B)+T=B+T.$$
\end{proof}

\begin{thm}
\label{abelianeq}
    For any abelian group $G$, we have $\mathrm{Cat}_G^{(\mathrm{fin})}=\mathrm{Cat}_G$.
\end{thm}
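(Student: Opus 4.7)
The plan is to run essentially the same argument that was used for the analogous theorem about $(M_G,+,=)$, but with the $\propto$ versions of the auxiliary results that have already been set up in this subsection. As before, only the inclusion $Cat_G^{(fin)} \subseteq Cat_G$ requires proof, since the reverse inclusion is immediate from the definitions.

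The first step is to reduce to the case where $G$ is finitely generated. Take $(A,B) \in Cat_G^{(f)}(S)$ for some finite $S \in S_G$. Because $A$, $B$, and every element of $S$ are finite $G$-multisets, only finitely many elements of $G$ appear across all of them. Let $G_0$ be the subgroup of $G$ generated by these finitely many elements, together with any translation elements $g$ arising in the defining relations $A + C \propto B + C'$ for $C, C' \in S$. Then $G_0$ is finitely generated, and the whole catalysis relation takes place inside $M_{G_0}$, so we may assume from the start that $G$ itself is finitely generated.

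The second step is to invoke the structure theorem: write $G = T \times F$, where $T$ is the (finite) torsion subgroup and $F$ is torsion-free. Let $\pi : G \to F$ be the projection. Proposition \ref{fincateqpv} gives $\pi(A) \propto \pi(B)$. Finally, Proposition \ref{projimp} applied to this equivalence yields $(A,B) \in Cat_G(T) \subseteq Cat_G$, completing the proof.

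The only step with any real content is the reduction to a finitely generated group, and this is routine because flexible catalysis over a finite catalyst set $S$ only ever involves finitely many elements of $G$. Everything else is a direct invocation of Propositions \ref{fincateqpv} and \ref{projimp}, which were proved precisely so that this final theorem would fall out as a one-line consequence; the strategy mirrors the earlier torsion-vs-torsion-free decomposition used for $(M_G, +, =)$.
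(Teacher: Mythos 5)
Your proof is correct and follows the same approach as the paper: reduce to a finitely generated group, split it as $T\times F$ via the structure theorem, and apply Propositions \ref{fincateqpv} and \ref{projimp}. The extra care you take in the finitely-generated reduction (explicitly including the translation constants) is slightly more detailed than the paper's one-line remark but isn't actually necessary, since any such $g$ in a relation $A+C+g=B+C'$ already lies in the subgroup generated by the multiset elements.
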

\begin{proof}
    Only the $\mathrm{Cat}_G^{(\mathrm{fin})}\subseteq \mathrm{Cat}_G$ direction is nontrivial. Let $(A,B)\in \mathrm{Cat}_G(S)$ for a finite $S\in S_G$. Since $S$ is finite, and each element of $S$ is a finite $G$-multiset, we may assume that $G$ is finitely generated. Write $G=T\times F$, where $T$ is the torsion group of $G$ and $F$ is torsion-free. Let $\pi: G\rightarrow F$ be the projection map. Then $(A,B)\in \mathrm{Cat}_G^{(\mathrm{fin})}\implies \pi (A)\propto \pi (B)$, by Proposition \ref{fincateqpv}. But by Proposition \ref{projimp}, this means that $(A,B)\in \mathrm{Cat}_G(T)\subseteq \mathrm{Cat}_G$.
\end{proof}

\begin{rmk}
    The proof of Theorem \ref{abelianeq} also implies $\mathrm{Cat}_G^{(\mathrm{fin})}=\{(A,B) \in M_G\times M_G: \pi(A)\propto \pi(B)\}=\mathrm{Cat}_G$, where $\pi$ is the projection onto the torsion-free factor of $G$. This shows that the catalytic strucure of transformation theories of the form $T_G'=(M_G, +, \propto)$ depends on the torsion subgroup of $G$. In particular, there exist nontrivial catalytic transformations if and only if the torsion subgroup is nontrivial.
\end{rmk}

\subsection{Advantages from flexibility}
\label{Advantages from flexibility}

In this section, we present results showing that there is something to be gained from allowing flexibility in catalytic transformations of multisets. Our previous results, particularly Theorem \ref{abelianeq}, imply restrictions on the potential benefits of flexibility for catalytic transformations. We will see that despite these negative results, it is possible to identify a weaker form of advantage. However, the most significant benefits will only appear when we look at extractions. Therefore, we will be working with extractions for most of this section. Notice that for extractions it does not make a difference whether we are working in the transformation theory $T_G$ or $T_G'$, because the translation constant can always be absorbed by the discarded multiset, so we can refer to catalytic extraction classes by specifying the type of catalysis we wish to allow and naming the underlying group.

\begin{eg}
    Let $\omega = e^{2\pi i /3}$ and let $A=[1,\omega, \omega]/\sqrt{3}$, $B=[1,\omega^2, \omega^2]/\sqrt{3}$. Then $A\otimes A$ and $B\otimes B$ are permutation-equivalent, and clearly $A\otimes B$ and $B\otimes A$ are permutation equivalent. But $A\otimes A$ and $B\otimes A$ are not, not even up to global phase, nor are $A\otimes B$ and $B\otimes B$.

   In other words, for $G=\mathbb{C}^*$ and $S=\{A,B\}$ (where now $A,B$ denote the corresponding multisets), we have $(A,B)\in \mathrm{Cat}_G^{(f)}(S)$, but for no element $C\in S$ do we have $(A,B)\in \mathrm{Cat}_G(C)$.

    This is already some evidence that it makes sense to think about flexibility, since if we have access to either element of the catalyst set $S$, we can perform an arbitrary number of transformations $A\rightarrow B$, but only if we allow the catalyst state to alternate (in this case between $A$ and $B$).
\end{eg}

\begin{dfn}
\label{polys to multisets}
    Let $\iota:\mathbb{Z}_{\geq 0}[x]\backslash\{0\}\rightarrow M_\mathbb{Z}$ be the map that sends any polynomial $a_0+a_1x+\ldots a_nx^n$ with nonnegative integer coefficients to the multiset whose elements come from $\{0,1,\ldots,n\}$ and the multiplicity of any $k\in \{0,1,\ldots,n\}$ is the coefficient $a_k$ of the degree $k$ term of the polynomial. Write this multiset as $\{0^{(a_0)},1^{(a_1)},\ldots, n^{(a_n)}\}$.
\end{dfn}

\begin{prop}
\label{polys_to_multisets}
    The map $\iota:\mathbb{Z}_{\geq 0}[x]\backslash\{0\}\rightarrow M_\mathbb{Z}, a_0+a_1x+\ldots a_nx^n \mapsto \{0^{(a_0)},1^{(a_1)},\ldots, n^{(a_n)}\}$ is an injective monoid homomorphism, and its image is the set of finite $\mathbb{Z}$-multisets with nonnegative elements.
\end{prop}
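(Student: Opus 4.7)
The plan is to verify the three claims in sequence. We treat $(\mathbb{Z}_{\geq 0}[x]\setminus\{0\},\cdot)$ as a monoid under polynomial multiplication (with identity $1$) and $(M_\mathbb{Z},+)$ as the multiset monoid from Section 4.1 (with identity $\{0\}$). The whole argument is routine; the key observation that makes it essentially automatic is that both operations are convolutions.

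For the homomorphism property, write $p=\sum_i a_i x^i$ and $q=\sum_j b_j x^j$, and let $m_A(k)$ denote the multiplicity of $k\in\mathbb{Z}$ in a multiset $A$. Polynomial multiplication gives coefficient $\sum_{i+j=k} a_i b_j$ for $x^k$ in $pq$, while the multiset-sum definition from Section 4.1 gives $m_{A+B}(k)=\sum_{i+j=k} m_A(i)\, m_B(j)$. Since $\iota$ is precisely the assignment that turns the coefficient sequence into the multiplicity function, these expressions coincide, yielding $\iota(pq)=\iota(p)+\iota(q)$. A quick check of $\iota(1)=\{0\}$ completes the monoid homomorphism verification. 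Injectivity is then immediate, since $\iota$ has an obvious left inverse: given a multiset $A$ in the image, recover the polynomial $p=\sum_i m_A(i)\, x^i$.

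For the image description, a finite $\mathbb{Z}$-multiset whose elements are all nonnegative is exactly a finitely supported function $m:\mathbb{Z}_{\geq 0}\to\mathbb{Z}_{\geq 0}$, and such a multiset is witnessed as $\iota(\sum_i m(i)\, x^i)$; conversely, every nonzero polynomial with nonnegative integer coefficients maps under $\iota$ to a multiset whose elements are nonnegative integers, directly from the definition. There is no serious obstacle in the argument; the only subtlety worth flagging is making sure the multiset-addition convention (which counts ordered pairs $(a,b)\in A\times B$ with $a+b=k$, weighted by multiplicities) is correctly identified with the convolution appearing in polynomial multiplication. Once that identification is unwound, every claim follows by inspection.
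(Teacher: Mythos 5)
Your argument is correct and follows the same route as the paper: the heart of the matter is identifying polynomial multiplication with the multiset-sum convolution, namely that the coefficient of $x^k$ in $pq$ equals the multiplicity of $k$ in $\iota(p)+\iota(q)$. The extra remarks on $\iota(1)=\{0\}$, the left inverse, and the image are the routine parts the paper leaves unstated, so there is nothing substantively different here.
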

\begin{proof}
    The only non-trivial claim to prove is that $\iota(pq)=\iota(p)+\iota(q)$ for all $p,q\in \mathbb{Z}_{\geq 0}[x]$. To see why it's true, notice that the coefficient of $x^k$ in
    \begin{equation}
        p(x)q(x)=(a_0+a_1x+\ldots a_nx^n )(b_0+b_1x+\ldots b_nx^n )
    \end{equation}
    is
    \begin{equation}
        \sum_{j=0}^k a_jb_{k-j},
    \end{equation}
    and the multiplicity of $k$ in $\iota(p)+\iota(q)$ is the same number.
\end{proof}

\begin{eg}
\label{advantage}
    Here we will present an element $(A,B)$ of $\mathrm{CatExt}_\mathbb{Z}^{(\mathrm{fin})}\backslash \mathrm{CatExt}_\mathbb{Z}$ for which there exists a flexible catalytic procedure using two catalysts of size smaller than $A$.
    
      Define the polynomials $X(x)=4+x,Y(x)=2+2x-x^2+2x^3+2x^4, B(x)=1+x, A(x)=B(x)X(x)Y(x), D_1=X^2, D_0=Y^2,C_0=BX, C_1=BY$. We claim that, the multisets corresponding to the polynomials under the map $\iota$ defined in Proposition \ref{polys_to_multisets} satisfy
     \begin{equation}
         \iota(A)+\iota(C_0)=\iota(B)+\iota(D_1)+\iota(C_1)
     \end{equation}
     and
\begin{equation}
         \iota(A)+\iota(C_1)=\iota(B)+\iota(D_0)+\iota(C_0),
     \end{equation}
     thus $(\iota(A),\iota(B))\in \mathrm{CatExt}_\mathbb{Z}^{(f)}(\{\iota(C_0),\iota(C_1)\})$. Moreover, we claim that $(\iota(A),\iota(B))\notin \mathrm{CatExt}_\mathbb{Z}$.

     Firstly, it must be checked that $A,B,C_i,D_i$ all have nonnegative coefficients, but this is straightforward.

     Secondly, notice that the equations above follow immediately from Proposition \ref{polys_to_multisets} and the polynomial identities
     \begin{equation}
         A(x)C_0(x)=B(x)D_1(x)C_1(x)
     \end{equation}
     and
      \begin{equation}
         A(x)C_1(x)=B(x)D_0(x)C_0(x).
     \end{equation}

     Finally, assume for a contradiction that $\iota(A)+U=\iota(B)+V+U$ for some $U,V\in M_\mathbb{Z}$. We may assume without loss of generality that $0$ is the minimal element of $U$. Then, since $0$ is also the minimal element of $\iota(A)$ and $\iota(B)$, we must also have that $0$ is the minimal element of $V$. In particular, $U,V$ lie in the image of $\iota$, i.e. there exist polynomials $C,D\in\mathbb{Z}_{\geq 0}[x]$ such that
     \begin{equation}
         \iota(A)+\iota(C)=\iota(B)+\iota(D)+\iota(C).
     \end{equation}
     Therefore, by Proposition \ref{polys_to_multisets},
      \begin{equation}
         \iota(AC)=\iota(BDC),
     \end{equation}
     and by injectivity this gives $A(x)C(x)=B(x)D(x)C(x)$. Note that $C$ is not the constant zero polynomial, so we can divide by it to obtain $A(x)=B(x)D(x)$. But we also have $A(x)=B(x)X(x)Y(x)$, so (by unique factorisation) we get $X(x)Y(x)=D(x).$ But $X(x)Y(x)$ has a negative coefficient, giving the contradiction.

    For concreteness, we provide the description of the multisets in terms of their elements:
     $$\iota(A)=\{0^{(8)}, 1^{(18)},2^{(8)},3^{(5)}, 4^{(17)}, 5^{(12)}, 6^{(2)}\},\iota(B)=\{0,1\}, \iota(C_0)=\{0^{(4)}, 1^{(5)},2^{(1)}\},$$
     $$\iota(C_1)=\{0^{(2)}, 1^{(4)},2^{(1)},3^{(1)}, 4^{(4)}, 5^{(2)}\}, \iota(D_0)=\{0^{(4)}, 1^{(8)},3^{(4)}, 4^{(17)}, 5^{(4)}, 7^{(8)}, 8^{(4)}\}, \iota(D_1)=\{0^{(16)}, 1^{(8)},2^{(1)}\}.$$
     So $A$ corresponds to a Schmidt vector of dimension 70, whereas $C_0$ corresponds to a 10-dimensional Schmidt vector. So this example not only shows that flexible catalysis can be stronger than traditional catalysis in an absolute sense, but also that it is more efficient in terms of space requirements than any catalytic multicopy transformation simulating it.
\end{eg}

\begin{dfn}
(Negativity of a polynomial.)
    For a polynomial $p\in\mathbb{Z}[x_1,\ldots,x_m]=\mathbb{Z}[\vec{x}]$, the negativity of $p$ is defined to be the smallest positive integer $n$ such that $p^n$ has nonnegative coefficients: $n(p):=\min \{n\in \mathbb{Z}_{>0}: p^n\in \mathbb{Z}_{\geq 0}[\vec{x}]\}$. If there is no such positive integer, say that $p$ is infinitely negative and write $n(p)=\infty$.
\end{dfn}

\begin{prop}

\label{anyposint}
    For any positive integer $n$, there exists a polynomial $p\in \mathbb{Z}[x]$ such that $n=n(p)$. Moreover, $p$ can be chosen such that $(1+x)p(x)\in  \mathbb{Z}_{\geq 0}[x]$.
\end{prop}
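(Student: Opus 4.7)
The plan is to construct, for each positive integer $n$, an explicit polynomial $p_n \in \mathbb{Z}[x]$ with $n(p_n) = n$ and $(1+x)\,p_n \in \mathbb{Z}_{\geq 0}[x]$, handling the small cases by direct exhibition and the larger cases by a more elaborate construction.

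For $n = 1$, I would take $p_1(x) = 1$: both conditions are trivial. For $n = 2$, I would take $p_2(x) = (x^2 - x + 1)(x^2 + 3x + 1) = 1 + 2x - x^2 + 2x^3 + x^4$. The key identity $(1+x)(x^2-x+1) = 1 + x^3$ gives $(1+x)\,p_2 = (1+x^3)(x^2+3x+1) \in \mathbb{Z}_{\geq 0}[x]$ immediately, and a direct multiplication shows $p_2^2 = 1 + 4x + 2x^2 + 11 x^4 + 2x^6 + 4x^7 + x^8 \in \mathbb{Z}_{\geq 0}[x]$. Since $p_2$ itself has coefficient $-1$ at $x^2$, this establishes $n(p_2) = 2$.

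For $n \geq 3$, the plan is to search for $p_n$ of the form $p_n(x) = (x^2 - x + 1)\cdot q_n(x)$ with $q_n \in \mathbb{Z}_{\geq 0}[x]$. The condition $(1+x)\,p_n = (1+x^3)\,q_n \in \mathbb{Z}_{\geq 0}[x]$ is then automatic from the same identity, so the task redu
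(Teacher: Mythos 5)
Your cases $n=1$ and $n=2$ are correct and verifiable: the identity $(1+x)(x^2-x+1)=1+x^3$ does give $(1+x)p_2 \in \mathbb{Z}_{\geq 0}[x]$, and your computed $p_2^2 = 1+4x+2x^2+11x^4+2x^6+4x^7+x^8$ is right, so $n(p_2)=2$. But the argument breaks off exactly where the real work begins: for $n\geq 3$ you announce the ansatz $p_n = (x^2-x+1)\,q_n$ with $q_n\in\mathbb{Z}_{\geq 0}[x]$ (which again makes the $(1+x)p_n$ condition free), and then the proof simply stops. Nothing is offered to show that for every $n\geq 3$ one can choose $q_n$ so that $p_n^k\notin\mathbb{Z}_{\geq 0}[x]$ for all $k<n$ while $p_n^n\in\mathbb{Z}_{\geq 0}[x]$. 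That is precisely the content of the proposition, and it is not automatic: as you ``dial up'' $q_n$ the negativity $n\bigl((x^2-x+1)q_n\bigr)$ will eventually drop, but you have not argued that it passes through every integer value $n\geq 3$ rather than skipping some, nor have you exhibited a family $q_n$ with a coefficient you can track.

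For comparison, the paper does not factor through $x^2-x+1$ at all. It fixes a degree-$4$ template $p(x)=a_0+a_1x+a_2x^2+a_3x^3+a_4x^4$ with $a_2<0<a_0,a_1,a_3,a_4$, observes that the coefficient of $x^2$ in $p^k$ is $k\,a_0^{k-2}\bigl(a_0a_2+\tfrac{k-1}{2}a_1^2\bigr)$, and chooses $a_0,a_1,a_2$ so that this is negative exactly for $k<n$ (the window $\tfrac{n-2}{2}a_1^2<-a_0a_2\leq\tfrac{n-1}{2}a_1^2$), then verifies the remaining coefficients of $p^n$ are nonnegative with an explicit choice $a_2=-1$, $a_1=5^n$, $a_0=a_3=a_4=\lfloor\tfrac{n-1}{2}5^{2n}\rfloor$; the $(1+x)p$ condition then holds by inspection. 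Your factored ansatz is a genuinely different route, and it has the virtue of making the $(1+x)p$ condition structurally automatic, but to make it a proof you would need an analogous coefficient-tracking argument (identify a coefficient of $p_n^k$ whose sign you control, and choose $q_n$ so that the sign flips precisely at $k=n$) or some other mechanism guaranteeing every value of $n$ is attained. As written the proposal establishes only $n=1,2$ and leaves the proposition unproven for $n\geq 3$.
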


\begin{proof}
    The statement is clear for $n=1$, so let us assume that $n\geq 2$. We will construct a polynomial $p(x)=a_0+a_1x+a_2x^2+a_3x^3+a_4x^4$ such that $n(p)=n$.

    Firstly, notice that the coefficient of $x^2$ in $p(x)^k$ is 
    \begin{equation}
        ka_0^{k-1}a_2+\binom{k}{2}a_1^2a_0^{k-2}=ka_0^{k-2}\left( a_0a_2+\frac{k-1}{2}a_1^2\right).
    \end{equation}

    Therefore, choosing $a_0,a_1,a_2$ in such a way that 
    \begin{equation}
    \label{coeffbound}
        \frac{n-2}{2}a_1^2 < - a_0a_2\leq \frac{n-1}{2}a_1^2
    \end{equation}
    will guarantee that the coefficient of $x^2$ in $p(x)^k$ is negative precisely when $k<n$. So all we need to do now is make sure that all the coefficients of $p(x)^n$ are nonnegative.

    Let $a_2=-1, a_1=5^n, a_0=a_3=a_4=\lfloor \frac{n-1}{2}5^{2n}\rfloor.$ Then the constant and linear coefficients of $p(x)^n$ are clearly positive, and the quadratic coefficient is also nonnegative since this choice satisfies equation (\ref{coeffbound}).

    If $4n\geq j\geq 3$, then $j$ can be written as a sum of $0$'s, $3$'s and $4$'s and at most a single $1$, where the total number of terms is $n$. Therefore, there is at least one degree $j$ term with a positive coefficient of size at least $a_0^{n-1}a_1$ in the expansion of $p(x)^n$. At the same time, $5^n\times  a_0^{n-1}a_2$ is a bound on the maximal amount of total negative contribution towards the coefficient of $x^j$ that we can get from different negative terms of the expansion of $p(x)^n$: there are $5^n$ terms in the expansion, and any negative term must have a factor of $a_2$, so the absolute value of each negative term is at most $a_0^{n-1}|a_2|=a_0^{n-1}$.
    
    It follows that the coefficient of $x^j$ is at least
   
    \begin{equation}
        a_0^{n-1}a_1+5^n\times  a_0^{n-1}a_2=0.
    \end{equation}

    Finally, we note that this construction clearly satisfies $(1+x)p(x)\in  \mathbb{Z}_{\geq 0}[x]$.
\end{proof}

\begin{prop}
\label{arbnumreq}
    For any $n\in \mathbb{Z}_{> 1}$, $\mathrm{CatExt}_\mathbb{Z}^{(n)}\backslash \mathrm{CatExt}_\mathbb{Z}^{(\leq n-1)}\neq \emptyset$.
\end{prop}

\begin{proof}
    By Proposition \ref{anyposint}, there is a polynomial $D\in \mathbb{Z}[x]$ such that $n(D)=n$, and $A,B\in \mathbb{Z}_{\geq 0}[x]$ such that $A=BD$ and $B(0)>0$. (For example, $B(x)=1+x$.) Let $\iota$ be the map from Definition \ref{polys to multisets}. Then $A^n=B^nD^n$, so $n\iota(A)=n\iota(B)+\iota(D^n)$, and thus, by Proposition \ref{flexvsmulticopyextrtt}, $(\iota(A),\iota(B))\in \mathrm{CatExt}_\mathbb{Z}^{(n)}$.

    Now assume for a contradiction that $(\iota(A),\iota(B))\in \mathrm{CatExt}_\mathbb{Z}^{(k)}$ for some $k<n$. Then, by Proposition \ref{flexvsmulticopyextrtt}, there exist $U,V\in M_\mathbb{Z}$ such that $k\iota(A)+U=k\iota(B)+V+U$. But $\mathbb{Z}$ is torsion-free, so by Proposition \ref{cancellation} we must also have $k\iota(A)=k\iota(B)+V$. Hence all the elements of $V$ are non-negative, and thus $V=\iota(P)$ for some $P\in \mathbb{Z}_{\geq 0}[x]$. But then $\iota(A^k)=\iota(B^kP)$, and by injectivity this yields $A^k=B^kP$. Thus, by unique factorisation, we must have $P=D^k$, contradicting the fact that $D^k$ has a negative coefficient.
\end{proof}

\begin{thm}
\label{catgdonthelp}
    Let $G=\mathbb{(R,+)}\times\mathbb{(R/Z,+)}$. For any $n\in \mathbb{Z}_{> 0}$, 
    \begin{equation}
   \mathrm{CatExt}_G^{(\leq n) }\cap (M_\mathbb{Z}\times M_\mathbb{Z})= \mathrm{CatExt}_\mathbb{Z}^{(\leq n)}.    
    \end{equation}
\end{thm}

\begin{rmk}
    Here $\mathbb{Z}$ is considered a subgroup of $G$ via the isomorphism $\mathbb{Z}\cong \mathbb{Z}\times \{0+\mathbb{Z}\}\leq \mathbb{R}\times\mathbb{R/Z}=G$. So the statement means that we do not get more flexible catalytic extractions of $\mathbb{Z}$-multisets by allowing the elements of the catalysts and discarded multisets to come from the larger group $G$. Note also that this statement is essentially a more complicated version of Proposition \ref{extgroup}, with the choices $G=\mathbb{(R,+)}\times\mathbb{(R/Z,+)}$ and $H=\mathbb{Z}$.
\end{rmk}

\begin{proof}
    The inclusion $\supseteq$ is clear. For $\subseteq$, assume that
    
    \begin{equation}
    (A,B)\in \mathrm{CatExt}_G^{(\leq n) }\cap (M_\mathbb{Z}\times M_\mathbb{Z}). 
    \end{equation}
    
    Then, by Proposition \ref{Union_of_cyclic_classes} and Proposition \ref{flexvsmulticopyextrtt}, $(kA,kB)\in \mathrm{CatExt}_G$ for some $1\leq k\leq n$, i.e. $kA+C=kB+D+C$ for some $C,D\in M_G$. Let $\pi:G\rightarrow\mathbb{R}$ be the projection map. Then  $k\pi(A)+\pi(C)=k\pi(B)+\pi(D)+\pi(C)$. But $A,B\in M_\mathbb{Z}\subset M_\mathbb{R}$, so $\pi(A)=A$ and $\pi(B)=B$. Therefore, $(kA, kB+\pi(D))\in \mathrm{Cat}_\mathbb{R}$. But $\mathbb{R}$ is torsion-free, so by Lemma \ref{flexible_multiset_catalysis_is_useless_in_a_torsion_free_group}, we must have $kA=kB+\pi(D)$, and thus $\pi(D)\in M_\mathbb{Z}$. Therefore, by applying  Proposition \ref{flexvsmulticopyextrtt} again, we conclude that $(A,B)\in \mathrm{CatExt}_\mathbb{Z}^{(k)}\subseteq\mathrm{CatExt}_\mathbb{Z}^{(\leq n)}$.
    
\end{proof}

\begin{cor} 
\label{anynumcat}
    Let $G=\mathbb{(R,+)}\times\mathbb{(R/Z,+)}$. For any $n\in \mathbb{Z}_{> 1}$,
    \begin{equation}
        \mathrm{CatExt}_\mathbb{Z}^{(n)}\backslash \mathrm{CatExt}_G^{(\leq n-1)}\neq \emptyset.
    \end{equation}
\end{cor}
\begin{proof}
    By Proposition \ref{arbnumreq}, $\mathrm{CatExt}_\mathbb{Z}^{(n)}\backslash \mathrm{CatExt}_\mathbb{Z}^{(\leq n-1)}\neq \emptyset$. But Theorem \ref{catgdonthelp} implies that
    \begin{equation}
        \mathrm{CatExt}_\mathbb{Z}^{(n)}\backslash \mathrm{CatExt}_\mathbb{Z}^{(\leq n-1)}= \mathrm{CatExt}_\mathbb{Z}^{(n)}\backslash \mathrm{CatExt}_G^{(\leq n-1)}.
    \end{equation}
\end{proof}

\section{Flexible catalysis in quantum information}
\label{Flexible catalysis in quantum information}

In this section, we give the formal statements and the proofs of our main results. Most of these are just corollaries of our previous results, and their proofs follow the same pattern: we use a result from Section \ref{Transformation theories} to relate the TT of interest to a TT of multisets, and quote the corresponding result from Section \ref{Flexible catalysis of multisets}. We will only need new ideas for LOCC.

\begin{thm}
\label{CatExtLUfin_beats_CatExtLU}
    \begin{equation}
        \mathrm{CatExt}_{\mathrm{LU}}\subsetneq \mathrm{CatExt}_{\mathrm{LU}}^{(\mathrm{fin})}.
    \end{equation}
    Moreover, for any $n\in \mathbb{Z}_{> 1}$, 
    \begin{equation}
        \mathrm{CatExt}_{\mathrm{LU}}^{(n)}\backslash \mathrm{CatExt}_{\mathrm{LU}}^{(\leq n-1)}\neq \emptyset.
    \end{equation}
\end{thm}
\begin{proof}
   Corollary \ref{anynumcat} implies that 
    \begin{equation}
        \mathrm{CatExt}_{T_\mathbb{R}}^{(n)}\backslash \mathrm{CatExt}_{T_\mathbb{R}}^{(\leq n-1)}\neq \emptyset.
    \end{equation}
    Note that $\mathrm{CatExt}_{T_\mathbb{R}'}^{(k)}=\mathrm{CatExt}_{T_\mathbb{R}}^{(k)}$ for any $k$, since we can always let the discarded multiset absorb the translation constant. So we have
    \begin{equation}
        \mathrm{CatExt}_{T_\mathbb{R}'}^{(n)}\backslash \mathrm{CatExt}_{T_\mathbb{R}'}^{(\leq n-1)}\neq \emptyset.
    \end{equation}
     By Corollary \ref{ttlueq}, the TT of $\mathrm{LU}$ is equivalent to $T_\mathbb{R}'=(M_\mathbb{R},+,\propto)$, concluding the proof.
\end{proof}

\begin{thm}
    \label{finfcluuseless}
    $\mathrm{Cat}_{\mathrm{LU}}^{(\mathrm{fin})}=\mathrm{Tr}_{\mathrm{LU}}$.
\end{thm}
\begin{proof}

    Lemma \ref{flexmulti} states that finite flexible catalysis cannot enhance transformations in TTs of the form $T_G'$ if $G$ is torsion-free. By Corollary \ref{ttlueq}, the TT of LU is equivalent to $T_\mathbb{R}'$, and $(\mathbb{R},+)$ is a torsion-free group. It follows that $\mathrm{Cat}_{\mathrm{LU}}^{(\mathrm{fin})}=\mathrm{Tr}_{\mathrm{LU}}$.
\end{proof}

\begin{rmk}
    It is interesting to note that Theorem \ref{finfcluuseless} implies that (traditional) catalysis adds no power to LU extraction procedures. But, by Theorem \ref{CatExtLUfin_beats_CatExtLU}, flexible catalytic extraction is stronger, and any number of catalysts may be needed. Moreover, by Proposition \ref{infeqtrcat}, we also have $\mathrm{Cat}_{\mathrm{LU}}^{(\mathrm{fin})}\subsetneq \mathrm{Cat}_{\mathrm{LU}}^{(f)}$.

\end{rmk}

\begin{center}
    
\includegraphics[width=11cm]{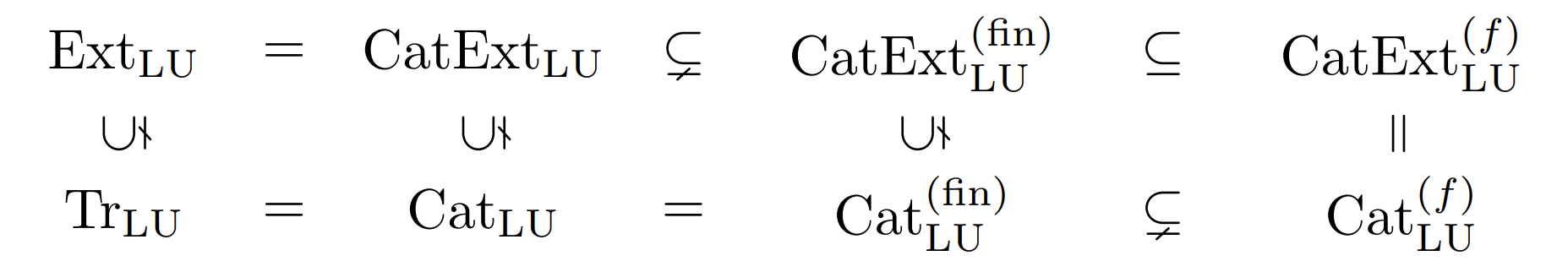}

\textbf{Figure 2}: Relations between the different types of catalytic LU transformation classes.
\end{center}

\vspace{10pt}

\begin{prop}
\label{LOCCpositive}
    There exists a set $S$ of two bipartite quantum states such that
    \begin{equation}
        \mathrm{Cat}_{\mathrm{LOCC}}^{(f)}(S)\backslash \bigcup_{C\in S} \mathrm{Cat}_{\mathrm{LOCC}}(C)\neq \emptyset.
    \end{equation}
\end{prop}

\begin{proof}
        Let $\ket{\psi_1}= \ket{\eta_1}$ with squared Schmidt coefficients given by the vector $a=(0.4,0.4,0.1,0.1)$ and $\ket{\psi_2}=\ket{\eta_2}$ with squared Schmidt coefficients given by the vector $b=(0.5,0.29,0.21,0)$. It can be checked that $a\otimes a\prec b\otimes b$, and clearly $a\otimes b\prec b\otimes a$, i.e. the transformations $\ket{\psi_1}\otimes\ket{\eta_1}\rightarrow\ket{\psi_2}\otimes\ket{\eta_2}$ and $\ket{\psi_1}\otimes\ket{\eta_2}\rightarrow\ket{\psi_2}\otimes\ket{\eta_1}$ are possible in LOCC. However, the transformations $\ket{\psi_1}\otimes\ket{\eta_1}\rightarrow\ket{\psi_2}\otimes\ket{\eta_1}$ and $\ket{\psi_1}\otimes\ket{\eta_2}\rightarrow\ket{\psi_2}\otimes\ket{\eta_2}$ are not possible in LOCC, because the corresponding majorization relations do not hold. So, for $S=\{\ket{\eta_1},\ket{\eta_2}\}$, we have
        
        \begin{equation}
            (\ket{\psi_1},\ket{\psi_2})\in \mathrm{Cat}_{\mathrm{LOCC}}^{(f)}(S)\backslash \bigcup_{C\in S} \mathrm{Cat}_{\mathrm{LOCC}}(C).
        \end{equation}
\end{proof}

\begin{thm}
\label{LOCC_negative_result}
    $\mathrm{Cat}_{\mathrm{LOCC}}^{(\mathrm{fin})}=\mathrm{Cat}_{\mathrm{LOCC}}$.
    
\end{thm}

\begin{proof}
 By Proposition \ref{Union_of_cyclic_classes}, $\mathrm{Cat}_{\mathrm{LOCC}}^{(\mathrm{fin})}=\cup_{n\in \N} \mathrm{Cat}_{\mathrm{LOCC}}^{(n)}$. Therefore, by Proposition \ref{flexvsmulticop}, $\mathrm{Cat}_{\mathrm{LOCC}}^{(\mathrm{fin})}$ is equivalent to the combination of MLOCC and ELOCC, as defined in  \cite{duan2005multiple}. (Where by equivalent we mean that they make the same transformations possible.) By Theorem 2 in \cite{duan2005multiple}, the combination of MLOCC and ELOCC is equivalent to ELOCC, which is $\mathrm{Cat}_{\mathrm{LOCC}}$ using our notation.
\end{proof}

\begin{rmk}
    Theorems \ref{LOCCpositive} and \ref{LOCC_negative_result} show that flexibility (with finitely many catalysts) can give an advantage for catalytic LOCC transformations, but only if there is a restriction on the set of catalysts we are allowed to use.
\end{rmk}

\begin{center}

\includegraphics[width=12cm]{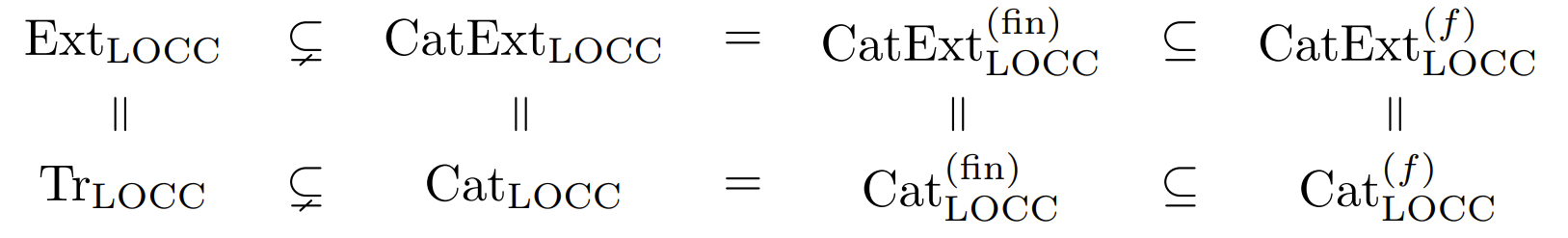}

\textbf{Figure 3}: Inclusion diagram of catalytic LOCC transformation classes. Note that LOCC contains discard operations, so extractions under LOCC are the same as transformations under LOCC.
    
\end{center}

\vspace{10pt}

Recall that according to Proposition \ref{description of TTs of LOCC and LU}, the TT of LOCC is equivalent to the TT $(V_0,\otimes, \prec)$, where $V_0$ is the set of discrete probability distribution vectors up to permutation. For $y \in V_0$, Nielsen defined $T(y)=\{x\in V_0: (x,y) \in \mathrm{Cat}_{V_0}\}$ \cite{nielsen2002majorization}. Similarly, we can define $M(y)=\{x\in V_0: \exists n \text{ s. t. } x^{\otimes n}\prec y^{\otimes n} \}$. So the fact that MLOCC is weaker than ELOCC \cite{duan2005multiple} can be written as $M(y)\subseteq T(y) \ \forall y\in V_0$. It was shown in \cite{CatalyticMajorizationAndell_pNorms} that $\overline{T(y)}=\overline{M(y)} \ \forall y$, and that $x\in \overline{T(y)} \iff ||x||_p\leq ||y||_p \ \forall p\geq 1$, where $||\cdot ||_p$ denotes the $\ell_p$-norm of a vector, and the closures are taken with respect to the $\ell_1$-norm. It is natural to extend these results to flexible catalysis as follows.

\begin{dfn}
    For $y\in V_0$, write $F(y)=\{x\in V_0: (x,y) \in \mathrm{Cat}^{(f)}_{V_0}\}$.
\end{dfn}

\begin{prop}
\label{LOCC closures negative result}
    $\overline{F(y)} = \overline{T(y)}.$ 
\end{prop}
\begin{proof}

Since we have $T(y)\subseteq F(y)$, it suffices to show that $F(y)\subseteq \overline{T(y)}.$ Take $x\in F(y)$. Then there exist probability vectors $c_0,c_1,\ldots$ such that $x\otimes c_i\prec y\otimes c_{i+1}$ for all $i=0,1,\ldots$. Since the $\ell_p$-norms are multiplicative Schur-convex functions, this implies
    \begin{equation}
    \label{products of ellp norms}
        ||x||_p\cdot ||c_i||_p\leq ||y||_p\cdot ||c_{i+1}||_p \ \ \forall p\geq 1.
    \end{equation}

    By \cite{CatalyticMajorizationAndell_pNorms}, it is sufficient to prove that $||x||_p\leq ||y||_p \ \forall p\geq 1$. Now, assume for a contradiction that there exists $p\geq 1$ such that $||x||_p>||y||_p$. Write $1<\mu = \frac{||x||_p}{||y||_p}$. Then, by (\ref{products of ellp norms}), we have
    \begin{equation}
        \mu \leq \frac{||c_{i+1}||_p}{||c_i||_p} \ \ \forall i,
    \end{equation}
    and thus, by induction, $||c_n||_p\geq ||c_0||_p\mu^n \ \ \forall n$. Therefore, $||c_n||_p\rightarrow \infty$, contradicting the fact that the $\ell_p$-norms of probability vectors are bounded above by 1.

\end{proof}

Proposition \ref{LOCC closures negative result} constitutes more evidence that flexibility is probably of little use in the TT of LOCC if we assume access to arbitrary catalysts. However, it turns out that flexible catalysis can be used to strengthen catalytic PM extractions:
\begin{thm}\label{flexhelp} $\mathrm{CatExt}_{\mathrm{PM}}\subsetneq \mathrm{CatExt}_{\mathrm{PM}}^{(\mathrm{fin})}$.
\end{thm}
\begin{proof}
    Similarly to Theorem \ref{CatExtLUfin_beats_CatExtLU}, follows from Corollary \ref{anynumcat} and Remark \ref{ttofpm}.
\end{proof}

However finite flexibility does not help PM transformations:

\begin{thm} \label{flexdoesnothelp} $\mathrm{Cat}_{\mathrm{PM}}^{(\mathrm{fin})}=\mathrm{Cat}_{\mathrm{PM}}$.
\end{thm}

\begin{proof}
    Follows from Theorem \ref{abelianeq} and Remark \ref{ttofpm}.  
\end{proof}

\begin{center}

\includegraphics[width=11cm]{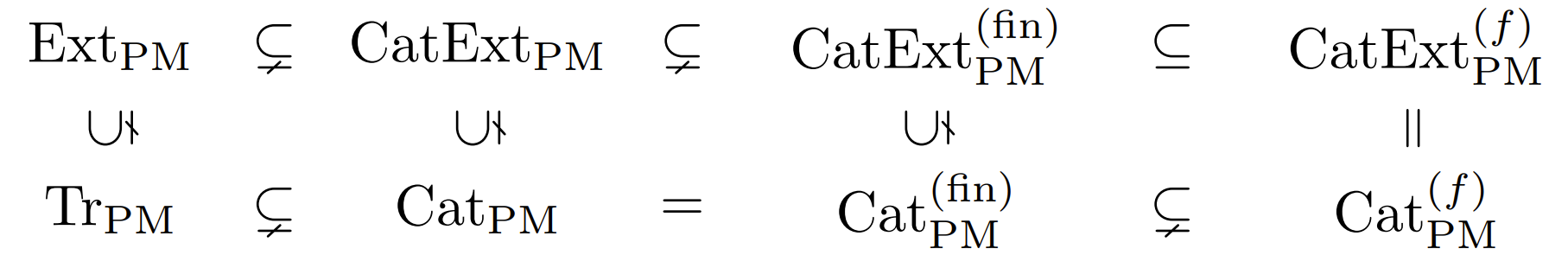}

\textbf{Figure 4}: Inclusion diagram of catalytic PM transformation classes. Note that $\mathrm{Tr}_{\mathrm{PM}}\subsetneq \mathrm{Cat}_{\mathrm{PM}}$ follows from Proposition \ref{projimp} and the fact that $\mathbb{C}^*$ has a non-trivial torsion subgroup. All the other unproved relations are trivial or follow immediately from the others.
\end{center}

\vspace{10pt}

\begin{rmk}
    Theorem \ref{flexhelp} shows that flexibility can provide an advantage in a computationally natural setting, where we are allowed to apply quantum gates corresponding to permutation matrices, and we are also allowed to discard qubits. On the other hand, Theorem \ref{flexdoesnothelp} shows that flexibility does not help if we do not have a free discard operation.
\end{rmk}

\begin{dfn}
    A bipartite quantum state is LU irreducible if it is not LU equivalent to any tensor product of two entangled bipartite quantum states.
\end{dfn}

\begin{thm}
\label{nouniqfact}
    There exist four $\mathrm{LU}$ irreducible, pairwise $\mathrm{LU}$ inequivalent bipartite quantum states $\ket{a},\ket{b},$ $\ket{c},\ket{d}$ such that $\ket{a}\ket{b}$ and $\ket{c}\ket{d}$ are $\mathrm{LU}$ equivalent.
\end{thm}
\begin{proof}
Take bipartite quantum states with the following vectors of Schmidt coefficients:
\begin{equation*}
    p_a=(1,1,\sqrt{2},\sqrt{8})/\sqrt{12}, p_b= (1,1,\sqrt{2})/2,
\end{equation*}
\begin{equation}
     p_c=(1,1,1,1,2,\sqrt{8})/4,p_d=(1,\sqrt{2})/\sqrt{3}.
\end{equation}

We will use the following fact, which is straightforward to check.

\textit{Fact. } The Schmidt vector of a tensor product of two bipartite pure states is the tensor product of the Schmidt vectors of the two states.

    In particular, the Schmidt coefficients of $\ket{a}\ket{b}$ and $\ket{c}\ket{d}$ are given by $p_a\otimes p_b$ and $p_c\otimes p_d$, respectively. $p_a\otimes p_b$ and $p_c\otimes p_d$ are the same up to permutation, so $\ket{a}\ket{b}$ and $\ket{c}\ket{d}$ are LU equivalent. It is easy to see that none of the vectors $p_a,p_b,p_c,p_d$ can be written as a non-trivial tensor product, so the quantum states $\ket{a},\ket{b},\ket{c},\ket{d}$ that they represent are LU irreducible.
\end{proof}

\begin{rmk}
    This result implies that LU equivalence classes of pure, bipartite quantum states in general do not have a unique factorisation into equivalence classes of LU irreducible states. In particular, it is possible that an irreducible state can be extracted from a tensor product of states, but not from any of the factors.
\end{rmk}

\begin{rmk}
    Problem 1.12 in \cite{kiss2025identical} deals with essentially the same question.
\end{rmk}

\section{Flexible catalysis outperforms catalytic multicopy extractions}
\label{Flexible catalysis outperforms catalytic multicopy extractions}

Proposition \ref{flexvsmulticop} (combined with Propostion \ref{Union_of_cyclic_classes}) implies that we can achieve the same transformations using a finite set of catalysts as with catalytic multicopy transformations. This does not rule out the possibility that an infinite set of catalysts can be used to perform a transformation that does not have a catalytic multicopy simulation. Although $\mathrm{Cat}_{\mathrm{LU}}^{(\mathrm{fin})}\subsetneq \mathrm{Cat}_{\mathrm{LU}}^{(f)}$, this is not that interesting because $\mathrm{Cat}_{\mathrm{LU}}^{(f)}=\mathrm{CatExt}_{\mathrm{LU}}^{(f)}$ contains extractions, while $\mathrm{Cat}_{\mathrm{LU}}^{(\mathrm{fin})}$ does not. It would be more interesting to know whether infinite flexible catalysis can outperform catalytic multicopy extractions.

 In this section, we will artificially construct TT where flexible catalysis is provably stronger than catalytic multiple copy extractions. To make the argument as simple as possible, we will work with a TT in which each equivalence class of states is a singleton. Note that this coincides with the definition of a resource theory in \cite{FRITZ_2015}: resource theories are ordered commutative monoids.

\begin{thm}
\label{flexible_advantage_for_abstract_TT}
    There exists a transformation theory $T$ in which flexible catalysis is strictly stronger than catalytic multiple copy extractions, that is, $\mathrm{CatExt}^{(\mathrm{fin})}_T\subsetneq \mathrm{CatExt}^{(f)}_T$.
\end{thm}
\begin{proof}
    We will prove the claim by explicitly constructing such an ordered commutative monoid $M$. Start from the free commutative monoid generated by distinct objects $a,b,c_1,c_2\ldots$. Let $\leq$ be the smallest partial order satisfying 
    \begin{equation}
    \label{FC assumption}
    a+c_i\leq b+c_{i+1} \ \forall i\in \mathbb{Z}_{>0}.
    \end{equation}
    
    By smallest, we mean that the only relations are those that are guaranteed by the assumption (\ref{FC assumption}), the axioms of partial orders, and translation-invariance. We will show that such a partial order $\leq$ exists and $ma+c\leq mb+c+d$ does not hold for any $c,d\in M, m\in \mathbb{Z}_{>0}$. This will be sufficient, because then for the resulting transformation theory $T$ we will have $(a,b)\in \mathrm{CatExt}^{(f)}_T\backslash \mathrm{CatExt}^{(\mathrm{fin})}_T$.

By the definition of $\leq$, we have $t\leq t'$ for some monoid elements $t,t'\in M$ if and only if this can be obtained from the relations $s\leq s, s\in M$ (reflexivity) and (\ref{FC assumption}) by finitely many applications of additive translation and sequential composition (transitivity). We claim that whenever $ua+vb+x_1c_1+\ldots+x_nc_n\leq za+wb+y_1c_1+\ldots+y_nc_n$ for some $u,v,z,w,x_i,w_i \in \mathbb{Z}_{\geq 0}$ (note that this is the general form of an inequality between two monoid elements), the following two statements must hold:

(i) for all $k\in [n]=\{1,2,\ldots,n\}$, we have $(x_i=y_i \forall i \in [k-1]) \implies x_k\geq y_k$, 

(ii) $x_i=y_i \forall i \in [n] \implies u=z, v=w$.

Firstly, observe that both (i) and (ii) are clearly satisfied for inequalities of type $s\leq s$ and (\ref{FC assumption}). Secondly, it is easy to see that these properties together are preserved under translation and sequential composition of inequalities. It follows that (i) and (ii) must hold for all inequalities $ua+vb+x_1c_1+\ldots+x_nc_n\leq za+wb+y_1c_1+\ldots+y_nc_n$ in $M$.

Now $\leq$ defines a valid partial order because it is reflexive and transitive by definition, and it is also antisymmetric because $ua+vb+x_1c_1+\ldots+x_nc_n\leq za+wb+y_1c_1+\ldots+y_nc_n$ and $ua+vb+x_1c_1+\ldots+x_nc_n\geq za+wb+y_1c_1+\ldots+y_nc_n$ imply (using (i) and (ii)) that $x_i=y_i \ \forall i, u=z,v=w$.

Finally, assume that $ma+c\leq mb+c+d$ for some $c,d\in M, m\in \mathbb{Z}_{>0}$. Write this inequality in the form $ua+vb+x_1c_1+\ldots+x_nc_n\leq za+wb+y_1c_1+\ldots+y_nc_n$ by expressing both sides in terms of the generators of $M$. We see that $x_i>y_i$ can never happen, and thus $x_i=y_i \ \forall i \in [n]$ by (i). But then $v=w$ by (ii), so the coefficient of $b$ must be equal in $ma+c$ and in $mb+c+d$, contradicting the assumption $m>0$.

\end{proof}

\begin{rmk}
    Of course, this TT was constructed artificially for the purpose of this proof, and does not have any physical meaning. However, the result is interesting, because it shows that there can be no general mathematical reason why flexible catalysis couldn't have a strict advantage over catalytic multiple copy extractions. In particular, it suggests that physically interesting resource theories might exist where flexible catalysis has this kind of advantage.
\end{rmk}

\section{Open questions}
\label{Open questions}

In this section, we will discuss some questions that are left open by our results and we deem interesting. We will start by introducing one of these questions in detail and reformulating it as a precise mathematical question (Section \ref{Infinite flexible catalysis under LU}). Then we will mention some further open directions without going into the details (Section \ref{Further open questions}).

\subsection{Infinite flexible catalysis under LU}
\label{Infinite flexible catalysis under LU}

 According to Theorem \ref{flexible_advantage_for_abstract_TT}, infinite flexible catalysis can outperform catalytic multicopy extractions if we do not insist on a physically meaningful resource theory. It is also natural to ask if the same can happen in a resource theory such as LU, where the question is equivalent to asking whether we have a strict inclusion
$\mathrm{CatExt}_{\mathrm{LU}}^{(\mathrm{fin})}\subsetneq \mathrm{CatExt}_{\mathrm{LU}}^{(f)}$, since we can perform the same transformations using a finite set of catalysts as with catalytic multicopy transformations.

\vspace{0.2cm}

\textbf{Question:}  Is $\mathrm{CatExt}_{\mathrm{LU}}^{(\mathrm{fin})}\subsetneq \mathrm{CatExt}_{\mathrm{LU}}^{(f)}$, i.e. does infinite flexible catalysis enable strictly more extractions under LU than catalytic multicopy extractions?

\vspace{0.2cm}

We will now translate this to a precise mathematical question about polynomials. We start by giving a sufficient condition for the separation.

\vspace{0.2cm}

\begin{prop}
\label{polys_would_solve_open_problem}

    Assume that there exist $p,q\in \mathbb{Z}[x]$ such that
    
    (i) $p^n \notin \mathbb{Z}_{\geq 0}[x]$ for all $n\in \mathbb{Z}_{> 0}$;

    (ii) $q\cdot p^n \in \mathbb{Z}_{\geq 0}[x]$ for all $n\in \mathbb{Z}_{\geq 0}$.

    Then $\mathrm{CatExt}_{\mathrm{LU}}^{(\mathrm{fin})}\subsetneq \mathrm{CatExt}_{\mathrm{LU}}^{(f)}$.
\end{prop}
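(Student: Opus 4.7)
The plan is to construct bipartite quantum states $A = \iota(qp)$ and $B = \iota(q)$, viewed as $LU$ states via Proposition \ref{polys_to_multisets} and Corollary \ref{ttlueq}. Hypothesis (ii) will supply an infinite chain of catalysts witnessing $(A,B) \in Cat_{LU}^{(f)}$, while hypothesis (i), combined with the rigidity of factorisation in $\mathbb{Z}[x]$, will rule out any finite flexible extraction.

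For the positive direction, I would set $C_n = \iota(qp^n)$ for $n \geq 0$ (each well-defined by (ii)) and $S = \{C_n : n \geq 0\}$. Since $\iota$ is a monoid homomorphism,
\[ A + C_n = \iota(qp \cdot qp^n) = \iota(q^2 p^{n+1}) = \iota(q \cdot qp^{n+1}) = B + C_{n+1}, \]
so for every $C \in S$ there is some $C' \in S$ with $A + C = B + C'$. This places $(A,B)$ in $Cat_\mathbb{Z}^{(f)}(S) \subseteq Cat_{LU}^{(f)}$.

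For the negative direction, suppose for contradiction that $(A,B) \in CatExt_{LU}^{(fin)}$. Proposition \ref{flexvsmulticopyextrtt} yields some $n \geq 1$ with $(nA, nB) \in CatExt_{LU}$; absorbing the translation constant into the discarded multiset (as in Theorem \ref{CatExtLUfin_beats_CatExtLU}) produces an equation $nA + C = nB + D + C$ in $M_\mathbb{R}$. Torsion-free cancellation (Proposition \ref{cancellation}) gives $nA = nB + D$, and since $nA, nB$ are integer multisets every element of $D$ is a difference of integers, forcing $D \in M_\mathbb{Z}$. Comparing minimum elements yields $\min(D) = n(v(qp) - v(q)) = n \cdot v(p) \geq 0$, so $D$ has only non-negative integer elements and $D = \iota(P)$ for some $P \in \mathbb{Z}_{\geq 0}[x]$. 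Injectivity and multiplicativity of $\iota$ then force $(qp)^n = q^n P$, hence $P = p^n$ by unique factorisation in $\mathbb{Z}[x]$, contradicting hypothesis (i).

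The main obstacle I foresee is keeping the descent from $M_\mathbb{R}'$ (equality up to translation by $\propto$) down to honest equality in $M_\mathbb{Z}$ clean: one has to verify that the translation constant really can be absorbed into the discarded part, and that the minimum-element trick then pushes the residual multiset into the image of $\iota$. Both manoeuvres already appear in Example \ref{advantage} and in the proof of Theorem \ref{CatExtLUfin_beats_CatExtLU}, so the argument needs no new ideas, but careful bookkeeping of valuations and translations will be required.
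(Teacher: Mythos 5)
Your proof is correct and takes essentially the same route as the paper: set $A=\iota(pq)$, $B=\iota(q)$, use the chain $C_n=\iota(qp^n)$ for the positive direction, and argue via Proposition \ref{flexvsmulticopyextrtt}, cancellation, and injectivity of $\iota$ that $(nA,nB)\notin CatExt$ for any $n$. The paper's appendix proof is just a terse sketch of your negative direction (it compresses the ``absorb translation, cancel, descend to $\mathbb{Z}_{\geq 0}[x]$'' steps into the single line ``otherwise $(pq)^k=dq^k$ for some $d\in\mathbb{Z}_{\geq 0}[x]$''), so you are filling in exactly the bookkeeping the authors omit, consistent with how they carry it out explicitly in Example \ref{advantage} and Proposition \ref{arbnumreq}.
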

\begin{proof}
    Let $A=\iota(pq),B=\iota(q)$, where $\iota$ is the map in Definition \ref{polys_to_multisets}. Then for any $k$ we have $(kA,kB)\notin \mathrm{CatExt}_{\mathbb{R}}$, since otherwise we would have $(pq)^k=dq^k$ for some polynomial $d\in  \mathbb{Z}_{\geq 0}[x]$, contradicting $p^k\notin  \mathbb{Z}_{\geq 0}[x]$. But if we set $C_i=\iota(q\cdot p^i)$ for all $i$, then $A+C_i=B+C_{i+1}$ for all $i$, and hence $(A,B) \in \mathrm{Cat}_\mathbb{R}^{(f)}$. Thus, by Corollary \ref{ttlueq}, $\mathrm{CatExt}_{\mathrm{LU}}^{(\mathrm{fin})}\subsetneq \mathrm{Cat}_{\mathrm{LU}}^{(f)}$.
\end{proof}

\begin{dfn}
    (Essentially positive polynomials.)
   A polynomial $p\in \mathbb{Z}[x_1,\ldots,x_m]$ is essentially positive if there exists a polynomial $c\in \mathbb{Z}[x_1,\ldots,x_m]$ such that $cp^n$ has nonnegative coefficients for all $n\geq 0$.
\end{dfn}

\begin{dfn}
(Generalisation of Definition \ref{polys to multisets}.)
    Let $\iota:\mathbb{Z}_{\geq 0}[x_1,\ldots,x_m]\backslash\{0\}\rightarrow M_{\mathbb{Z}^m}$ be the map that sends any polynomial 
     \begin{equation}
        \sum_{i_1\ldots i_m} a_{i_1,i_2,\ldots, i_m} x_1^{i_1}x_2^{i_2}\ldots x_m^{i_m}
    \end{equation}
    with nonnegative integer coefficients to the multiset whose elements come from $\{0,1,\ldots\}^m$ and the multiplicity of any $(i_1,\ldots,i_m)\in \{0,1,\ldots\}^m$ is the coefficient $a_{i_1,i_2,\ldots, i_m}$ of the polynomial. Write this multiset as
    \begin{equation}
        \{(i_1,i_2,\ldots,i_m)^{(a_{i_1,i_2,\ldots, i_m})} \}.
    \end{equation}
\end{dfn}

\begin{prop}
\label{multivariate_polys_to_multisets}
    The map $\iota:\mathbb{Z}_{\geq 0}[x_1,\ldots,x_m]\backslash\{0\}\rightarrow M_{\mathbb{Z}^m}$
    \begin{equation}
        \sum_{i_1\ldots i_m} a_{i_1,i_2,\ldots, i_m} x_1^{i_1}x_2^{i_2}\ldots x_m^{i_m} \mapsto \{(i_1,i_2,\ldots,i_m)^{(a_{i_1,i_2,\ldots, i_m})} \}
    \end{equation}
    is an injective monoid homomorphism, and its image is the set of finite $\mathbb{Z}^m$-multisets with nonnegative entries in all elements.
\end{prop}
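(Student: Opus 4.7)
The plan is to follow the exact template of the proof of Proposition \ref{polys_to_multisets}, with multi-indices replacing single indices throughout. I would begin by fixing notation: for a multi-index $I=(i_1,\ldots,i_m)\in \mathbb{Z}_{\geq 0}^m$, write $x^I$ for $x_1^{i_1}\cdots x_m^{i_m}$, so that any $p \in \mathbb{Z}_{\geq 0}[x_1,\ldots,x_m]\setminus\{0\}$ can be written uniquely as $p=\sum_I a_I x^I$. In this notation, $\iota(p)$ is the multiset in which $I$ appears with multiplicity $a_I$.

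First I would dispatch the easy parts. Well-definedness is immediate: the sum $\sum_I a_I$ is finite and positive (because $p$ is nonzero and all coefficients are nonnegative integers), so $\iota(p)$ is a finite nonempty multiset of elements of $\mathbb{Z}_{\geq 0}^m \subset \mathbb{Z}^m$. Injectivity follows by inversion: given the multiset $\iota(p)$, we recover $a_I$ as the multiplicity of $I$. The characterisation of the image is equally clear: any finite nonempty multiset whose elements lie in $\mathbb{Z}_{\geq 0}^m$ can be converted into such a polynomial by letting the coefficient of $x^I$ be the multiplicity of $I$, and this polynomial has nonnegative integer coefficients.

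The main content, as in the univariate case, is to check that $\iota$ is a monoid homomorphism, i.e.\ $\iota(pq)=\iota(p)+\iota(q)$. Write $p=\sum_I a_I x^I$ and $q=\sum_J b_J x^J$. The coefficient of $x^K$ in $pq$ is
\begin{equation}
    \sum_{I+J=K} a_I b_J.
\end{equation}
By definition of $\iota$, this is the multiplicity of $K$ in $\iota(pq)$. On the other hand, by the definition of the sum of $G$-multisets applied to $G=\mathbb{Z}^m$, the multiplicity of $K$ in $\iota(p)+\iota(q)$ is exactly the number of ordered pairs $(I,J)$ (counted with multiplicities $a_I$ and $b_J$ in $\iota(p)$ and $\iota(q)$ respectively) with $I+J=K$, giving the same sum $\sum_{I+J=K} a_I b_J$. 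Hence the two multisets coincide.

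I do not expect any genuine obstacle here: the argument is a mechanical generalisation of the single-variable statement, with the only bookkeeping being to keep track of multi-indices. The one minor point worth stating explicitly is that the image consists of multisets all of whose elements lie in $\mathbb{Z}_{\geq 0}^m$ (not merely $\mathbb{Z}^m$), since nonnegative coefficients of $p$ only produce multi-indices with nonnegative entries; this matches the claim in the proposition.
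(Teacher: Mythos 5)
Your proposal is correct and follows exactly the route the paper intends: the paper's own proof just says ``analogous to the proof of Proposition \ref{polys_to_multisets},'' and your multi-index computation of the coefficient of $x^K$ is precisely that analogy carried out explicitly. Nothing to add.
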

\begin{proof}
    Analogous to the proof of Proposition \ref{polys_to_multisets}.
\end{proof}

\begin{prop}

     $\mathrm{CatExt}_{\mathrm{LU}}^{(\mathrm{fin})}\subsetneq \mathrm{CatExt}_{\mathrm{LU}}^{(f)}$ if and only if there exists $m\in \mathbb{N}$ and $p,q\in \mathbb{Z}[x_1,\ldots,x_m]$ such that
    
    (i) $p^n \notin \mathbb{Z}_{\geq 0}[x_1,\ldots,x_m]$ for all $n\in \mathbb{Z}_{> 0}$;

    (ii) $q\cdot p^n \in \mathbb{Z}_{\geq 0}[x_1,\ldots,x_m]$ for all $n\in \mathbb{Z}_{\geq 0}$.

\end{prop}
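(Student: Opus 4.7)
The plan is to establish both directions by translating multiset equations in $M_\mathbb{R}$ to (Laurent) polynomial identities via the monoid isomorphism $\iota$ of Proposition \ref{multivariate_polys_to_multisets}, extended in the natural way to Laurent polynomials with nonnegative coefficients.

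For the ``if'' direction, which is the multivariate analogue of Proposition \ref{polys_would_solve_open_problem}, I would set $A=\iota(pq)$, $B=\iota(q)$, $C_n=\iota(qp^n)$, view them as multisets in $M_\mathbb{R}$ via a $\mathbb{Z}$-linearly independent embedding $\mathbb{Z}^m\hookrightarrow\mathbb{R}$, and check that the polynomial identity $(pq)(qp^n)=q(qp^{n+1})$ gives $A+C_n=B+C_{n+1}$. This exhibits $(A,B)\in Cat_\mathbb{R}^{(f)}\cong Cat_{LU}^{(f)}$. To rule out $(A,B)\in CatExt_{LU}^{(fin)}$, I would suppose the contrary, apply Proposition \ref{flexvsmulticopyextrtt} to obtain $(kA,kB)\in CatExt_{LU}$ for some $k$, then invoke Proposition \ref{cancpv} in the torsion-free group $\mathbb{R}$ to cancel the catalyst and reduce to $kA\propto kB+D$ in $M_\mathbb{R}$. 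A coset argument modulo the image $H\cong\mathbb{Z}^m$ of the embedding lets me absorb the translation and obtain $kA=kB+D'$ with $D'$ supported in $H$, which converts to the Laurent polynomial identity $a^k=b^k\,\iota^{-1}(D')$. Hence $p^k=\iota^{-1}(D')\in\mathbb{Z}_{\geq 0}[x_1^{\pm 1},\ldots,x_m^{\pm 1}]$; but since $p$ is an ordinary polynomial, $p^k$ has only nonnegative exponents, so we conclude $p^k\in\mathbb{Z}_{\geq 0}[x_1,\ldots,x_m]$, contradicting (i).

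For the ``only if'' direction, I would take $(A,B)\in Cat_{LU}^{(f)}\setminus CatExt_{LU}^{(fin)}$ (using $Cat_{LU}^{(f)}=CatExt_{LU}^{(f)}$ from Proposition \ref{infeqtrcat}) and apply the chain construction inside that proof to obtain catalysts $C_0,C_1,\ldots$ with $A+C_n\propto B+C_{n+1}$. Iteratively absorbing translations into the $C_n$'s makes these relations exact in $M_\mathbb{R}$, and a straightforward induction shows that all $C_n$ lie in the subgroup $H\subseteq\mathbb{R}$ generated by the elements of $A\cup B\cup C_0$. This $H$ is finitely generated and torsion-free, so $H\cong\mathbb{Z}^m$ for some $m$, and the multiset equations become Laurent polynomial identities $a\,c_n=b\,c_{n+1}$ in $\mathbb{Z}_{\geq 0}[x_1^{\pm 1},\ldots,x_m^{\pm 1}]$. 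These force the rational function $p:=a/b$ to equal $c_{n+1}/c_n$ for every $n$. Writing $p=f/g$ in lowest terms inside the UFD $\mathbb{Z}[x_1,\ldots,x_m]$ gives $g^n\mid c_0$ for all $n$; since $\deg c_0$ is fixed, $g$ must be a unit in the Laurent polynomial ring, so $p$ itself is a Laurent polynomial over $\mathbb{Z}$. Taking $q=c_0$ then makes (ii) immediate, while (i) follows because $p^n$ having nonnegative coefficients would yield $nA=nB+\iota(p^n)$ in $M_{\mathbb{Z}^m}$ and hence $(A,B)\in CatExt_{LU}^{(n)}$ via Proposition \ref{flexvsmulticopyextrtt}, contradicting our choice of $(A,B)$. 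A final monomial shift $p\mapsto x^s p$, $q\mapsto x^t q$ converts the Laurent polynomials to ordinary ones without affecting either condition, since multiplication by a monomial only relabels exponents and leaves the multiset of coefficients untouched.

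The main obstacle is the omnipresent translation freedom of the preorder $\propto$, which forces constant back-and-forth between Laurent and ordinary polynomials and makes the coset-chasing in $\mathbb{R}/H$ in the ``if'' direction genuinely delicate. The most subtle isolated step is the UFD argument bounding the denominator of $p$: one has to exploit the tension between the fixed degree of $c_0$ and the required divisibility $g^n\mid c_0$ for all $n\geq 0$ to force $g$ to be a unit.
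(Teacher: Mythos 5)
Your proof is correct and follows essentially the same route as the paper's: translate the chain $A+C_n = B+C_{n+1}$ into polynomial identities $a\,c_n=b\,c_{n+1}$ over $\mathbb{Z}^m$, use unique factorisation to force $p=a/b$ to be a polynomial, and then read off conditions (i) and (ii) from $(A,B)\notin CatExt_{LU}^{(fin)}$ and $c_n=p^n c_0$. The only cosmetic difference is that the paper normalises the multisets to have nonnegative coordinates \emph{before} invoking the map $\iota$, so it never leaves $\mathbb{Z}_{\geq 0}[\vec{x}]$, whereas you work in the Laurent ring and apply a final monomial shift; your $g^n\mid c_0$ step is the same valuation argument the paper expresses as $v_q(c_0)>v_q(c_1)>\cdots$.
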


\begin{proof}
    Take $(A,B)\in  \mathrm{CatExt}_{\mathrm{LU}}^{(f)}\backslash \mathrm{CatExt}_{\mathrm{LU}}^{(\mathrm{fin})}$, and identify them with the multisets of real numbers representing them. Let $C_0,C_1,\ldots$ be a catalytic chain for $(A,B)$, i.e. $A+C_i=B+C_{i+1} \forall i\geq 0$. (Such a chain exists since $\mathrm{CatExt}_{\mathrm{LU}}^{(f)}=\mathrm{Cat}_{\mathrm{LU}}^{(f)}$ by Proposition \ref{infeqtrcat}.) Then, by finiteness of $A,B,C_0$, there is a natural number $m$ and a finitely generated group $G$ such that $\mathbb{R}\geq G\cong \mathbb{Z}^m$, and $A,B,C_0$ are $G$-multisets. Without loss of generality, we may assume that the smallest $i$\textsuperscript{th} component of a vector occurring in $A$ is $0$ (for all $1\leq i \leq m$), and similarly for $B $ and $C_0$. Then, by induction, all multisets $C_j$ in the catalytic chain are also $G$-multisets, and the smallest $i$\textsuperscript{th} component of a vector occurring in $C_j$ is $0$. Thus, $A,B,C_j$ correspond to polynomials $a,b,c_j \in \mathbb{Z}_{\geq 0}[x_1,\ldots,x_m]$ under the map defined in Proposition \ref{multivariate_polys_to_multisets}. The condition $A+C_i=B+C_{i+1} \forall i\geq 0$ translates to $ac_i=bc_{i+1} \forall i\geq 0$. Then we must have $b|a$, say $a=bp$ for some $p\in \mathbb{Z}[x_1,\ldots,x_m]$, otherwise there would exist an irreducible polynomial $q\in \mathbb{Z}[x_1,\ldots,x_m]$ that occurs at a higher power in the factorisation of $b$ than in the factorisation of $a$, which we can write as $v_q(a)<v_q(b)$. But then we would have $v_q(c_0)>v_q(c_1)>v_q(c_2)>\ldots$, leading to a contradiction.

    Now, by assumption, $(A,B)\notin \mathrm{CatExt}_{\mathrm{LU}}^{(\mathrm{fin})}$, so for all $n \in \mathbb{Z}_{>0}$ we have $p^n\notin  \mathbb{Z}_{\geq 0}[x_1,\ldots,x_m]$, otherwise $a^n=(bp)^n=b^np^n$ would describe a valid multicopy extraction. So $p$ is infinitely negative.

    However, we have $ac_i=bpc_i=bc_{i+1}$, i.e. $pc_i=c_{i+1}$ for all $i$. Setting $q=c_0$, we get $c_i=qp^n \in  \mathbb{Z}_{\geq 0}[x_1,\ldots,x_m]$. So $p$ is essentially positive.

    Conversely, given $m,p,q$ satisfying the conditions, take a group $G$ such that $\mathbb{R}\geq G\cong \mathbb{Z}^m$, and let $A,B$ be the $G$-multisets represented by the polynomials $qp$ and $q$, respectively. The same argument as in Proposition \ref{polys_would_solve_open_problem} shows that  $(A,B)\in  \mathrm{CatExt}_{\mathrm{LU}}^{(f)}\backslash \mathrm{CatExt}_{\mathrm{LU}}^{(\mathrm{fin})}$.
\end{proof}

So we can rephrase the question whether $\mathrm{CatExt}_{\mathrm{LU}}^{(\mathrm{fin})}\subsetneq \mathrm{CatExt}_{\mathrm{LU}}^{(f)}$ as follows:

\label{pqexist}
Are there polynomials $p,q\in \mathbb{Z}[\vec{x}]$ such that
    
    (i) $p^n \notin \mathbb{Z}_{\geq 0}[\vec{x}]$ for all $n\in \mathbb{Z}_{> 0}$;

    (ii) $q\cdot p^n \in \mathbb{Z}_{\geq 0}[\vec{x}]$ for all $n\in \mathbb{Z}_{\geq 0}$?

 In other words, is there a polynomial $p \in \mathbb{Z}[\vec{x}]$ that is infinitely negative, but essentially positive?
 
 We suspect that the answer is yes. For example, the polynomials $p(x,y)=1+x^3+x^2y+xy^2+y^3+x^3y-x^2y^2+xy^3+x^3y^2+x^2y^3$ with $q(x,y)=x+y$ and $p(x)=1+x+x^2-x^3+x^4-x^5+x^6+x^7+x^8$ with $q(x)=1+x+x^2+x^3$ are promising candidates.

 \subsection{Further open questions}
 \label{Further open questions}

\textbf{1. Is $\mathrm{Cat}_{\mathrm{LOCC}}\subsetneq \mathrm{Cat}_{\mathrm{LOCC}}^{(f)}$?} We have seen that flexibility with a finite number of catalysts does not increase the set of possible catalytic transformations under LOCC. So it is natural to ask whether an infinite set of catalysts can give us an advantage.

 \textbf{2. What happens if we allow mixed states?} In all three transformation theories we investigated in this paper, we only worked with pure states. It is natural to ask how our results change if the underlying sets of these transformation theories are replaced by the set of all density matrices.

\vspace{0.2cm}

\textbf{3. What happens for other classes of free quantum operations?} We have only looked at three specific classes of quantum operations. These were sufficient to show that the relative power of different catalytic transformation types depends on the class of free operations, i.e. on the resource theory. It would also be interesting to look at other classes of quantum operations, particularly those relevant for near-term quantum computation, such as the Clifford gates.

\vspace{0.2cm}

\textbf{4. Is it possible to design catalytic embeddings using flexible catalysis?} It was shown in \cite{amy2023catalytic} that a wide range of unitary gates can be efficiently simulated catalytically. Can we get any advantage by allowing flexibility in this context?

\section{Declarations}

\textbf{Acknowledgements.} The authors would like to thank Matthew Johnson, Alexandra Kowalska, and Levente Bodnar for helpful discussions. 

\textbf{Conflict of interest.} The authors declare no conflict of interest.

\textbf{Ethical statement.} This article does not contain any studies with human participants or animals performed by any of the authors.

\textbf{Informed consent.} Informed consent was not required for this study as it does not involve human participants.

\textbf{Data availability.} No datasets were generated or analysed during the current study.

\textbf{Funding.} SS acknowledges support from the Royal Society University Research Fellowship. This work was supported by the Engineering and Physical Sciences Research Council on Robust and Reliable Quantum Computing (RoaRQ) Investigation 005 [grant reference EP/W032625/1].

\bibliographystyle{ieeetr}
\bibliography{main}

 \end{document}